\tikzstyle{my loopup}=[->, to path={
\tikzstyle{my loopdown}=[->, to path={
\tikzstyle{my arcdown}=[->, to path={
\tikzstyle{my arcup}=[->, to path={
\newcommand{\Isa}[1]{\textcolor{blue}{#1}}
\newcommand{\Marc}[1]{\textcolor{red}{#1}}
\newcommand\Omit[1]{}
\begin{document}

%%%%%%%%%%%%%%%%
% Environnement %
%%%%%%%%%%%%%%%%% 

%setcounter{secnumdepth}{7}

\newtheorem{definition}{Definition}[section]
\newtheorem{notation}[definition]{Notation}
\newtheorem{lemma}[definition]{Lemma}
\newtheorem{proposition}[definition]{Proposition}
\newtheorem{theorem}[definition]{Theorem}
\newtheorem{corollary}[definition]{Corollary}
\newtheorem{fact}[definition]{Fact}
\newtheorem{example}[definition]{Example}
\newtheorem{constraint}[definition]{Constraint}
\newtheorem{convention}[definition]{Convention}
\newtheorem{remark}[definition]{Remark}

\newenvironment{proof}{\noindent {\bf Proof}}{\hfill$\Box$
\medskip
}

\newtheorem{axiom}{Axiom}

\newcommand{\Introduction}{\section*{Introduction}
                           \addcontentsline{toc}{chapter}{Introduction}}

\def\sl#1{\underline{#1}}
\def\cl#1{{\mathcal{#1}}}

\newcommand{\df}[1]{#1\!\!\searrow}
\newcommand{\udf}[1]{#1\!\!\nearrow}

\newcommand{\rmq}{\hspace{-5mm}{\bf Remark:} ~}
\newcommand{\lpar}{\par\noindent}

%###############################################################
%##### ENVT POUR CONSTRUIRE DES ARBRES DE PREUVE ###############
%###############################################################

\newenvironment{infrule}{\begin{array}{c}}{\end{array}}

\def\rulename#1{({\sc #1})}

\def\et{\hskip 1.5em \relax}

\def\nm{\vspace{-1mm} \\  \hspace{-0.6cm}}
\def\nom{\vspace{-1mm} \\  \hspace{-1.1cm}}
\def\nomq{\vspace{-1mm} \\  \hspace{-1.3cm}}
\def\nomter{\vspace{-1mm} \\  \hspace{-1.4cm}}
\def\sansnom{\vspace{-2mm} \\ \hspace{-0.2cm}}
\def\nombis{\vspace{-1mm} \\  \hspace{-1.8cm}}
\def\decal{\vspace{-2mm} \\ \hspace{0.4cm}}

\def\imp{ ~ \hrulefill \\}

\def\ligne{\\[5mm]}

\def\Ligne{\\[7mm]}

%###########################################

\newenvironment{ex.}{
        \medskip
        \noindent {\bf Example}}
        {\hfill$\Box$ \medskip}
\newenvironment{Enonce}[1]{
        \begin{description}
        \item[{\bf #1~:}]
        \mbox{}
}{
        \end{description}
}

%%%%%%%%%%%%%%%%%%
% Macro generale %
%%%%%%%%%%%%%%%%%%
%\def\Doubleunion#1#2#3{\displaystyle{\bigcup_{\scriptstyle #1 \atop
%\scriptstyle #2}}#3}
\def\Doubleunion#1#2#3{\displaystyle{\bigcup_{#1}^{#2} {#3}}}
\def\Union#1#2{\displaystyle{\bigcup_{#1} #2}}
\def\Intersect#1#2{\displaystyle{\bigcap_{#1} #2}}
\def\Coprod#1#2{\displaystyle{\coprod_{#1} #2}}
\def\Produit#1#2#3{\displaystyle{\prod_{#1}^{#2} {#3}}}
\def\Conj#1#2{\displaystyle{\bigwedge_{#1} \!\!\!\!\!#2}}
\def\Disj#1#2{\displaystyle{\bigvee_{#1} \!\!\!\!\!#2}}

\newcommand{\Nat}{I\!\!N}
\newcommand{\Int}{Z\!\!\! Z}

%%r\`egle d'inf\'erence
\def\infer#1#2{\mbox{\large  ${#1} \frac {#2}$ \normalsize}}

\def\sem#1{[\![ #1 ] \!]}

%%%%%%%%%%%%%%%%%%%%%%%%%%%%%%
% Mot vide sur tout alphabet %
%%%%%%%%%%%%%%%%%%%%%%%%%%%%%%

\newcommand{\mv}{\varepsilon}

%%%%%%%%%%%%%%%%%%%%
% Relation binaire %
%%%%%%%%%%%%%%%%%%%%
\def\M#1#2{M_{\scriptstyle #1 \atop \scriptstyle #2}}
\def\H#1#2{H_{\scriptstyle #1 \atop \scriptstyle #2}}
\newcommand{\Ro}{{\cal{R}}}
\newcommand{\Mo}{{\cal{M}}}
\newcommand{\Ho}{{\cal{H}}}
\newcommand{\Eo}{{\cal{E}}}
\newcommand{\ssucc}{\succ \!\! \succ}

%%%%%%%%%%%%
% Identite %
%%%%%%%%%%%%

\def\id#1{\underline{#1}}

%============================= Title and Abstract

\title{Dual Logic Concepts based on Mathematical Morphology in Stratified Institutions: Applications to Spatial Reasoning}

\author{Marc Aiguier$^1$ and Isabelle Bloch$^2$ \\
1. MICS, CentraleSupelec, Universit\'e Paris Saclay, France \\ {\it marc.aiguier@centralesupelec.fr} \\
2. LTCI, T\'el\'ecom ParisTech, Universit\'e Paris Saclay, Paris, France \\ {\it isabelle.bloch@telecom-paristech.fr}}
\date{}

\maketitle

\begin{abstract}
Several logical operators are defined as dual pairs, in different types of logics. Such dual pairs of operators also occur in other algebraic theories, such as mathematical morphology. Based on this observation, this paper proposes to define, at the abstract level of institutions, a pair of abstract dual and logical operators as morphological erosion and dilation. Standard quantifiers and modalities are then derived from these two abstract logical operators. These operators are studied both on sets of states and sets of models. To cope with the lack of explicit set of states in institutions, the proposed abstract logical dual operators are defined in an extension of institutions, the stratified institutions, which take into account the notion of open sentences, the satisfaction of which is parametrized by sets of states. 
%Dealing explicitly with sets of states, an internal fix-point logic is developed in the stratified institutions framework by introducing two abstract fix-point operators defined again by morphological erosion and dilation. 
A hint on the potential interest of the proposed framework for spatial reasoning is also provided.
\end{abstract}

\noindent
{\small {\bf Keywords:} Stratified institutions, mathematical morphology, dual operators, dilation, erosion, states, spatial reasoning.}

%================================ Start Text

%%

\section{Introduction}

There exists a profusion of logics but all of them satisfy the same structure defined by a syntax, a semantics and a calculus. Syntax gives both the language (signatures) and the formal rules that define well-formed formulas and theories. Semantics, so-called model theory,  gives the mathematical meaning of all these syntactic notions, among others the rules that  associate truth values to formulas. Finally, calculus, so-called proof theory, gives the inference rules that govern the reasoning and thus translate semantics into syntax as correctly as possible. To cope with the explosion of logics, a categorical abstract model-theory, the theory of institutions~\cite{Dia08,GB92}, has been proposed, that generalizes Barwise's ``Translation Axiom''~\cite{Bar74}. Institutions then define both syntax and semantics of logics at an abstract level,
%with the ambition to be as much as possible at the level of abstraction 
independently of commitment to any particular logic. Later, institutions have been extended to propose a syntactic approach to truth~\cite{Dia06,Dia08,FS88,Mes89}. For the sake of generalization, in institutions signatures are simply defined as objects of a category and formulas built over signatures are simply required to form a set. All other contingencies such as inductive definition of formulas are not considered. However, the reasoning (both syntactic and semantic) is defined by induction on the structure of formulas. Indeed, usually, formulas are built from ``atomic'' formulas by applying iteratively operators such as connectives, quantifiers or modalities. What we can then observe is that most of these logical operators come through dual pairs (conjunction and disjunction $\wedge$ and $\vee$, quantifiers $\forall$ and $\exists$, modalities $\Box$ and $\Diamond$). 

When looking at the algebraic properties of mathematical morphology~\cite{BHR07,Ser82} on the one hand, and of all these dual operators on the other hand, several similarities can be shown, and suggest that links between institutions and mathematical morphology are worth to be investigated. This has already been done in the restricted framework of modal propositional logic~\cite{Bloch02}. In~\cite{Bloch02}, it was then shown that modalities $\Box$ and $\Diamond$ can be defined as morphological erosion and dilation. The interest is, based on properties of morphological operators, that this leads to a set of axioms and inference rules which are de facto sound. In this paper, we propose to extend this work by defining, at the abstract level of institutions, a pair of abstract operators as morphological erosion and dilation. We will then show how to obtain standard quantifiers and modalities from these two abstract operators. 

In mathematical morphology, erosion and dilation are operations that work on lattices, for instance on sets. Thus, they can be applied to formulas by identifying formulas with sets. We have two ways of doing this, either given a model $M$ identifying a formula $\varphi$ by the set of states $\eta$ that satisfy $\varphi$ and classically noted $M \models_\eta \varphi$, or identifying $\varphi$ by the set of models that satisfy it. As usual in logic, our abstract dual operators based on morphological erosion and dilation will be studied both on sets of states and sets of models. The problem is that institutions do not explicit, given a model $M$, its set of states. This is why we will define our abstract logical dual operators based on erosion and dilation in an extension of institutions, the stratified institutions~\cite{AD07}. Stratified institutions have been defined in~\cite{AD07} as an extension of institutions to take into account the notion of open sentences, the satisfaction of which is parametrized by sets of states. For instance, in first-order logic, the satisfaction is parametrized by the valuation of unbound variables, while in modal logics it is further parametrized by possible worlds. Hence, stratified institutions allow for a uniform treatment of such parametrizations of the satisfaction relation within the abstract setting of logics as institutions.  

%Dealing explicitly with sets of states, we allow further developing an internal fix-point logic in the stratified institutions framework by introducing two abstract fix-point operators $\mu x$ and $\nu x$ that denote respectively the least and the greatest fix-point operators. Here again, we will show that both fix-point operators can be defined by morphological erosion and dilation. 

Another interest of the approach proposed in this paper is that mathematical morphology provides tools for spatial reasoning. Until now, mathematical morphology has been used mainly for quantitative representations of spatial relations, or semi-qualitative ones, in a fuzzy set framework (see e.g.~\cite{IB:IVC-05}). For qualitative spatial reasoning, several symbolic and logical approaches have been developed~(see e.g. \cite{Handbook07,AB02,Lig11}), but mathematical morphology has not been much used in this context to our knowledge. In this paper, inspired by the work that was done in~\cite{Bloch02,IB:IJAR-06,BHR07,BL02} in the propositional and modal logic framework, we show how logical connectives based on morphological operators can be used for symbolic representations of spatial relations. Indeed, spatial relations are a main component of spatial reasoning~\cite{Handbook07}, and several frameworks have been proposed to model spatial relations and reason about them in logical frameworks (see e.g.~\cite{BD07,CF97,CBGG97,RCC92,BB07} for topological relations, \cite{Lig11,MM15} for directional relations, \cite{RN07} for constraint based techniques for topology, distances and directions, and~\cite{IB:IVC-05,IB:IJAR-06} for semi-qualitative representations in the framework of fuzzy sets).  Since it is usual to introduce uncertainty in qualitative spatial reasoning, we propose to extend our abstract logical connectives based on erosion and dilation to the fuzzy case. This first requires to develop fuzzy reasoning in stratified institutions. Fuzzy (or many-valued) reasoning has an institutional semantics~\cite{Dia13,Dia14}. The approach proposed here is substantially similar to that proposed in~\cite{Dia13}, although developed in stratified institutions.  

\medskip
The paper is organized as follows. Section~\ref{stratified institutions} reviews some concepts, notations and terminology about institutions and stratified institutions which are used in this work. In Section~\ref{extensions} we propose to define abstractly the important concept of Boolean connectives, quantifiers, and fuzzy reasoning in stratified institutions. 
%\Marc{In Section~\ref{abstract modalities}, we continue this process of abstraction by giving semantics for modalities independently \Isa{of} any stratified institution, nevertheless satisfying some constraints on the structure of models.} 
Section~\ref{dual connectives} introduces a new way to build dual operators from the notion of morphological erosion and dilation operators. We study two ways to build such dual operators. We first define them from morphological dilation and erosion of formulas based on a structuring element, and then as algebraic erosion and dilation over the lattice of formulas. This last point allows us to define modalities when they are interpreted topologically as algebraic erosion and dilation. Finally, in Section~\ref{qualitative spatial reasoning}, we show how these modalities can be interpreted for abstract spatial reasoning using qualitative representations of spatial relationships derived from mathematical morphology.    

\section{Stratified institutions}
\label{stratified institutions}

The notions introduced here make use of basic notions of category theory (category, functors, natural transformations, etc.). We do not present these notions in these preliminaries, but interested readers may refer to textbooks such as~\cite{BW90,McL71}. 

\subsection{Institutions}

Let us start by recalling the definition of institutions, over which stratified institutions are defined as an extension, by introducing the notion of states for models.

\begin{definition}[Institution]\label{def-instit}
An {\bf institution} ${\cal I} = (Sig,Sen,Mod,\models)$ consists of
\begin{itemize}
\item a category $Sig$, objects of which are called {\em signatures} and are denoted $\Sigma$,
\item a functor $Sen : Sig \rightarrow Set$ giving for each signature $\Sigma$ a set $Sen(\Sigma)$, elements of 
which are called {\em sentences}, 
\item a contravariant functor $Mod : Sig^{op} \rightarrow Cat$ giving~\footnote{Standardly in category theory, $Sig^{op}$ is the opposite of $Sig$ by reversing morphisms.} for each 
signature a category, objects and arrows of which are called {\em
$\Sigma$-models} and {\em $\Sigma$-morphisms} respectively, and 
\item a $Sig$-indexed family of relations $\models_\Sigma \subseteq Mod(\Sigma) \times Sen(\Sigma)$ called 
{\em satisfaction relation}, such that the following property, called the {\em satisfaction condition}, holds:\\
$\forall \sigma : \Sigma \rightarrow \Sigma',~\forall M' \in
Mod(\Sigma'),~\forall \varphi \in Sen(\Sigma)$,
$$M' \models_{\Sigma'} Sen(\sigma)(\varphi) \Leftrightarrow
Mod(\sigma)(M') \models_\Sigma \varphi$$
\end{itemize}
\end{definition}

\begin{notation}
The functor $Mod$ can be extended to formulas. Hence, given a signature $\Sigma$ and two formulas $\varphi,\psi \in Sen(\Sigma)$, we note:
\begin{itemize}
\item $Mod(\varphi) = \{M \in Mod(\Sigma) \mid M \models_\Sigma \varphi\}$, 
\item $\varphi \models \psi \Longleftrightarrow Mod(\varphi) \subseteq Mod(\psi)$, and
\item $\varphi \equiv \psi \Longleftrightarrow Mod(\varphi) = Mod(\psi)$.
\end{itemize}
\end{notation}

\begin{example}
\label{examples of institutions}
The following examples of institutions are of particular importance both in computer science and in this paper. Many other examples can be found in the literature (\emph{e.g.}~\cite{Dia08,GB92,Tar99}). 

\begin{description}
\item[Propositional Logic (PL)] The category of signatures is $Set$, the category of sets and
functions. \\
Given a signature $P$, the set of $P$-sentences is the least set
of sentences finitely built over propositional variables in $P$ and Boolean connectives in $\{\neg,\vee,\wedge,\Rightarrow\}$. Given a signature morphism $\sigma:P\to P'$, $Sen(\sigma)$ translates
$P$-formulas to $P'$-formulas by renaming propositional
variables according to $\sigma$.\\
Given a signature $P$, the category of $P$-models is 
%the category of mappings $\nu:P\to\{0,1\}$ 
$(\{0,1\}^P, \leq)$ such that $0$ and $1$ are the usual truth values, and $\leq$ is a partial ordering such that $\nu \leq \nu'$ iff $\forall p \in P, \nu(p) \leq \nu'(p)$.
% with identities as morphisms. 
Given a signature morphism $\sigma : P \rightarrow P'$, the forgetful functor $Mod(\sigma)$ maps a $P'$-model $\nu'$ to the $P$-model $\nu=\nu'\circ\sigma$. \\
Finally, satisfaction is the usual propositional satisfaction.

\item[Many-sorted First Order Logic (FOL)] Signatures are triplets $(S,F,P)$ where $S$ is a set
of sorts, and $F$ and $P$ are sets of function and predicate names respectively, both with arities in $S^\ast\times S$ and $S^+$ respectively.\footnote{$S^+$ is the set of all non-empty sequences of elements in $S$ and $S^\ast=S^+\cup\{\epsilon\}$ where $\epsilon$ denotes the empty sequence.} Signature morphisms $\sigma: (S,F,P) \rightarrow (S',F',P')$ consist of
three functions  between sets of sorts, sets of functions and sets of predicates respectively, the last two preserving arities.\\
Given a signature $\Sigma=(S,F,P)$, the $\Sigma$-atoms are $p(t_1,\ldots,t_n)$ where $p:s_1 \times \ldots \times s_n \in P$ and $t_i \in T_F(X)_{s_i}$ ($1 \leq i \leq n$, $s_i\in S$)~\footnote{$T_F(X)_s$ is the term algebra of sort $s$ built over $F$ with sorted variables in a given set $X$.}. The set of $\Sigma$-sentences is the least set of formulas built over the set of $\Sigma$-atoms by finitely applying Boolean connectives in $\{\neg,\vee,\wedge,\Rightarrow\}$ and the quantifiers $\forall$ and $\exists$.
% \Marc{J'ai retir\'e les \'equations dans la d\'efinition des formules atomiques qui demandaient de faire un quotient dans la d\'efinition du mod\`ele pour la proposition 4.8. J'ai aussi ajout\'e l'ensemble des connecteurs propositionnels ainsi que le quantificateur $\exists$.}. 
Given a signature morphism $\sigma : \Sigma \to \Sigma'$, $Sen(\sigma)$ is the mapping defined by renaming functions and predicates according to $\sigma$. \\
Given a signature $\Sigma = (S,F,P)$, a $\Sigma$-model
${\cal M}$ is a family ${\cal M} = (M_s)_{s \in S}$ of sets (one for every $s \in S$), each one equipped with a function $f^{\cal M} : M_{s_1} \times
\ldots \times M_{s_n} \rightarrow M_s$ for every $f:s_1 \times \ldots
\times s_n \rightarrow s  \in F$ and with a n-ary relation $p^{\cal M}
\subseteq M_{s_1} \times \ldots \times M_{s_n}$ for every $p:s_1 \times
\ldots \times s_n \in P$. A model morphism $\mu : \mathcal{M} \to \mathcal{M}'$ is a mapping $\mu : M \to M'$ that preserves sorts (i.e. $\mu(M_s) \subseteq M'_s$ for each $s \in S$) such that for every $f : s_1 \times \ldots \times s_n \to s \in F$ and every $(a_1,\ldots,a_n) \in M_{s_1} \times \ldots \times M_{s_n}$, $\mu(f^\mathcal{M}(a_1,\ldots,a_n)) = f^{\mathcal{M}'}(\mu(a_1),\ldots,\mu(a_n))$, and for every $p : s_1 \times \ldots \times s_n \in P$ and every $(a_1,\ldots,a_n) \in M_{s_1} \times \ldots \times M_{s_n}$, $(a_1,\ldots,a_n) \in p^\mathcal{M} \Longrightarrow (\mu(a_1),\ldots,\mu(a_n)) \in p ^{\mathcal{M}'}$. \\ Given a signature morphism $\sigma : \Sigma = (S,F,P) \rightarrow \Sigma' = (S',F',P')$ and a $\Sigma'$-model ${\cal M}'$,
$Mod(\sigma)({\cal M}')$ is the $\Sigma$-model ${\cal M}$ defined for
every $s \in S$ by $M_s = M'_s$, and for every function name $f \in F$
and every predicate name $p \in P$, by $f^{\cal M} = \sigma(f)^{{\cal 
M}'}$ and $p^{\cal M} = \sigma(p)^{{\cal M}'}$. \\
Finally, satisfaction is the usual first-order 
satisfaction. 

\item[Modal Propositional Logic (MPL)] The category of signatures is the same as {\bf PL}. 
For each set $P$, the $P$-sentences are formed from the elements of
$P$ by closing under Boolean connectives and unary modal connectives $\Box$ (necessity) and
$\Diamond$ (possibility).
A model $(I,W,R)$ for a signature $P$, called
\emph{Kripke model}, consists of 
\begin{itemize}
\item an index set $I$, 
\item a family $W = \{ W^i \}_{i\in I}$ of ``possible worlds", which are
functions from $P$ to $\{ 0,1 \}$ (or equivalently subsets of $P$), 
\item an ``accessibility" relation $R \subseteq I \times I$. 
\end{itemize}  
A model homomorphism $h : (I,W,R) \to (I',W',R')$ consists of
a function $h : I \to I'$ which preserves the accessibility
relation, i.e. $(i,j)\in R$ implies $(h(i),h(j)) \in R'$,
and such that $W^i \subseteq {W'}^{h(i)}$ for each $i\in I$. Given a signature morphism $\sigma: \ P \to P'$ and a $P'$-model $(I',W',R')$, $Mod(\sigma)((I',W',R'))$ is the $P$-model $(I,W,R)$ such that $I = I'$, $R = R'$ and $W^i = \{\nu' \circ \sigma \mid \nu' \in W'^{h(i)}\}$ for each $i \in I$. 
\\
The satisfaction of $P$-sentences by the Kripke $P$-models,
$(I,W,R) \models_P \varphi$, is defined by $(I,W,R)\models^i_P \varphi$ for each
$i\in I$, where $\models^i_P$ is defined by induction on the structure
of the sentences as follows: 
\begin{itemize}
\item $(I,W,R) \models^i_P p$ iff $p \in W^i$ for each $p \in P$, 
\item $(I,W,R) \models^i_P \varphi_1 \wedge \varphi_2$ iff
$(I,W,R) \models^i_P \varphi_1$ and $(I,W,R) \models^i_P \varphi_2$; and similarly
for the other Boolean connectives,
\item $(I,W,R) \models^i_P \Box \varphi$ iff $(I,W,R) \models^j_P \varphi$ for each $j$
such that $(i,j)\in R$, and 
\item $\Diamond \varphi$ is the same as $\neg \Box \neg \varphi$. 
\end{itemize}

\item[Topological MPL (TMPL)] In {\bf MPL}, the modalities $\Box$ and $\Diamond$ are interpreted relationally (i.e. in Kripke models). Here, they will be interpreted topologically. Hence, the category of signatures and the functor $Sen$ are the same as {\bf MPL}. Conversely, given a signature $P$, a $P$-model $M$ is a topological space $(X,\tau)$ equipped with a valuation function $\nu : P \to \mathcal{P}(X)$.~\footnote{We follow the definition given in~\cite{BB07}.} Such models are called {\em topos-models}. A model morphism $h : (X,\tau,\nu) \to (X',\tau',\nu')$ is a continuous mapping such that for every $p \in P$, $ h(\nu(p)) \subseteq \nu'(p)$. Given a signature morphism $\sigma : P \to P'$ and a $P'$-model $(X',\tau',\nu')$, $Mod(\sigma)((X',\tau',\nu'))$ is the $P$-model $(X,\tau,\nu)$ such that $X = X'$, $\tau = \tau'$ and $\nu = \nu' \circ \sigma$. \\ The satisfaction of sentences by the topological models, $(X,\tau,\nu) \models_P \varphi$, is defined by $(X,\tau,\nu)\models^x_P \varphi$ for each $x \in X$, where $\models^x_P$ is defined by induction on the structure of the sentences as follows: 
\begin{itemize}
\item $(X,\tau,\nu) \models^x_P p$ iff $x \in \nu(p)$ for each $p \in P$, 
\item $(X,\tau,\nu) \models^x_P \varphi_1 \wedge \varphi_2$ iff
$(X,\tau,\nu) \models^x_P \varphi_1$ and $(X,\tau,\nu) \models^x_P \varphi_2$, and similarly
for the other Boolean connectives,
\item $(X,\tau,\nu) \models^x_P \Box \varphi$ iff there exists $O \in \tau$ s.t. $x \in O$ and $(X,\tau,\nu) \models^y_P \varphi$ for each $y \in O$, and 
\item $\Diamond \varphi$ is the same as $\neg \Box \neg \varphi$.
\end{itemize}
Hence, $\Box$ and $\Diamond$ are interpreted as both topological notions of interior and closure, respectively.

\item[Metric MPL (MMPL)] Here, modalities will be interpreted in a metric space. The institution {\bf MMPL} has the same signatures and sentences as {\bf MPL} and {\bf TMPL}. Conversely, given a signature $P$, a $P$-model is a metric space $(X,d)$ equipped with a valuation function $\nu : P \to \mathcal{P}(X)$. Such models are called {\em metric models}. A model morphism $h : (X,d,\nu) \to (X',d',\nu')$ is a continuous mapping such that for every $p \in P$, $h(\nu(p)) \subseteq \nu'(p)$. Given a signature morphism $\sigma : P \to P'$ and a $P'$-model $(X',d',\nu')$, $Mod(\sigma)((X',d',\nu'))$ is the $P$-model $(X,d,\nu)$ such that $X = X'$, $d = d'$ and $\nu = \nu' \circ \sigma$. \\ The satisfaction of sentences by metric models $(X,d,\nu) \models_P \varphi$ is defined by $(X,d,\nu) \models^x_P \varphi$ for each $x \in X$, where $\models^x_P$ is defined by induction on the structure of the sentences as follows: 
\begin{itemize}
\item basic sentences and Boolean connectives are satisfied standardly;
\item $(X,d,\nu) \models^x_P \Box \varphi$ iff $\exists \varepsilon > 0$, $\forall y \in X$, $d(x,y) < \varepsilon \Rightarrow (X,d,\nu) \models^y_P \varphi$;
\item $\Diamond \varphi$ is the same as $\neg \Box \neg \varphi$.
\end{itemize}
\end{description}
\end{example}

\subsection{Stratified institutions}

Stratified institutions refine institutions by introducing the notion of states for models. Hence, each model $M$ is equipped with a set $\sem{M}$, elements of which are called states, such as possible worlds for Kripke models. \\ The definition of stratified institutions given in Definition~\ref{def:stratified} slightly improves the original one in~\cite{AD07} by considering a concrete category to equip models with states rather than the category of sets. This is motivated by the different applications developed in this paper such as the extensions of stratified institutions to modalities or to qualitative spatial reasoning, which require to consider in the first case sets equipped with binary relations, and in the second one topological or metric spaces.

\begin{definition}[Stratified institution]
\label{def:stratified}
A {\bf stratified institution} consists of:
\begin{itemize}
\item a category $Sig$ of signatures;
\item a sentence functor $Sen :  Sig \to Set$;
\item a model functor $Mod : Sig^{op} \to Cat$; 
\item a ``stratification'' $\sem{\_}$ which consists of a functor
$\sem{\_}_\Sigma : Mod(\Sigma) \to \mathcal{C}$ for
each signature $\Sigma \in Sig$ ({\bf states of models}) where $\mathcal{C}$ is a concrete category (i.e. $\mathcal{C}$ is equipped with a faithful functor $\mathcal{U} : \mathcal{C} \to Set$), and a natural transformation $\sem{\_}_\sigma : \sem{\_}_{\Sigma'} \to \sem{\_}_\Sigma \circ Mod(\sigma)$ for each signature morphism $\sigma : \Sigma \to \Sigma'$ such that $\mathcal{U}(\sem{M'}_\sigma)$ is surjective for each $M' \in Mod(\Sigma')$ (and then by standard results in the category theory, $\sem{M'}_\sigma$ is an epimorphism in $\mathcal{C}$)~\footnote{In many concrete categories of interest the converse is also true. However, this does not hold in general.}. To simplify the notations and when this does not raise ambiguities,
% in the paper, 
we use in the rest of this paper the notation $\sem{M}_\Sigma$, given a signature $\Sigma$ and a model $M \in Mod(\Sigma)$, to denote both the object in the concrete category $\mathcal{C}$ and the underlying set $\mathcal{U}(\sem{M}_\Sigma)$. Similarly, given a signature morphism $\sigma : \Sigma \to \Sigma'$ and a $\Sigma'$-model $M'$, we will use the notation $\sem{M'}_\sigma$ to denote both the morphism $\sem{M'}_\sigma$ in $\mathcal{C}$ and the mapping $\mathcal{U}(\sem{M'}_\sigma)$ in $Set$; 
%\Isa{j'ai mis des $U$ droits partout pour que ce soit homogene. C'est bon ?} \Marc{Oui, tout \`a fait.}
\item a satisfaction relation between models and sentences which is
parametrized by model states, $M \models^\eta_\Sigma \varphi$
where $\eta \in \sem{M}_\Sigma$ such that, $\forall \sigma : \Sigma \to \Sigma', \forall M \in Mod(\Sigma'), \forall \eta \in \sem{M}_{\Sigma'}, \forall \varphi \in Sen(\Sigma)$, the two following
properties are equivalent: 
\begin{enumerate}
\item $Mod(\sigma)(M) \models^{\sem{M}_\sigma (\eta)}_\Sigma \varphi$, 
\item $M \models^\eta_{\Sigma'} Sen(\sigma)(\varphi)$.
\end{enumerate} 
\end{itemize}
Then, we can define for every $\Sigma \in Sig$, the
satisfaction relation $\models_\Sigma \subseteq Mod(\Sigma)
\times Sen(\Sigma)$ as follows:
$$M \models_\Sigma \varphi \mbox{ \ if and only if \ } M
\models^\eta_\Sigma \varphi
\mbox{ \ for all \ } \eta \in \sem{M}_\Sigma .$$
\end{definition}

\begin{notation}
Given a signature $\Sigma \in Sig$, a model $M \in Mod(\Sigma)$ and a formula $\varphi \in Sen(\Sigma)$, we note $\sem{M}_\Sigma(\varphi) = \{\eta \in \sem{M}_\Sigma \mid M \models^\eta_\Sigma \varphi\}$.
\end{notation}

\begin{example}
\label{propositional logic}
{\bf PL} is the stratified institution with $Set$ as concrete category and $\sem{\nu}_P = \mathbb{1}$ ($\mathbb{1}$ is any singleton up to isomorphism) for each
set $P$ of propositional variables and each $P$-model $\nu$. 
\end{example}

\begin{example}[Internal stratification~\cite{AD07}]
\label{internal stratification}
In any institution $\mathcal{I} = (Sig,Sen,Mod,\models)$, we can define the stratified institution, denoted $St(\mathcal{I}) = (Sig',Sen',Mod',\sem{\_},\models)$, as follows: 
\begin{itemize}
\item $Sig'$ is the category, objects and morphisms of which are, respectively, 
quasi-representable signatures $\chi : \Sigma \to \Sigma'$,~\footnote{A signature morphism $\chi : \Sigma \to \Sigma'$ is
{\bf quasi-representable} if and only if each model homomorphism $h : Mod(\chi)(M') \to N$ has a unique
$\chi$-expansion $h' : M' \to N'$.} and pairs of base institution
signature morphisms  $(\varphi : \Sigma \to \Sigma_1,\varphi' : \Sigma' \to \Sigma'_1) : (\chi :
\Sigma \to \Sigma') \to  (\chi_1 : \Sigma_1 \to \Sigma'_1)$ such that:
$$\begin{CD}
\Sigma @>\chi>> \Sigma' \\
   @V\varphi VV @VV\varphi'V \\
\Sigma_1 @>>\chi_1> \Sigma'_1
\end{CD}$$ 
\noindent
is a weak amalgamation square~\footnote{We refer the reader to~\cite{Dia08} for a definition of weak amalgamation square.},
\item $Sen' :  Sig' \to Set$ is the functor that maps
every $\chi : \Sigma \to  \Sigma'$ to $Sen(\Sigma')$,
\item $Mod' : {Sig'}^{op} \to Cat$ is the functor that
maps $\chi : \Sigma \to \Sigma'$ to $Mod(\Sigma)$, and
\item $\sem{\_}$ is the $Sig'$-indexed 
%\Isa{la notation $|Sig'|$ n'est pas definie et je crois qu'on ne s'en sert pas ailleurs. $\rightarrow \Sigma'$ ?} \Marc{Tout \`a fait. Par contre, je ne comprends pas ta remarque $\rightarrow \Sigma'$ ?} \Isa{c'est Sig' et pas $\Sigma'$....} 
family of functors
$\sem{\_}_\chi : Mod'(\chi) \to Set$ that maps every
$\chi$-model $M$ to its set of states $\sem{M}_\chi =
\{ M' \in Mod(\Sigma') \mid Mod(\chi)(M') = M \}$. 
\end{itemize}
Given $\chi : \Sigma \rightarrow \Sigma'$ and a $\chi$-model $M$,
for each state $M' \in \sem{M}_\chi$, we define the 
satisfaction of $\varphi \in Sen'(\chi)$ by $M$ at
$M'$, denoted $M \models_\chi^{M'} \varphi$, by:
$$M \models_\chi^{M'} \varphi \mbox{ \ iff \ }
 M' \models_{\Sigma'} \varphi$$
Finally, a $\chi$-model $M$ satisfies $\varphi$, denoted
$M \models_\chi \varphi$ if and only if $M \models_\chi^{M'} \varphi$ for every
$M' \in \sem{M}_\chi$. \\ $St(\mathcal{I})$ is a stratified
institution where the concrete category is $Set$. Indeed, for each signature morphism $(\varphi,\varphi' :
(\chi :  \Sigma \to \Sigma') \to (\chi_1 : \Sigma_1
\to \Sigma'_1))$, the natural transformation
$\sem{\_}_{(\varphi,\varphi')}$ is defined by
$\sem{M}_{(\varphi,\varphi')}(M') = Mod(\varphi')(M')$
for each state $M' \in \sem{M}_{\chi'}$.
The definition of $\sem{\_}_{\chi}$ on model homomorphisms uses the
quasi-representable property of $\chi$.
The surjectivity of $\sem{\_}_{(\varphi,\varphi')}$ is assured
by the weak amalgamation property of the square defining
$(\varphi,\varphi')$.
\end{example}

\begin{example}
{\bf MPL} is the stratified institution where the concrete category is $Graph$, $\sem{(I,W,R)}_P = (I,R)$ for each
set $P$ of propositional variables and each $P$-model $(I,W,R)$, and for each signature morphism $\sigma : P \to P'$ and each $P'$-model $(I',W',R')$, $\sem{(I',W',R')}_\sigma$ is simply the identity morphism on $(I',R')$.  
\end{example}

\begin{example}
{\bf TMPL} is a stratified institution which follows the same definition as {\bf MPL} by replacing $\sem{(I,W,R)}_P = (I,R)$ by $\sem{(X,\tau,\nu)}_P = (X,\tau)$.  Hence, the concrete category is the category of topological spaces $Top$.
\end{example}

\begin{proposition}[\cite{AD07}]
\label{is an institution}
Any stratified institution is an institution.
\end{proposition}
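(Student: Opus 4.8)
The plan is to show that every stratified institution $(Sig, Sen, Mod, \sem{\_}, \models)$ gives rise to an institution $(Sig, Sen, Mod, \models)$, where the unparametrized satisfaction relation $\models_\Sigma$ is the one defined at the end of Definition~\ref{def:stratified} by universal quantification over states: $M \models_\Sigma \varphi$ iff $M \models^\eta_\Sigma \varphi$ for all $\eta \in \sem{M}_\Sigma$. Since the signature category, the sentence functor, and the model functor of the stratified institution are already exactly the data required by Definition~\ref{def-instit}, the only thing left to verify is that this derived $\models_\Sigma$ satisfies the (unparametrized) satisfaction condition. So the whole proof reduces to a single implication-chasing argument.

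First I would fix a signature morphism $\sigma : \Sigma \to \Sigma'$, a model $M \in Mod(\Sigma')$, and a sentence $\varphi \in Sen(\Sigma)$, and aim to prove the equivalence $M \models_{\Sigma'} Sen(\sigma)(\varphi) \Leftrightarrow Mod(\sigma)(M) \models_\Sigma \varphi$. Unfolding the definition of the derived relation, the left-hand side reads ``$M \models^\eta_{\Sigma'} Sen(\sigma)(\varphi)$ for all $\eta \in \sem{M}_{\Sigma'}$'' and the right-hand side reads ``$Mod(\sigma)(M) \models^{\eta'}_\Sigma \varphi$ for all $\eta' \in \sem{Mod(\sigma)(M)}_\Sigma$''. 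The stratified satisfaction condition tells me that for each $\eta \in \sem{M}_{\Sigma'}$, the two statements $M \models^\eta_{\Sigma'} Sen(\sigma)(\varphi)$ and $Mod(\sigma)(M) \models^{\sem{M}_\sigma(\eta)}_\Sigma \varphi$ are equivalent. So the task is to transport a ``for all states'' quantifier across the state map $\sem{M}_\sigma : \sem{M}_{\Sigma'} \to \sem{Mod(\sigma)(M)}_\Sigma$.

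The key step, and the one on which the whole argument turns, is the surjectivity requirement built into the definition of stratified institutions: $\mathcal{U}(\sem{M}_\sigma)$ is surjective. For the forward direction, suppose the left-hand universal holds; given any $\eta' \in \sem{Mod(\sigma)(M)}_\Sigma$, surjectivity provides some $\eta \in \sem{M}_{\Sigma'}$ with $\sem{M}_\sigma(\eta) = \eta'$, and then the stratified satisfaction condition applied to this $\eta$ yields $Mod(\sigma)(M) \models^{\eta'}_\Sigma \varphi$, establishing the right-hand universal. For the backward direction, suppose the right-hand universal holds; given any $\eta \in \sem{M}_{\Sigma'}$, set $\eta' = \sem{M}_\sigma(\eta)$, which lies in $\sem{Mod(\sigma)(M)}_\Sigma$, so the hypothesis gives $Mod(\sigma)(M) \models^{\eta'}_\Sigma \varphi$, and the stratified satisfaction condition again delivers $M \models^\eta_{\Sigma'} Sen(\sigma)(\varphi)$.

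I expect the main (though modest) obstacle to be exactly the bookkeeping around which direction actually needs surjectivity: the backward implication goes through using only that $\sem{M}_\sigma(\eta)$ lands in the target state set, whereas the forward implication genuinely requires that every target state $\eta'$ be hit, i.e. surjectivity of $\mathcal{U}(\sem{M}_\sigma)$. Without this hypothesis the forward direction could fail, since a state $\eta'$ at which $\varphi$ is refuted might not be in the image of $\sem{M}_\sigma$. One should also be careful to work with the underlying-set image via the faithful functor $\mathcal{U}$, since the notation $\sem{M}_\sigma$ is used for both the morphism in $\mathcal{C}$ and its underlying map of sets. Beyond this, the argument is a routine transport of universal quantifiers and requires no calculation.
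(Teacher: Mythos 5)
Your proof is correct and matches the standard argument that the paper delegates to~\cite{AD07}: the satisfaction condition for the derived relation $\models_\Sigma$ is verified by transporting the universal quantification over states along $\sem{M}_\sigma$, with the surjectivity of $\mathcal{U}(\sem{M}_\sigma)$ needed exactly (and only) in the direction you identify. Your bookkeeping of which implication uses surjectivity is accurate, so nothing is missing.
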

(The proof of Proposition~\ref{is an institution} is substantially similar to that given in~\cite{AD07}.)
  
%\Marc{Comme la d\'efinition des institutions stratifi\'ees est un peu diff\'erente de celle donn\'ee dans~\cite{AD07}, est-ce que l'on donne la preuve de la proposition ci-dessus ? Je crains que \c ca ne change pas grand chose dans la preuve.}
\medskip
By this proposition, we will also denote by $\mathcal{I}$ the generic stratified institution\\ $(Sig,Sen,Mod,\sem{\_},\models)$.

\section{Internal logic and extension to fuzzy case}
\label{extensions}

Here, we propose to define abstractly the important logic concepts of Boolean connectives, quantifiers, and fuzzy reasoning. By ``abstractly'' we mean independently of any stratified institution. Boolean connectives and quantifiers have already been defined internally to any institution~\cite{Dia08}. But institutions only consider sentences (i.e. closed formulas), and the institution {\bf MPL} does not have semantic negation, disjunction and implication connectives, as abstractly defined in institutions. Here, as the satisfaction of formulas is defined from model states, the standard Boolean connectives can be defined in stratified institutions more ``finely'' than in institutions, and allow stratified institutions such as {\bf MPL} to have all standard Boolean connectives. 

%To our knowledge, fix-point operators have not received attention in ``universal logics'' such as institutions or Tarskian logics~\cite{Tar56}. The reason is that fix-point operators at a semantic level require to explicitly deal with states that parametrize the satisfaction of formulas.  Here, we propose to define fix-point operators independently of a given logic in the manner of intedition rnal logic.  
Fuzzy (or many-valued) reasoning has already received an institutional semantics~\cite{Dia13,Dia14}. The approach proposed here is substantially similar to that proposed in~\cite{Dia13} although defined in the framework of stratified institutions.     

\subsection{Internal logic and quantifiers}
\label{Internal boolean and quantifiers}

\medskip
Let $\mathcal{I}$ be a stratified institution. Let $\Sigma$ be a signature of $\mathcal{I}$. Let $M$ be a $\Sigma$-model. A $\Sigma$-sentence $\varphi'$ is in $M$ a 
\begin{itemize}
\item {\bf semantic negation} of $\varphi$ when $\sem{M}_\Sigma(\varphi') = \sem{M}_\Sigma \setminus \sem{M}_\Sigma(\varphi)$;
\item {\bf semantic conjunction} of $\varphi_1$ and $\varphi_2$ when $\sem{M}_\Sigma(\varphi') = \sem{M}_\Sigma(\varphi_1) \cap \sem{M}_\Sigma(\varphi_2)$;
\item {\bf semantic disjunction} of $\varphi_1$ and $\varphi_2$ when $\sem{M}_\Sigma(\varphi') = \sem{M}_\Sigma(\varphi_1) \cup \sem{M}_\Sigma(\varphi_2)$;
\item {\bf semantic implication} of $\varphi_1$ and $\varphi_2$ when $\sem{M}_\Sigma(\varphi') =  (\sem{M}_\Sigma \setminus \sem{M}_\Sigma(\varphi_1)) \cup \sem{M}_\Sigma(\varphi_2)$.
\end{itemize}
A stratified institution $\mathcal{I}$ has (semantic) negation when each $\Sigma$-formula has a negation in each $\Sigma$-model. It has (semantic) conjunction (respectively disjunction and implication) when any two $\Sigma$-formulas have a conjunction (respectively disjunction and implication) in each $\Sigma$-model. As usual, we note negation, conjunction, disjunction and implication by $\neg$, $\wedge$, $\vee$ and $\Rightarrow$, respectively. Unlike institutions that deal with sentences, stratified institutions such as {\bf MPL}, {\bf MMPL} and {\bf TMPL} have now semantic negation, disjunction and implication. 

\medskip
In the same way, it is equally easy to introduce abstract quantifiers in stratified institutions by following the same construction as in the definition of internal stratification given in Example~\ref{internal stratification}. Hence, let $\mathcal{I} = (Sig,Sen,Mod,\sem{\_},\models)$ be a stratified institution, let $\chi : \Sigma \to \Sigma'$ be a signature morphism in $Sig$ and let $M \in Mod(\Sigma)$ be a model. Then, $M \models^\eta_\Sigma (\forall \chi) \varphi$ if and only if for every $\Sigma'$-model $M'$ such that $Mod(\chi)(M') = M$ and every state $\eta' \in \sem{M'}_{\Sigma'}$ such that $\sem{M'}_\chi(\eta') = \eta$ we have that $M' \models^{\eta'}_{\Sigma'} \varphi$. Existential quantification is defined dually by replacing ``every model $M'$'' and ``every state $\eta'$" by ``some model $M'$'' and ``some state $\eta'$" in the definition of universal quantification. 
%\Marc{J'ai ajout\'e la condition sur les \'etats mais c'est \`a contr\^oler. Je pense que c'est ok pour satisfaire la condition $\exists \chi \varphi \Leftrightarrow \neg \forall \chi \neg \varphi$.} \Isa{Ca me semble OK aussi}

\subsection{Fuzzy case}

\subsubsection{Residuated lattice}

The algebraic structures underlying fuzzy logic are usually residuated lattices. Residuated lattices generalize Boolean algebras for classical logic by considering a set of truth values which may contain more than two values. 

\begin{definition}[Residuated lattice]
A {\bf residuated lattice} $(L,\bigwedge,\bigvee,\varotimes,\rightarrow,0,1)$ is:
\begin{itemize}
\item a bounded lattice $(L,\bigwedge,\bigvee,0,1)$ 
%where the ordering $\leq$ is defined using the operations $\bigwedge$ and $\bigvee$ as usual 
where $\bigwedge$ and $\bigvee$ are the supremum and infimum operators associated with a partial ordering $\leq$, and $0$ and $1$ are the least and the greatest elements, respectively;
\item $\varotimes$ and $\rightarrow$ are binary operators  such that:
\begin{itemize}
\item $(L,\varotimes,1)$ is a monoid, that is, $\varotimes$ is a commutative and associative operation with the identity $a \varotimes 1 = a$;
\item $\varotimes$ is isotone in both arguments;
\item the operation $\rightarrow$ is a residuation operation with respect to $\varotimes$, i.e.
$$a \varotimes b \leq c~\mbox{iff}~a \leq b \rightarrow c$$
\end{itemize}
\end{itemize}
\end{definition} 

Most famous examples of residuated lattices are Goguen algebra and Luckasiewicz algebra, defined respectively as follows:
\begin{itemize}
\item {\bf Goguen algebra.} $([0,1],\bigwedge,\bigvee,\varotimes,\rightarrow,0,1)$ where $\varotimes$ is the ordinary product of reals and 
$$a \rightarrow b = 
\left\{ 
\begin{array}{ll}
1 & \mbox{if $a \leq b$} \\
\frac{b}{a} & \mbox{otherwise} 
\end{array}
\right.$$
\item {\bf Luckasiewicz algebra.} $([0,1],\bigwedge,\bigvee,\varotimes,\rightarrow,0,1)$ where:
\begin{center}
$a \varotimes b = 0 \bigvee (a + b -1)$ \\
$a \rightarrow b = 1 \bigwedge (1 - a + b)$
\end{center}
\end{itemize}

%As we can observe in most of the examples of residuated lattices, the set of truth values is the closed interval $[0,1]$. In this case, the infimum and supremum can be abstractly defined from two binary operations $t : [0,1]^2 \to [0,1]$ and $s : [0,1]^2 \to [0,1]$ called t-norm and t-conorm, respectively. More formally, a t-norm $t : [0,1]^2 \to [0,1]$ is an operation which is commutative, associative, monotone, and satisfies the boundary condition $t(1,a) = t(a,1) = a$. Dually, a t-conorm $s : [0,1]^2 \to [0,1]$ is an operation which is commutative, associative, monotone, and satisfies the boundary condition $s(0,a) = s(a,0) = a$.  

\subsubsection{Institutional semantics}

Let $\mathcal{I} = (Sig,Sen,Mod,\sem{\_},\models)$ be a stratified institution. Let $\mathcal{L} = (L,\bigwedge,\bigvee,\varotimes,\rightarrow,0,1)$ be a residuated lattice. We can consider that for every signature $\Sigma$, the truth of $\Sigma$-formulas $\varphi \in Sen(\Sigma)$ is a value in $L$, i.e. for every $\Sigma$-model $M \in Mod(\Sigma)$, $\sem{M}_\Sigma(\varphi)$ is a fuzzy subset of $\sem{M}_\Sigma$ over $L$. Hence, whereas in $\mathcal{I}$, the satisfaction relation $M \models_\Sigma \varphi$ can be seen as a mapping from $\sem{M}_\Sigma$ to $\{0,1\}$, in a fuzzy extension of $\mathcal{I}$, $M \models_\Sigma \varphi$ is a mapping from $\sem{M}_\Sigma$ to $L$. For every $\eta \in \sem{M}_\Sigma$, we will rather use the notation $(M \models^\eta_\Sigma \varphi)$ than $M \models_\Sigma \varphi(\eta)$ to denote the value in $L$ yielded by the mapping $M \models_\Sigma \varphi$. Of course, to preserve the satisfaction condition, we have to impose the following equivalence: for each signature morphism $\sigma : \Sigma \to \Sigma'$, every $\Sigma'$-model $M'$, every $\Sigma$-formula $\varphi$ and every $\eta' \in \sem{M'}_\Sigma$,
$$(M' \models^{\eta'}_{\Sigma'} Sen(\sigma)(\varphi)) = (Mod(\sigma)(M') \models^{\sem{M'}_\sigma(\eta')}_\Sigma \varphi)$$

Standardly, Boolean connectives and quantifiers can be internally defined in any fuzzy extension of a stratified institution $\mathcal{I}$. To give a meaning to negation, we suppose that $\mathcal{L}$ is with complements ($\bar{.}$). Hence, a $\Sigma$-sentence $\psi$ is, in a $\Sigma$-model $M$, a
\begin{itemize}
\item {\bf fuzzy semantic negation} of $\varphi$ when for every $\eta \in \sem{M}_\Sigma$, $(M \models^\eta_\Sigma \psi) = \overline{(M \models^\eta_\Sigma \varphi)}$;
\item {\bf fuzzy semantic conjunction} of $\varphi_1$ and $\varphi_2$ when for every $\eta \in \sem{M}_\Sigma$, $(M \models^\eta_\Sigma \psi) = (M \models^\eta_\Sigma \varphi_1) \bigwedge (M \models^\eta_\Sigma \varphi_2)$;
\item {\bf fuzzy semantic disjunction} of $\varphi_1$ and $\varphi_2$ when for every $\eta \in \sem{M}_\Sigma$,  $(M \models^\eta_\Sigma \psi) = (M \models^\eta_\Sigma \varphi_1) \bigvee (M \models^\eta_\Sigma \varphi_2)$;
\item {\bf fuzzy semantic implication} of $\varphi_1$ and $\varphi_2$ when for every $\eta \in \sem{M}_\Sigma$,  $(M \models^\eta_\Sigma \psi) = (M \models^\eta_\Sigma \varphi_1) \rightarrow (M \models^\eta_\Sigma \varphi_2)$.
\end{itemize}
The following connective $\varotimes$ is often added, the fuzzy semantics of which is:
$$\forall \eta \in \sem{M}_\Sigma,  (M \models^\eta_\Sigma \varphi_1 \varotimes \varphi_2) = ((M \models^\eta_\Sigma \varphi_1) \varotimes (M \models^\eta_\Sigma \varphi_2)).$$

First-order quantifiers can also be easily represented in a fuzzy way. Let $\chi : \Sigma \to \Sigma'$ be a signature morphism in $Sig$ and let $M \in Mod(\Sigma)$ be a model. A $\Sigma$-sentence $\varphi'$ is a {\bf (fuzzy semantic) universal $\chi$-quantification} in $M$ when for every $\eta \in \sem{M}_\Sigma$, $(M \models^\eta_\Sigma \varphi') = \bigwedge \{(M' \models^{\eta'}_\Sigma \varphi) \mid Mod(\chi)(M') = M~\mbox{and}~\sem{M'}_\chi(\eta') = \eta\}$. Existential quantification is defined dually by replacing the infimum $\bigwedge$ by the supremum $\bigvee$. In Section~\ref{extension to the fuzzy case}, we will give a more general definition which allows us to extend to the fuzzy case a large family of dual logical operators  such as modalities.

%\Isa{est-ce qu'il serait interessant (et possible) de definir les quantifications sur $M$ et sur $\eta$ separement ?}

Fuzzy logics allow us to reason about formulas according to uncertainty. This leads to extend the satisfaction relation $\models_\Sigma$ to a binary relation between models in $Mod(\Sigma)$ and couples in $Sen(\Sigma) \times L$ as follows:
$$M \models_\Sigma (\varphi,l) \Longleftrightarrow l \leq \bigwedge \{(M \models^\eta_\Sigma \varphi) \mid \eta \in \sem{M}_\Sigma\}$$
where $\leq$ is the ordering defined on $L$.  

\medskip
We have then the following result that proves that fuzzy extensions of stratified institutions are institutions.

\begin{proposition}
For every signature morphism $\sigma : \Sigma \to \Sigma'$, every $\Sigma'$-model $M'$ and couple $(\varphi,l) \in Sen(\Sigma) \times L$, we have:
$$M' \models_{\Sigma'} (Sen(\sigma)(\varphi),l) \Longleftrightarrow Mod(\sigma)(M') \models_\Sigma (\varphi,l).$$
\end{proposition}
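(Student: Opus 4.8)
The plan is to reduce the claimed equivalence to an equality of two infima in $L$ by unfolding the fuzzy satisfaction relation on both sides. By definition,
$$M' \models_{\Sigma'} (Sen(\sigma)(\varphi),l) \Longleftrightarrow l \leq \bigwedge \{(M' \models^{\eta'}_{\Sigma'} Sen(\sigma)(\varphi)) \mid \eta' \in \sem{M'}_{\Sigma'}\},$$
and on the other side
$$Mod(\sigma)(M') \models_\Sigma (\varphi,l) \Longleftrightarrow l \leq \bigwedge \{(Mod(\sigma)(M') \models^{\eta}_{\Sigma} \varphi) \mid \eta \in \sem{Mod(\sigma)(M')}_\Sigma\}.$$
Since the ordering $\leq$ on $L$ is fixed and the same threshold $l$ occurs in both right-hand inequalities, it suffices to prove that the two infima coincide.

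First I would apply, term by term, the state-level equivalence imposed in order to preserve the satisfaction condition, namely $(M' \models^{\eta'}_{\Sigma'} Sen(\sigma)(\varphi)) = (Mod(\sigma)(M') \models^{\sem{M'}_\sigma(\eta')}_\Sigma \varphi)$ for every $\eta' \in \sem{M'}_{\Sigma'}$. This rewrites the first index set so that the left infimum becomes $\bigwedge \{(Mod(\sigma)(M') \models^{\sem{M'}_\sigma(\eta')}_\Sigma \varphi) \mid \eta' \in \sem{M'}_{\Sigma'}\}$, that is, an infimum of exactly the same family of truth values, but now indexed \emph{through} the map $\sem{M'}_\sigma$ rather than directly over the states of $Mod(\sigma)(M')$.

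The crucial step, and the only point at which the stratified-institution axioms are genuinely used, is the surjectivity of $\mathcal{U}(\sem{M'}_\sigma)$ guaranteed by Definition~\ref{def:stratified}. Because of it, as $\eta'$ ranges over $\sem{M'}_{\Sigma'}$ the state $\sem{M'}_\sigma(\eta')$ ranges over all of $\sem{Mod(\sigma)(M')}_\Sigma$, so the set of values $\{(Mod(\sigma)(M') \models^{\sem{M'}_\sigma(\eta')}_\Sigma \varphi) \mid \eta' \in \sem{M'}_{\Sigma'}\}$ coincides with $\{(Mod(\sigma)(M') \models^{\eta}_\Sigma \varphi) \mid \eta \in \sem{Mod(\sigma)(M')}_\Sigma\}$. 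Two equal subsets of $L$ have the same infimum, which gives the required equality of bounds and hence the equivalence.

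I expect the surjectivity hypothesis to be the essential ingredient rather than a technical nuisance: without it one would only obtain that the image-indexed family of values is \emph{contained} in the full family, whence $\bigwedge$ over the image-indexed family $\geq \bigwedge$ over the full family (an infimum over a smaller set is larger). This would preserve the implication from $Mod(\sigma)(M') \models_\Sigma (\varphi,l)$ to $M' \models_{\Sigma'} (Sen(\sigma)(\varphi),l)$ but break the converse one. It is therefore worth stressing that the condition ``$\mathcal{U}(\sem{M'}_\sigma)$ is surjective'' is exactly what makes the argument go through, mirroring the role it plays in the proof (Proposition~\ref{is an institution}) that every non-fuzzy stratified institution is an institution.
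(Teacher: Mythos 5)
Your proof is correct and follows exactly the paper's own argument: it unfolds the fuzzy satisfaction relation, applies the per-state equality $(M' \models^{\eta'}_{\Sigma'} Sen(\sigma)(\varphi)) = (Mod(\sigma)(M') \models^{\sem{M'}_\sigma(\eta')}_\Sigma \varphi)$, and uses the surjectivity of $\sem{M'}_\sigma$ to identify the two families of truth values and hence their infima. Your added remark that surjectivity is the essential ingredient (and that without it only one implication survives) is accurate, though not part of the paper's proof.
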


\begin{proof}
By definition, we have that:  
$$(M' \models^{\eta'}_{\Sigma'} Sen(\sigma)(\varphi)) = (Mod(\sigma)(M') \models^{\sem{M'}_\sigma(\eta')}_\Sigma \varphi).$$ 
As $\sem{M'}_\sigma$ is surjective, we also have that: 
$$\bigwedge \{(M' \models^{\eta'}_{\Sigma'} Sen(\sigma)(\varphi)) \mid \eta' \in \sem{M'}_{\Sigma'}\} = \bigwedge \{(Mod(\sigma(M') \models^\eta_\Sigma \varphi) \mid \eta \in \sem{M}_\Sigma\},$$ 
and we can conclude that:
$$l \leq \bigwedge \{(M' \models^{\eta'}_{\Sigma'} Sen(\sigma)(\varphi)) \mid \eta' \in \sem{M'}_{\Sigma'}\}  \Longleftrightarrow$$ 
$$l \leq \bigwedge \{(Mod(\sigma(M') \models^\eta_\Sigma \varphi) \mid \eta \in \sem{M}_\Sigma\}.$$ 
\end{proof}

\Omit{
\section{Abstract modalities}
\label{abstract modalities}

As for Boolean coonectives and first-order quantifiers, the modalities $\Box$ and $\Diamond$ can be defined  abstractly. However, 
%unlike for Boolean connectives and first-order quantifiers, 
the stratified institutions over which modalities will be defined will have to satisfy some constraints depending on how the modalities will be interpreted, i.e. relationally or topologically. Hence, to deal with Kripke models or topos models, the considered stratified institutions will have the categories $Graph$ or $Top$ as concrete category. 

\paragraph{Modalities for Kripke models.}

Let $\mathcal{I}$ be a stratified institution whose category of states is the category $Graph$. Hence, for each $\Sigma$-model $M$, $\sem{M}_\Sigma$ is a directed graph $(\sem{M}_\Sigma,R_M)$.  A $\Sigma$-sentence $\varphi'$ is, in $M$, a {\bf semantic}

\begin{itemize}
\item {\bf possibility} of $\varphi$ when $\sem{M}_\Sigma(\varphi') = \{\eta \in \sem{M}_\Sigma \mid \exists \eta' \in \sem{M}_\Sigma, (\eta,\eta') \in R_M~\mbox{and}~\eta' \in \sem{M}_\Sigma(\varphi)\}$;
\item {\bf necessity} of $\varphi$ when $\sem{M}_\Sigma(\varphi') = \{\eta \in \sem{M}_\Sigma \mid \forall \eta' \in \sem{M}_\Sigma, (\eta,\eta') \in R_M
%~\mbox{and}~
\Rightarrow \eta' \in \sem{M}_\Sigma(\varphi)\}$. 
\end{itemize}

\paragraph{Modalities for topos models.} Here, the stratified institution has to have the category of topological spaces $Top$ as category of states. Hence, for each $\Sigma$-model $M$, $\sem{M}_\Sigma$ is a topological space $(\sem{M}_\Sigma,\tau)$.  A $\Sigma$-sentence $\varphi'$ is in $M$ a {\bf semantic}

\begin{itemize}
\item {\bf possibility} of $\varphi$ when $\sem{M}_\Sigma(\varphi') = \{\eta \in \sem{M}_\Sigma \mid \forall U \in \tau, \eta \in U \Rightarrow \exists \eta' \in U, \eta' \in \sem{M}_\Sigma(\varphi)\}$;
\item {\bf necessity} of $\varphi$ when $\sem{M}_\Sigma(\varphi') = \{\eta \in \sem{M}_\Sigma \mid \exists U \in \tau, \eta \in U~\mbox{and}~\forall \eta' \in U, \eta' \in \sem{M}_\Sigma(\varphi)\}$.
\end{itemize}
\Isa{on a l'impression que possibility et necessity sont inversees ici. Si ce n'est pas le cas, il faudrait peut-etre expliquer un peu ?}

\paragraph{Modalities for metric models.} The latter modalities can be easily extended to metric spaces. This gives rise to the modal extension of stratified institutions with a metric interpretation.  This will be particularly useful when we will deal with qualitative reasoning in Section~\ref{qualitative spatial reasoning} where we can be interested by metric or directional relationships. Let $\mathcal{I}$ be a stratified institution whose category of states is the category of metric spaces $Met$. Hence, for each $\Sigma$-model $M$, $\sem{M}_\Sigma$ is a metric space $(\sem{M}_\Sigma,d)$.  Such a model will be called {\bf metric model}. A $\Sigma$-sentence $\varphi'$ is, in $M$, a {\bf semantic}

\begin{itemize}
\item {\bf possibility} of $\varphi$ when $\sem{M}_\Sigma(\varphi') = \{\eta \in \sem{M}_\Sigma \mid \forall \varepsilon > 0, \exists \eta' \in \sem{M}_\Sigma, d(\eta,\eta') < \varepsilon~\mbox{and}~ \eta' \in \sem{M}_\Sigma(\varphi)\}$;
\item {\bf necessity} of $\varphi$ when $\sem{M}_\Sigma(\varphi') = \{\eta \in \sem{M}_\Sigma \mid \exists \varepsilon > 0 \in \tau, \forall \eta' \in \sem{M}_\Sigma, d(\eta,\eta') < \varepsilon \Rightarrow \eta' \in \sem{M}_\Sigma(\varphi)\}$.
\end{itemize}

In all three cases above, $\mathcal{I}$ has possibility (respectively necessity) when each formula has possibility (respectively necessity). We note possibility and necessity by $\Diamond$ and $\Box$.  It is easy to see that $\Box$ and $\Diamond$ are mutually dual, i.e. $\neg \Diamond \neg \varphi$ is a necessity of $\varphi$ and conversely. As for Boolean connectives and quantifiers, the interpretation of modalities can be extended to the fuzzy case (see Section~\ref{extension to the fuzzy case}). 

}

\section{Duality from morphological dilations and erosions in stratified institutions}
\label{dual connectives}

In this section, we show that mathematical morphology~\cite{BHR07,Ser82} can be used for defining systematically and uniformly the different logical concepts such as quantifiers and modalities. Indeed, we can observe that most of unary modalities and quantifiers have always a dual, and they commute with conjunction and disjunction. This then enables us to define such logic concepts via algebraic dilations and erosions based on the notion of adjunction. 

In the rest of the paper, we consider a stratified institution $\mathcal{I}$ which has conjunction, disjunction and negation.

\subsection{Lattice of formulas}
\label{lattice of formulas}

Let us first define the lattice $(Sen(\Sigma)_{/_{\equiv_M}},\preceq_M)$ where $M \in Mod(\Sigma)$. 
%on which  to apply algebraic dilations and erosions. 
In the following, we consider only finite sets of formulas.

\medskip
Let $M \in Mod(\Sigma)$ be a model. Considering the inclusion on the power set $\mathcal{P}(\sem{M}_\Sigma)$, the poset $(\mathcal{P}(\sem{M}_\Sigma),\subseteq)$ is a complete lattice. Similarly, a lattice is defined on  the set $Sen(\Sigma)_{/_{\equiv_M}}$ where $Sen(\Sigma)_{/_{\equiv_M}}$ is the quotient space of $Sen(\Sigma)$ by the equivalence relation $\equiv_M$ defined by:
$$\varphi \equiv_M \psi \Longleftrightarrow \sem{M}_\Sigma(\varphi) = \sem{M}_\Sigma(\psi).$$
This lattice is $(Sen(\Sigma)_{/_{\equiv_M}},\preceq_M)$ where $\preceq_M$  is the partial ordering defined by: 
$$\varphi \preceq_M \psi \Longleftrightarrow \sem{M}_\Sigma(\varphi) \subseteq \sem{M}_\Sigma(\psi).$$

Any finite subset $\{\varphi_i\}$ of $Sen(\Sigma)$ has as supremum $\bigvee \{\varphi_i\}$ and infimum $\bigwedge \{\varphi_i\}$, corresponding to union and intersection in the complete lattice $(\mathcal{P}(\sem{M}_\Sigma),\subseteq)$, and then, following the definitions given in Section~\ref{Internal boolean and quantifiers}, $\bigvee \{\varphi_i\}$ and $\bigwedge \{\varphi_i\}$ are the semantic disjunction $\bigvee_i \varphi_i$ and semantic conjunction $\bigwedge_i \varphi_i$ of the formulas in $\{\varphi_i\}$, respectively. Hence, $(Sen(\Sigma)_{/_{\equiv_M}},\preceq_M)$ is a bounded lattice. Greatest and least elements are respectively $\top$ and $\bot$, corresponding to equivalence classes of tautologies and antilogies. 
%\Isa{\\vrai aussi pour des familles infinies ? ou sinon dire un peu pourquoi ? Les deux treillis $(\mathcal{P}(\sem{M}_\Sigma),\subseteq)$ et $(Sen(\Sigma)_{/_{\equiv_M}},\preceq_M)$ ne seraient pas isomorphes ?}

\Omit{
\subsubsection{Bounded \Marc{join semilattice} of formulas based on the formula satisfaction}

\Marc{En fait, je reviens \`a notre 1\`ere version o\`u nous disions que $(Sen(\Sigma)_\equiv,\preceq)$ n'avait pas une structure de treillis mais simplement de semi-treillis poss\'edant des bornes inf\'erieures mais pas de bornes sup\'erieures. La raison est qu'\'etant donn\'e un ensemble fini $\{\varphi_i\}$, $\bigcup_i Mod(\varphi_i)$ n'a pas n\'ecessairement de formules dans $Sen(\Sigma)$ d\'enotant cet ensemble de mod\`eles except\'e  si chaque formules $\varphi_i$ est une formule close (i.e. vraie ou bien fausse pour tous les \'etats des mod\`eles). Dans ce dernier cas, la formule correspondante est $\bigvee_i \varphi_i$ la disjonction s\'emantique des $\varphi_i$.}

Let us now consider the poset $(Sen(\Sigma)_{/_\equiv},\preceq)$ where $Sen(\Sigma)_{/_\equiv}$ is the quotient space of $Sen(\Sigma)$ by the equivalence relation $\equiv$ defined by:
$$\varphi \equiv \psi \Longleftrightarrow Mod(\varphi) = Mod(\psi),$$
and $\preceq$ is the partial ordering defined by:
$$\varphi \preceq \psi \Longleftrightarrow Mod(\varphi) \subseteq Mod(\psi).$$

\Marc{As previously, any finite subset $\{\varphi_i\}$ of $Sen(\Sigma)$ has an infimum $\bigwedge_i \varphi_i$ corresponding to the intersection in the complete lattice $(\mathcal{P}(Mod(\Sigma)),\subseteq)$, and then following the definitions in Section~\ref{Internal boolean and quantifiers}, is the conjunction of $\{\varphi_i\}$. On the other hand, unlike $(Sen(\Sigma)_{/_{\equiv_M}},\preceq_M)$, finite subset $\{\varphi_i\}$ of $Sen(\Sigma)$ has necessarily a supremum. Indeed, this supremum shoud correspond to the union in the complete lattice $(\mathcal{P}(Mod(\Sigma)),\subseteq)$. However, $Mod(\bigvee_i \varphi_i)$ is not equal to $\bigcup_i Mod(\varphi_i)$ except if each $\varphi_i$ is semantically closed (i.e. for every $M \in Mod(\Sigma)$, $\sem{M}_\Sigma(\varphi_i)$ is either $\sem{M}_\Sigma$ or $\emptyset$ -- cf. \cite{AD07}). Hence, $(Sen(\Sigma)_{/_\equiv},\preceq)$ is a bounded join semilattice. Greatest and least elements are respectively $\top$ and $\bot$ corresponding here again to both equivalence classes of tautologies and antilogies.} 
}

\subsection{Morphological dilations and erosions of formulas based on structuring elements}
\label{morphological dilations and erosions of formulas based on structuring elements}

\subsubsection{Definitions}

The most abstract way to define dilation and erosion is as follows. Let $(L, \preceq)$ and $(L', \preceq')$ be two lattices. An algebraic dilation is an operator $\delta : L \to L'$  that commutes with the supremum, and an erosion is an operator $\varepsilon : L' \to L$ that commutes with the infimum. It follows that both operators are increasing, $\delta$ preserves the least element $\bot$, and $\varepsilon$ preserves the greatest element $\top$. Now, in the practice of mathematical morphology, morphological operators are often defined on sets (i.e. $L$ and $L'$ are the powersets or finite powersets of given sets $E$ and $E'$, and often $E=E'$ and $L=L'$) through a structuring element designed in advance. Let us recall here the basic definitions of dilation and erosion $D_B$ and $E_B$ over sets, where $B$ is a set called structuring element. Let $X$ and $B$ be two subsets of a set $E$, endowed with a translation operator. The dilation and erosion of $X$ by the structuring element $B$, denoted respectively by $D_B(X)$ and $E_B(X)$, are defined as follows:
\begin{center}
$D_B(X) = \{x \in E \mid \check{B}_x \cap X \neq \emptyset\}$ \\
$E_B(X) = \{x \in E \mid B_x \subseteq X\}$
\end{center}
where $B_x$ denotes the translation of $B$ at $x$, and $\check{B}$ the symmetrical of $B$ with respect to the origin of space.\footnote{Mathematical morphology has been mainly applied in image processing. The set $E$ is then a vector space such as $\mathbb{R}^n$ or $\mathbb{Z}^n$, elements of which represent image points. In this case, $B_x = \{x + b \mid b \in B\}$ where $+$ is the vectorial sum.} Similarly, the structuring element $B$ can also be seen as a binary relation on the set $E$ as follows: $(x,y) \in B \Longleftrightarrow y \in B_x$~\cite{BHR07}. This is the way we will consider structuring elements in this paper.

The most important properties of dilation and erosion based on a structuring element are the following ones~\cite{BHR07,IGMI_NajTal10,Ser82}:
\begin{itemize}
\item {\bf Monotonicity:} if $X \subseteq Y$, then $D_B(X) \subseteq D_B(Y)$ and $E_B(X) \subseteq E_B(Y)$; if $B \subseteq B'$, then $D_B(X) \subseteq D_{B'}(X)$ and $E_{B'}(X) \subseteq E_B(X)$. 
\item If for every $x \in E$, $x \in B_x$ (and this condition is actually necessary and sufficient), then 
\begin{itemize}
\item {\bf $D_B$ is extensive:} $X \subseteq D_B(X)$;
\item {\bf $E_B$ is anti-extensive:} $E_B(X) \subseteq X$.
\end{itemize}
\item {\bf Commutativity:} $D_B(X \cup Y) = D_B(X) \cup D_B(Y)$ and $E_B(X \cap Y) = E_B(X) \cap E_B(Y)$. 
\item {\bf Adjunction:} $X \subseteq E_B(Y) \Leftrightarrow D_B(X) \subseteq Y$.
\item {\bf Duality:} $E_B(X) = [D_{\check{B}}(X^C)]^C$ where $\_^C$ is the set-theoretical complementation. 
\end{itemize}
Hence, $D_B$ and $E_B$ are particular cases of general algebraic dilation and erosion on the lattice $(\mathcal{P}(E),\subseteq)$. 

\medskip
In stratified institutions, given a $\Sigma$-model $M$, $\sem{M}_\Sigma$ is an element of a concrete category $\mathcal{C}$ (i.e. $\mathcal{C}$ comes with a faithful functor $\mathcal{U}$ such that $\mathcal{U}(\sem{M}_\Sigma)$ is a set~\footnote{Let us recall that for simplicity in the notations we use $\sem{M}_\Sigma$ to denote both the object in the concrete category $\mathcal{C}$ and the underlying set associated by the faithful functor $\mathcal{U}$.}). Therefore, let us suppose that for each model $M \in Mod(\Sigma)$, $\sem{M}_\Sigma$ is equipped with a structuring element $B$ (i.e. $\forall \eta \in \sem{M}_\Sigma$, $B_\eta \subseteq \sem{M}_\Sigma$) which represents a relationship between states, i.e. $\eta' \in B_\eta$ iff $\eta'$ satisfies some relationship to $\eta$ (see the next section to have examples of structuring elements for given stratified institutions), and $\check{B}_\eta$ is defined by $\eta' \in \check{B}_\eta \Leftrightarrow \eta \in B_{\eta'}$. 
%\Isa{faut-il dire explicitement $B_\eta \subseteq \sem{M}_\Sigma$ ? on l'utilise dans la suite.} \Marc{Oui, je l'ai ajout\'e ci-dessus} 
Drawing inspiration from Bloch \& al. in~\cite{Bloch02,BL02}, dilation and erosion of a formula $\varphi \in Sen(\Sigma)$ then give rise to two formulas $D_B(\varphi)$ and $E_B(\varphi)$ satisfying the following equivalences:
\begin{center}
$\begin{array}{ll}
M \models^\eta_\Sigma D_B(\varphi) & \Longleftrightarrow \check{B}_\eta \cap \{\eta' \in \sem{M}_\Sigma \mid M \models^{\eta'}_\Sigma \varphi\} \neq \emptyset \\
                                                & \Longleftrightarrow \exists \eta' \in \check{B}_\eta, M \models^{\eta'}_\Sigma \varphi \\
                                                & \Longleftrightarrow \check{B}_\eta \cap \sem{M}_\Sigma(\varphi) \neq \emptyset
\end{array}$ \\
$\begin{array}{ll}
M \models^\eta_\Sigma E_B(\varphi) & \Longleftrightarrow B_\eta \subseteq \{\eta' \in \sem{M}_\Sigma \mid M \models^{\eta'}_\Sigma \varphi\} \\
                                                & \Longleftrightarrow \forall \eta' \in B_\eta, M \models^{\eta'}_\Sigma \varphi \\
                                                & \Longleftrightarrow B_\eta \subseteq \sem{M}_\Sigma(\varphi)
\end{array}$
\end{center}

\subsubsection{Extension to the fuzzy case}
\label{extension to the fuzzy case}

From our extension of stratified institutions to fuzzy reasoning, we can also define fuzzy dilation and erosion of formulas based on structuring elements. Several definitions of mathematical morphology on fuzzy sets with fuzzy structuring elements have been proposed in the literature, since the early work in~\cite{Bae95,BM93} (see e.g~\cite{IB:FSS-09,BM95,NACH-00} for reviews). Here, we follow the approach developed in~\cite{IB:FSS-09} using conjunctions and implications in residuated lattices. Hence, given a $\Sigma$-model $M$ with a structuring element $B$ such that for every $\eta \in \sem{M}_\Sigma$, $B_\eta$ is a fuzzy set, the dilation of a fuzzy formula by $B$ is defined for every $\eta \in \sem{M}_\Sigma$ as follows:
$$(M \models^\eta_\Sigma D_B(\varphi)) = \bigvee \{\check{B}_\eta(\eta') \varotimes (M \models^{\eta'}_\Sigma \varphi) \mid \eta' \in \sem{M}_\Sigma\}.$$
The erosion of a fuzzy formula by $B$ is defined for every $\eta \in \sem{M}_\Sigma$ as follows:
$$(M \models^\eta_\Sigma E_B(\varphi)) = \bigwedge \{B_\eta(\eta') \rightarrow (M \models^{\eta'}_\Sigma \varphi) \mid \eta' \in \sem{M}_\Sigma\}.$$

If we note $\mathcal{F}(\sem{M}_\Sigma)$ the set of all fuzzy sets on $\sem{M}_\Sigma$, the couple $(\mathcal{F}(\sem{M}_\Sigma), \leq)$ where $\leq$ denotes the fuzzy inclusion, is a complete lattice. Therefore, we can consider the lattice $(Sen(\Sigma)/_{\equiv_M},\preceq_M)$ where $\equiv_M$ and $\preceq_M$ are the fuzzy extensions of the two relations $\equiv_M$ and $\preceq_M$ defined in Section~\ref{lattice of formulas}. Here again, it is easy to show that the fuzzy versions of $D_B$ and $E_B$ commute with union and intersection of fuzzy sets of states, respectively, i.e. for every $\varphi_1,\varphi_2 \in Sen(\Sigma)$, we have:
\begin{itemize}
\item $D_B(\varphi_1 \vee \varphi_2) \equiv_M D_B(\varphi_1) \vee D_B(\varphi_2)$, 
\item $E_B(\varphi_1 \wedge \varphi_2) \equiv_M E_B(\varphi_1) \wedge E_B(\varphi_2)$, 
\end{itemize}
and then, $D_B$ and $E_B$, interpreted in a fuzzy sets setting, are algebraic dilation and erosion, respectively. As for the crisp case, it is quite straightforward to show that these fuzzy dilation and erosion are monotonous, extensive and anti-extensive when $\eta \in B_\eta$, and dual (resp. adjoint) if $\varotimes$ and $\rightarrow$ are dual (resp. adjoint). 
%\Isa{Faut-il laisser ces propri\'et\'es ici ou les mettre dans la section 4.3 ?}

%\Marc{Au lieu de se contenter de la derni\`ere phrase ci-dessus, ne pourrions-nous pas donner les m\^eme propri\'et\'es que pour le cas non flou (monotonicit\'e, extensivit\'e, anti-extensivit\'e, dualit\'e et adjonction) ? Elles sont valides aussi je pense, non ?}

\subsubsection{Examples}
\label{examples of instances}

We show in this section that the two dual logical operators $E_B$ and $D_B$ can be instantiated to define both first-order quantifiers $\forall,\exists$ and modalities $\Box,\Diamond$. Moreover, from Section~\ref{extension to the fuzzy case}, all these operators can naturally be extended to fuzzy cases.

\paragraph{First-order quantifiers}

Let $St({\bf Fol})$ be the stratified institution of the first-order logic
%~\footnote{We refer the reader to~\cite{AD07} for a complete definition of $St(\mathcal{I})$ given an institution $\mathcal{I}$.}. 
Let $\chi : (S,F,P) \hookrightarrow (S,F \coprod X,P)$ be a signature and let $x$ be a variable in $X$. For every $(S,F,P)$-model $M$, let us define the structuring element $B^x$ as follows:
$$\forall M' \in \sem{M}_\Sigma, B^x_{M'} = \{M'' \in \sem{M}_\Sigma \mid \forall y \neq x \in X, y^{M''} = y^{M'}\},$$ 
i.e. the set of models identical to $M'$ on all variables except possibly $x$.
This structuring element is symmetrical (i.e. $M'' \in B^x_{M'} \Leftrightarrow M' \in B^x_{M''}$) and contains the origin (i.e. $M' \in B^x_{M'}$).

We can then define the first-order quantifiers $\forall x$ and $\exists x$ as erosion and dilation from $B^x$ as follows:
\begin{center}
$\forall x. \varphi \equiv  E_{B^x}(\varphi)$, \\
$\exists x. \varphi \equiv  D_{B^x}(\varphi)$.
\end{center}

More generally,  in any internal stratification $St(\mathcal{I})$ of an institution $\mathcal{I}$, both quantifiers $\forall \chi$ and $\exists \chi$ for a signature $\chi : \Sigma \to \Sigma'$ can be defined similarly. Indeed, for every $\chi$-model $M$, let us define the structuring element $B^\chi$ as follows:
$$\forall M' \in \sem{M}_\chi, B^\chi_{M'} = \sem{M}_\chi$$
Again, the structuring element is symmetrical and contains the origin, we then have:
\begin{center}
$\forall \chi.\varphi \equiv E_{B^\chi}(\varphi)$, \\
$\exists \chi.\varphi \equiv D_{B^\chi}(\varphi)$.
\end{center}

\paragraph{Modalities for Kripke models}

Let $\mathcal{I}$ be a stratified institution whose concrete category is $Graph$. Hence for each $\Sigma$-model $M$, $\sem{M}_\Sigma$ is a directed graph $(\sem{M}_\Sigma,R_M)$. Obviously, this accessibility relation $R_M$ naturally leads to the structuring element $B$ defined as follows:
$$R_M(\eta,\eta') \Longleftrightarrow \eta' \in B_\eta.$$

The modalities $\Box$ and $\Diamond$ are then defined as follows:~\footnote{Here, we consider the set $\check{B}$ to define dilation because the accessibility relation is not necessarily symmetrical.}
\begin{center}
$\Box \varphi \equiv  E_B(\varphi)$, \\
$\Diamond \varphi \equiv  D_{\check{B}}(\varphi)$. 
\end{center}

\subsubsection{Properties}
\label{properties}

The following properties are the direct extensions of properties of dilation and erosion on sets to formulas. 

\begin{itemize}
\item {\bf Monotonicity:} if $\varphi \preceq_M \psi$, then $D_B(\varphi) \preceq_M D_B(\psi)$ and $E_B(\varphi) \preceq_M E_B(\psi)$.
\item {\bf Extensivity of dilation:} $\varphi \preceq_M D_B(\varphi)$
and {\bf anti-extensivity of erosion:} $E_B(\varphi) \preceq_M \varphi$
if and only if for every $\eta \in \sem{M}_\Sigma$, $\eta \in B_\eta$.
%\item {\bf Commutativity} $D_B(\varphi_1 \vee \varphi_2) \equiv_M D_B(\varphi_1) \vee D_B(\varphi_2)$ and $E_B(\varphi_1 \wedge \varphi_2) \equiv_M E_B(\varphi_1) \wedge E_B(\varphi_2)$ 
\item {\bf Adjunction:} $\varphi \preceq_M E_B(\psi) \Leftrightarrow D_B(\varphi) \preceq_M \psi$.
\item {\bf Commutativity with supremum or infinum:} $D_B(\varphi_1 \vee \varphi_2) \equiv_M D_B(\varphi_1) \vee D_B(\varphi_2)$ and $E_B(\varphi_1 \wedge \varphi_2) \equiv_M E_B(\varphi_1) \wedge E_B(\varphi_2)$.
\item {\bf Duality:} $E_B(\varphi) \equiv_M \neg D_{\check{B}}(\neg \varphi)$. 
\end{itemize}
 
It follows that $D_B$ and $E_B$ are respectively algebraic dilation and erosion over $(Sen(\Sigma)_{/_{\equiv_M}},\preceq_M)$, i.e. in $(Sen(\Sigma)_{/_{\equiv_M}},\preceq_M)$ $D_B$ and $E_B$ commute with supremum and infimum, respectively.  
Moreover, by a standard result of mathematical morphology~\cite{BHR07}, $E_B$ (respectively $D_B$) is the unique erosion (respectively the unique dilation) associated with $D_B$ (respectively $E_B$) by the adjunction property. From standard results of mathematical morphology and the adjunction property, we also have the following properties:
\begin{corollary}
\label{preservation for states}
\mbox{}

\begin{itemize}
\item $E_B(\top) \equiv_M \top$
\item $D_B(\bot) \equiv_M \bot$ 
\item $\varphi \preceq_M E_B(D_B(\varphi))$ 
\item $D_B(E_B(\varphi)) \preceq_M \varphi$ 
\item $E_B(D_B(E_B(\varphi))) \equiv_M E_B(\varphi)$
\item $D_B(E_B(D_B(\varphi))) \equiv_M D_B(\varphi)$
\item $E_B(\varphi) \equiv_M \bigvee \{\psi \mid D_B(\psi) \preceq_M \varphi\}$
\item $D_B(\varphi) \equiv_M \bigwedge \{\psi \mid \varphi \preceq_M E_B(\psi)\}$
\end{itemize}
\end{corollary}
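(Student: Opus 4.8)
The plan is to derive every item of the corollary purely from the two structural facts already established for the pair $(D_B, E_B)$ on the lattice $(Sen(\Sigma)_{/_{\equiv_M}}, \preceq_M)$: the \emph{monotonicity} of both operators and the \emph{adjunction} equivalence $\varphi \preceq_M E_B(\psi) \Leftrightarrow D_B(\varphi) \preceq_M \psi$. These are exactly the ingredients of a Galois connection between erosion and dilation, and the eight statements are its standard consequences; no further reference to the structuring element $B$ is needed. I would begin with the two boundary identities: since $D_B$ and $E_B$ were shown to be an algebraic dilation and erosion, $D_B$ commutes with suprema and $E_B$ with infima, so $D_B$ preserves the empty supremum $\bot$ and $E_B$ the empty infimum $\top$, giving $E_B(\top) \equiv_M \top$ and $D_B(\bot) \equiv_M \bot$ immediately.

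Next I would establish the two unit inequalities by instantiating the adjunction at reflexive instances. For $\varphi \preceq_M E_B(D_B(\varphi))$, I set $\psi := D_B(\varphi)$ in the adjunction: its right-hand side $D_B(\varphi) \preceq_M D_B(\varphi)$ holds by reflexivity, hence the equivalent left-hand side $\varphi \preceq_M E_B(D_B(\varphi))$ holds. Dually, for $D_B(E_B(\varphi)) \preceq_M \varphi$ I apply the adjunction with the left argument $E_B(\varphi)$ and the right argument $\varphi$: the left-hand side $E_B(\varphi) \preceq_M E_B(\varphi)$ holds by reflexivity, so its equivalent $D_B(E_B(\varphi)) \preceq_M \varphi$ follows.

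The two idempotency-type identities then reduce to combining these units with monotonicity and the antisymmetry of $\preceq_M$. For $E_B(D_B(E_B(\varphi))) \equiv_M E_B(\varphi)$, applying the monotone $E_B$ to $D_B(E_B(\varphi)) \preceq_M \varphi$ yields $E_B(D_B(E_B(\varphi))) \preceq_M E_B(\varphi)$, while instantiating the unit $\psi \preceq_M E_B(D_B(\psi))$ at $\psi := E_B(\varphi)$ yields the reverse inequality; antisymmetry closes it. The identity $D_B(E_B(D_B(\varphi))) \equiv_M D_B(\varphi)$ is symmetric: apply the monotone $D_B$ to $\varphi \preceq_M E_B(D_B(\varphi))$, and instantiate the unit $D_B(E_B(\psi)) \preceq_M \psi$ at $\psi := D_B(\varphi)$.

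Finally, for the two variational characterizations I would exhibit $E_B(\varphi)$ (resp.\ $D_B(\varphi)$) as the greatest (resp.\ least) element of the relevant set. By the adjunction, $D_B(\psi) \preceq_M \varphi$ is equivalent to $\psi \preceq_M E_B(\varphi)$, so $E_B(\varphi)$ is an upper bound of $\{\psi \mid D_B(\psi) \preceq_M \varphi\}$, and it lies in that set because $D_B(E_B(\varphi)) \preceq_M \varphi$; being a member that bounds the set from above, it is its supremum. The infimum characterization of $D_B(\varphi)$ is the exact dual. The one point I would treat with care rather than compute is that the quotient lattice $(Sen(\Sigma)_{/_{\equiv_M}}, \preceq_M)$ need not be complete, since not every subset of $\sem{M}_\Sigma$ is the state set of some formula; the argument deliberately sidesteps this by showing the proposed value is \emph{attained} inside the set as a maximum (resp.\ minimum), so the supremum (resp.\ infimum) exists and equals it without invoking any completeness hypothesis.
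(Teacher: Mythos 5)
Your proof is correct and takes essentially the same route as the paper: the paper derives Corollary~\ref{preservation for states} without a written-out argument, appealing to ``standard results of mathematical morphology and the adjunction property,'' and your text is precisely that standard Galois-connection derivation (units from reflexive instances of the adjunction, idempotency from units plus monotonicity and antisymmetry of $\preceq_M$, boundary identities from commutation with suprema/infima, and the variational formulas from membership plus bounding). Your added care in exhibiting $E_B(\varphi)$ and $D_B(\varphi)$ as an attained maximum and minimum, so that no completeness of $(Sen(\Sigma)_{/_{\equiv_M}},\preceq_M)$ is needed, is a precision the paper leaves implicit and is consistent with its restriction to finite sets of formulas.
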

It follows that $E_B D_B$ (closing) and $D_B E_B$ (opening) are morphological filters (i.e. increasing and idempotent operators). Moreover, closing and opening are dual (i.e. $D_B(E_B(\varphi)) \equiv_M \neg E_{\check{B}}(D_{\check{B}}(\neg \varphi))$.

\begin{theorem}
\label{fundamental theorem}
The following properties are satisfied by dilation and erosion of formulas. Note that now properties are expressed independently of a model $M$.

\begin{enumerate}
\item $E_B(\top) \equiv \top$ and $D_B(\bot) \equiv \bot$.
\item $\varphi \models E_B(\varphi)$. 
\item If for every model $M \in Mod(\Sigma)$ and every $\eta \in \sem{M}_\Sigma$, $\eta \in B_\eta$, then $\varphi \models D_B(\varphi)$ and $E_B(\varphi) \models \varphi$.
\item$D_B(\varphi \vee \psi) \equiv D_B(\varphi) \vee D_B(\psi)$ and  $E_B(\varphi \wedge \psi) \equiv E_B(\varphi) \wedge E_B(\psi)$. Moreover, we have: $D_B(\varphi \wedge \psi) \models D_B(\varphi) \wedge D_B(\psi)$ and $E_B(\varphi) \vee E_B(\psi) \models E_B(\varphi \vee \psi)$.
\item $E_B(\varphi) \equiv \neg D_{\check{B}}(\neg \varphi)$, or dually $D_B(\varphi) \equiv \neg E_{\check{B}}(\neg \varphi)$.
\item If the stratified institution has implication, then 
\begin{enumerate}
\item $E_B(\varphi \Rightarrow \psi) \models E_B(\varphi) \Rightarrow E_B(\psi)$, 
\item $(E_B(\varphi) \Rightarrow D_B(\varphi)) \equiv \top$ if for every $M \in Mod(\Sigma)$ and every $\eta \in \sem{M}_\Sigma$, $B_\eta  \cap \check{B}_\eta \neq \emptyset$,
\item $(D_B(E_B(\varphi)) \Rightarrow \varphi) \equiv (\varphi \Rightarrow E_B(D_B(\varphi))) \equiv \top$.
\end{enumerate}
\end{enumerate}
\end{theorem}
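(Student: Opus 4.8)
The plan is to reduce every clause to the per-model statements already collected in Section~\ref{properties} and Corollary~\ref{preservation for states}, via a single \emph{transfer lemma} bridging the pointwise relations $\preceq_M, \equiv_M$ and the global relations $\models, \equiv$. First I would record the lemma: if $\varphi \preceq_M \psi$ holds for every $M \in Mod(\Sigma)$, then $\varphi \models \psi$; and if $\varphi \equiv_M \psi$ for every $M$, then $\varphi \equiv \psi$. The proof is immediate from the definitions, since $M \models_\Sigma \varphi$ means $\sem{M}_\Sigma(\varphi) = \sem{M}_\Sigma$: if moreover $\sem{M}_\Sigma(\varphi) \subseteq \sem{M}_\Sigma(\psi)$ then $\sem{M}_\Sigma(\psi) = \sem{M}_\Sigma$, i.e.\ $M \models_\Sigma \psi$, and quantifying over $M$ gives $Mod(\varphi) \subseteq Mod(\psi)$. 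I would also record the companion characterization, valid whenever implication is available: $(\varphi \Rightarrow \psi) \equiv \top$ iff $\varphi \preceq_M \psi$ for every $M$, because $M \models_\Sigma (\varphi \Rightarrow \psi)$ unfolds to $\sem{M}_\Sigma(\varphi) \subseteq \sem{M}_\Sigma(\psi)$. Note the transfer runs only pointwise-to-global, which is all that is needed.

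Most clauses then follow by transfer. Clause (1) transfers $E_B(\top) \equiv_M \top$ and $D_B(\bot) \equiv_M \bot$ from Corollary~\ref{preservation for states}; clause (3) transfers extensivity and anti-extensivity from Section~\ref{properties}; clause (5) transfers the per-model duality. The two equivalences of clause (4) transfer the commutativity of $D_B, E_B$ with $\vee, \wedge$, while its two entailments transfer the monotonicity instances $D_B(\varphi \wedge \psi) \preceq_M D_B(\varphi) \wedge D_B(\psi)$ and $E_B(\varphi) \vee E_B(\psi) \preceq_M E_B(\varphi \vee \psi)$, themselves immediate from $\varphi \wedge \psi \preceq_M \varphi$, $\varphi \wedge \psi \preceq_M \psi$ and $\varphi \preceq_M \varphi \vee \psi$, $\psi \preceq_M \varphi \vee \psi$ together with monotonicity. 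In clause (6), part (c) is exactly the implication characterization applied to the Corollary identities $D_B(E_B(\varphi)) \preceq_M \varphi$ and $\varphi \preceq_M E_B(D_B(\varphi))$, and part (b) reduces to proving $E_B(\varphi) \preceq_M D_B(\varphi)$ for every $M$.

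The clauses needing a fresh argument are (2), (6a), and (6b). For (2) I would argue globally and directly, \emph{not} through $\preceq_M$: if $M \models_\Sigma \varphi$ then every state satisfies $\varphi$, so $B_\eta \subseteq \sem{M}_\Sigma = \sem{M}_\Sigma(\varphi)$ for every $\eta$, whence $M \models^\eta_\Sigma E_B(\varphi)$ for all $\eta$ and $M \models_\Sigma E_B(\varphi)$; this is worth flagging, since pointwise one only has the opposite inclusion, anti-extensivity $E_B(\varphi) \preceq_M \varphi$. For (6b), using $B_\eta \cap \check{B}_\eta \neq \emptyset$ I would pick $\zeta$ in that intersection: if $B_\eta \subseteq \sem{M}_\Sigma(\varphi)$ then $\zeta \in \sem{M}_\Sigma(\varphi) \cap \check{B}_\eta$, so $\check{B}_\eta \cap \sem{M}_\Sigma(\varphi) \neq \emptyset$, giving $E_B(\varphi) \preceq_M D_B(\varphi)$, then transfer. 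For (6a) I would verify $E_B(\varphi \Rightarrow \psi) \preceq_M E_B(\varphi) \Rightarrow E_B(\psi)$ pointwise: assuming $B_\eta \subseteq \sem{M}_\Sigma(\varphi \Rightarrow \psi)$, if also $B_\eta \subseteq \sem{M}_\Sigma(\varphi)$ then every $\eta' \in B_\eta$ satisfies $\psi$, so $B_\eta \subseteq \sem{M}_\Sigma(\psi)$, which places $\eta$ in $\sem{M}_\Sigma(E_B(\varphi) \Rightarrow E_B(\psi))$; then transfer. The main obstacle is thus concentrated in part~(6): keeping straight the difference between the global relation $\models$ and the pointwise relation $\preceq_M$, and checking that the unfolding of $\Rightarrow$ as $(\sem{M}_\Sigma \setminus \sem{M}_\Sigma(\varphi_1)) \cup \sem{M}_\Sigma(\varphi_2)$ interacts correctly with the set-level erosion inequalities; everything else is a mechanical transfer of results already proved.
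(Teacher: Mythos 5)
Your proposal is correct, and it is organized differently from the paper's proof. The paper argues each clause directly at the global level: it unfolds $M \models_\Sigma \cdot$ into a statement about all states and reruns the state-level computation from scratch for items 2--5 and 6(a)--(b), citing the per-model Corollary~\ref{preservation for states} only for 6(c). You instead factor everything through the already-established per-model facts of Section~\ref{properties} and Corollary~\ref{preservation for states}, via an explicit transfer lemma ($\varphi \preceq_M \psi$ for all $M$ implies $\varphi \models \psi$, and likewise for $\equiv_M$ versus $\equiv$) together with the characterization $(\varphi \Rightarrow \psi) \equiv \top$ iff $\varphi \preceq_M \psi$ for all $M$. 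Both transfer statements are correct for exactly the reason you give, and your decomposition buys modularity: clauses (1), (3), (4), (5), 6(c) become one-line citations, and you correctly isolate the only genuinely non-transferable clause, namely (2), where $\varphi \models E_B(\varphi)$ holds globally even though $\varphi \preceq_M E_B(\varphi)$ fails pointwise --- your direct argument there ($\sem{M}_\Sigma(\varphi) = \sem{M}_\Sigma$ forces $B_\eta \subseteq \sem{M}_\Sigma(\varphi)$ for every $\eta$) is precisely the paper's proof of item 2, but you flag the asymmetry explicitly, which the paper does not. Your pointwise verifications of 6(a) and 6(b) are also sound (in 6(a) your per-model inequality $E_B(\varphi \Rightarrow \psi) \preceq_M E_B(\varphi) \Rightarrow E_B(\psi)$ is in fact slightly stronger than the paper's global entailment, and in 6(b) you correctly use $\check{B}_\eta$ in the dilation, matching the paper's choice of a witness $\eta' \in B_\eta \cap \check{B}_\eta$). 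What the paper's route buys in exchange is self-containedness --- each item can be read without the earlier per-model inventory --- but mathematically the two proofs have the same content, and yours makes the logical dependence on Section~\ref{properties} and the one-directional pointwise-to-global bridge transparent.
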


\begin{proof}
\begin{enumerate}
\item These first two properties are obvious to check.
\item Let $M \models_\Sigma \varphi$. Let $\eta \in \sem{M}_\Sigma$ and let $\eta' \in B_\eta$. By hypothesis, $M \models^{\eta'}_\Sigma \varphi$, and then $M \models^\eta_\Sigma E_B(\varphi)$.
\item Let $M \models_\Sigma \varphi$. Let $\eta \in \sem{M}_\Sigma$. As $\eta \in B_\eta$, we directly deduce that $M \models_\Sigma D_B(\varphi)$. \\ Let $M \models_\Sigma E_B(\varphi)$. Let $\eta \in \sem{M}_\Sigma$. As $\eta \in B_\eta$, by hypothesis we have that $M \models^\eta_\Sigma \varphi$ whence we can conclude. 
\item Let $M \in Mod(D_B(\varphi \vee \psi))$. This means that for every $\eta \in \sem{M}_\Sigma$, there exists $\eta' \in B_\eta$ such that $M \models^{\eta'}_\Sigma (\varphi \vee \psi)$, and then $M \models^{\eta'}_\Sigma \varphi$ or $M \models^{\eta'}_\Sigma \psi$. From this, we can directly conclude that $M \models^\eta_\Sigma D_B(\varphi)$ or $M \models^\eta_\Sigma D_B(\psi)$, i.e. $M \models^\eta_\Sigma D_B(\varphi) \vee D_B(\psi)$. \\ Let $M \in Mod(D_B(\varphi) \vee D_B(\psi))$. This means that for every $\eta \in \sem{M}_\Sigma$, there exists $\eta' \in B_\eta$ such that $M \models^{\eta'}_\Sigma \varphi$ or $M \models^{\eta'}_\Sigma \psi$, and then $M \models^{\eta'}_\Sigma (\varphi \vee \psi)$. From this, we can directly conclude that $M \models^\eta_\Sigma D_B(\varphi \vee \psi)$. \\ The proof to show that $Mod(E_B(\varphi \wedge \psi)) = Mod(E_B(\varphi) \wedge E_B(\psi))$ is (relatively) similar. \\ 
Let $M \models_\Sigma D_B(\varphi \wedge \psi)$. Let $\eta \in \sem{M}_\Sigma$. By hypothesis, there exists $\eta' \in B_\eta$ such that $M \models^{\eta'}_\Sigma (\varphi \wedge \psi)$, and then $M \models^{\eta'}_\Sigma \varphi$ and $M \models^{\eta'}_\Sigma \psi$. Therefore, we can write that $M \models^\eta_\Sigma D_B(\varphi)$ and $M \models^\eta_\Sigma D_B(\psi)$, whence we directly have that $M \models^\eta_\Sigma (D_B(\varphi) \wedge D_B(\psi))$. The converse inequality if however not true and examples where the inequality is strict can be exhibited, similar to the ones in the set theoretical setting.\\ 
Let $M \models_\Sigma E_B(\varphi) \vee E_B(\psi)$. Let $\eta \in \sem{M}_\Sigma$ and let $\eta' \in B_\eta$. By hypothesis, we necessarily have that  $M \models^{\eta'}_\Sigma \varphi$ or $M \models^{\eta'}_\Sigma \psi$. Otherwise, we would have neither $M \models^\eta_\Sigma E_B(\varphi)$ nor $M \models^\eta_\Sigma E_B(\psi)$ which would be a contradiction. Hence $M \models^{\eta'}_\Sigma \varphi \vee \psi$, and $M \models^\eta_\Sigma E_B(\varphi \vee \psi)$. Again counter-examples showing that the converse is not true are easy to exhibit.
\item 
$\begin{array}[t]{ll}
M \models_\Sigma E_B(\varphi) & \Leftrightarrow \forall \eta \in \sem{M}_\Sigma, M \models^\eta_\Sigma E_B(\varphi) \\
                                                    & \Leftrightarrow \forall \eta \in \sem{M}_\Sigma, \forall \eta' \in B_\eta, M \models^{\eta'}_\Sigma \varphi \\
                                                    & \Leftrightarrow \forall \eta \in \sem{M}_\Sigma, \forall \eta' \in B_\eta, M {\not \models^{\eta'}_\Sigma} \neg \varphi \\
                                                    & \Leftrightarrow \forall \eta \in \sem{M}_\Sigma, M {\not \models^\eta_\Sigma} D_{\check{B}}(\neg \varphi) \\
                                                    & \Leftrightarrow \forall \eta \in \sem{M}_\Sigma, M \models^\eta_\Sigma \neg D_{\check{B}}(\neg \varphi) \\
\end{array}$
\item 
\begin{enumerate}
\item Let $M \models_\Sigma E_B(\varphi \Rightarrow \psi)$. Let $\eta \in \sem{M}_\Sigma$ such that $M \models^\eta_\Sigma E_B(\varphi)$. Let $\eta' \in B_\eta$. By hypothesis, $M \models^{\eta'}_\Sigma \varphi$, and then, as $M \models_\Sigma E_B(\varphi \Rightarrow \psi)$, we also have that $M \models^{\eta'}_\Sigma (\varphi \Rightarrow \psi)$, and $M \models^{\eta'}_\Sigma \psi$. 
\item Let $\eta\in \sem{M}_\Sigma$ such that $M \models^\eta_\Sigma E_B(\varphi)$. Let $\eta' \in B_\eta \cap \check{B}_\eta$ (by hypothesis this intersection is not empty). Then we have that $M \models^{\eta'}_\Sigma \varphi$ since $\eta' \in B_\eta$, and then $M \models^\eta_\Sigma D_B(\varphi)$ since $\eta' \in \check{B}_\eta$.
\item These properties come from the extensivity of closing and from the extensivity of opening, which hold for $\preceq_M$ (see Corollary~\ref{preservation for states}).
\end{enumerate}
\end{enumerate}
\end{proof}

\subsection{Dual logical operators as algebraic dilation and erosion} 
\label{dual logical operators as algebraic dilation and erosion}

In this section, we provide an algebraic view of dual dilation and erosion, without referring to any structuring element over the set $\sem{M}_\Sigma$.
%It may be that we cannot define a structuring element over the set $\sem{M}_\Sigma$ characterizing the behavior of the dual logical operators. Defining algebraic erosion and dilation on lattices allows precisely us not to refer to any structuring element. In this case, 
%We define dual logical operators $E$ and $D$  as algebraic erosion and dilation on the bounded lattice $(Sen(\Sigma)_{/_{\equiv_M}},\preceq_M)$.

\subsubsection{Definition}

\begin{definition}[Algebraic erosion and dilation]
Let $E$ and $D$ be two dual logical operators for $\mathcal{I}$, i.e. $E$ and $D$ satisfy the equation:
$$\forall M \in Mod(\Sigma), \forall \varphi \in Sen(\Sigma), E(\varphi) \equiv_M \neg D(\neg \varphi)$$

We will say that $E$ and $D$ are {\bf algebraic erosion and dilation} if they satisfy the two following equations: $\forall M \in Mod(\Sigma), \forall \varphi_1,\varphi_2 \in Sen(\Sigma)$,

\begin{enumerate}
\item $D(\varphi_1 \vee \varphi_2) \equiv_M D(\varphi_1) \vee D(\varphi_2)$;
\item $E(\varphi_1 \wedge \varphi_2) \equiv_M E(\varphi_1) \wedge E(\varphi_2)$.
\end{enumerate}
\end{definition}

By standard results of mathematical morphology, we then have the following properties:
\begin{proposition}
\mbox{}
\begin{itemize}
\item {\bf Monotonicity of $D$:} if $\varphi \preceq_M \psi$, then $D(\varphi) \preceq_M D(\psi)$;
\item {\bf Preservation of $\bot$ by $D$:} $D(\bot) \equiv_M \bot$;
\item {\bf Monotonicity of $E$:} if $\varphi \preceq_M \psi$, then $E(\varphi) \preceq_M E(\psi)$;
\item {\bf Preservation of $\top$ by $E$:} $E(\top) \equiv_M \top$;
\end{itemize}
\end{proposition}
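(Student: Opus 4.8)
The plan is to derive all four properties from the two defining equations of algebraic dilation and erosion, using the standard lattice-theoretic rephrasing of the ordering $\preceq_M$ in terms of the connectives $\vee$ and $\wedge$. The key elementary fact, immediate from the definitions in Section~\ref{lattice of formulas}, is that for all $\varphi,\psi \in Sen(\Sigma)$ one has $\varphi \preceq_M \psi$ if and only if $\varphi \vee \psi \equiv_M \psi$, and equivalently if and only if $\varphi \wedge \psi \equiv_M \varphi$; this simply reformulates $\sem{M}_\Sigma(\varphi) \subseteq \sem{M}_\Sigma(\psi)$ as an equality of unions, respectively intersections, of sets of states.

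First I would prove monotonicity of $D$. Assuming $\varphi \preceq_M \psi$, rewrite $\psi \equiv_M \varphi \vee \psi$ and apply the commutativity of $D$ with disjunction to obtain $D(\psi) \equiv_M D(\varphi \vee \psi) \equiv_M D(\varphi) \vee D(\psi)$; since the right-hand side dominates $D(\varphi)$, this yields $D(\varphi) \preceq_M D(\psi)$. Monotonicity of $E$ is symmetric: from $\varphi \wedge \psi \equiv_M \varphi$ and commutativity of $E$ with conjunction, $E(\varphi) \equiv_M E(\varphi) \wedge E(\psi)$, whence $E(\varphi) \preceq_M E(\psi)$. Alternatively, monotonicity of $E$ can be recovered from that of $D$ through the duality $E(\varphi) \equiv_M \neg D(\neg\varphi)$, noting that negation is order-reversing because $\sem{M}_\Sigma(\neg\varphi) = \sem{M}_\Sigma \setminus \sem{M}_\Sigma(\varphi)$.

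For the preservation of extremal elements, the cleanest route is to read ``commutes with supremum'' (respectively infimum) as including the empty family, as is standard in mathematical morphology and as recalled in the abstract definition at the start of Section~\ref{morphological dilations and erosions of formulas based on structuring elements}: the empty disjunction is $\bot$ and the empty conjunction is $\top$, so $D(\bot) \equiv_M \bot$ and $E(\top) \equiv_M \top$ are precisely the nullary instances of the two defining equations. Under the duality these two facts are moreover interchangeable, since $E(\top) \equiv_M \neg D(\neg\top) \equiv_M \neg D(\bot)$, so that $D(\bot) \equiv_M \bot$ immediately gives $E(\top) \equiv_M \neg\bot \equiv_M \top$.

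I expect the preservation of $\bot$ and $\top$ to be the only delicate point. It is worth stressing that it does \emph{not} follow from the binary equations alone: on the two-element lattice the map sending every element to the top element commutes with binary disjunction, yet fails $D(\bot) \equiv_M \bot$. The argument therefore genuinely rests on the morphological convention that an algebraic dilation commutes with arbitrary, in particular empty, suprema; once this is granted, all four statements are one-line consequences, which is why they are cited as standard results of mathematical morphology.
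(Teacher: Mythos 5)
Your proof is correct, and it matches the paper's (implicit) argument: the paper gives no explicit proof of this proposition, simply invoking ``standard results of mathematical morphology'', namely the facts recalled at the start of Section~\ref{morphological dilations and erosions of formulas based on structuring elements} that an operator commuting with suprema is increasing and preserves $\bot$, and dually for infima and $\top$; your derivation of monotonicity via the reformulation $\varphi \preceq_M \psi \Leftrightarrow \varphi \vee \psi \equiv_M \psi$ is exactly that standard argument spelled out on the lattice $(Sen(\Sigma)_{/_{\equiv_M}},\preceq_M)$. Where you genuinely add something is on the preservation properties: you correctly observe that the definition in Section~\ref{dual logical operators as algebraic dilation and erosion} states only the \emph{binary} commutation equations, and these alone do not entail $D(\bot) \equiv_M \bot$ or $E(\top) \equiv_M \top$ --- your constant-map counterexample is decisive, and it is worth noting that it even survives the duality requirement built into the definition, since $D \equiv \top$ paired with $E \equiv \bot$ satisfies $E(\varphi) \equiv_M \neg D(\neg\varphi)$ together with both binary equations. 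So the preservation claims really do rest on the morphological convention that commutation covers the empty family (empty join $= \bot$, empty meet $= \top$), consistent with the paper's abstract definition of algebraic dilation and erosion; making that convention explicit, as you do, repairs a genuine imprecision that the paper papers over with its citation to standard results. One tacit hypothesis you share with the paper: your step $D(\psi) \equiv_M D(\varphi \vee \psi)$ presupposes that $D$ and $E$ are compatible with $\equiv_M$, i.e.\ descend to well-defined operators on the quotient lattice; the paper assumes this silently too (it is forced as soon as one reads $D$ and $E$ as operators ``over $(Sen(\Sigma)_{/_{\equiv_M}},\preceq_M)$''), so this is not a defect of your proof relative to the paper's, but it would be worth one clarifying sentence in a self-contained write-up.
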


Unlike dilation and erosion defined through structuring elements, the dual logical operators $E$ and $D$ defined as algebraic erosion and dilation do not form necessarily an adjunction (see Section~\ref{modalities for topos-models} for an example) which is expressed, when it holds, as follows:
$$\forall \varphi,\psi \in Sen(\Sigma), D(\varphi) \preceq_M \psi \Longleftrightarrow \varphi \preceq_M E(\psi)$$

When adjunction holds between $E$ and $D$, by standard results in mathematical morphology, the following properties are satisfied:

\begin{itemize}
\item $\varphi \preceq_M E(D(\varphi))$ (extensivity of $ED$); 
\item $D(E(\varphi)) \preceq_M \varphi$ (anti-extensivity of $DE$); 
\item $E(D(E(\varphi))) \equiv_M E(\varphi)$;
\item $D(E(D(\varphi))) \equiv_M D(\varphi)$;
\item $E(D(E(D(\varphi)))) \equiv_M E(D(\varphi))$;
\item $D(E(D(E(\varphi)))) \equiv_M D(E(\varphi))$.
\end{itemize} 

Some properties are preserved independently of a model $M$.

%\Isa{Ne faudrait-il pas donner ici l'equivalent de la def 5.4 (avec la dualite ?) sans faire reference aux $M$ ? Sinon ce qui suit n'est pas bien defini je crois.}

%\Marc{Je ne comprends pas bien tes questions. J'ai l'impression que nous les obtenons directement \`a partir de la D\'ef. 5.4 et de la propri\'et\'e simple \`a d\'emontrer $(\forall M \in Mod(\Sigma),\varphi \equiv_M \psi) \Longrightarrow \varphi \equiv \psi$.}

\begin{theorem}
The following properties are satisfied by dilation and erosion of formulas: 
\begin{itemize}
\item {\bf Duality:} $D(\varphi) \equiv \neg E(\neg \varphi)$. %\Isa{enlever si c'est dans la def, ou alors si c'est equivalent \`a $\forall M, E(\varphi) \equiv_M \neg D(\neg \varphi)$}
\item {\bf Commutativity:} $D(\varphi_1 \vee \varphi_2) \equiv D(\varphi_1) \vee D(\varphi_2)$ and $E(\varphi_1 \wedge \varphi_2) \equiv E(\varphi_1) \wedge E(\varphi_2)$.
\item {\bf Monotonicity:} if $\varphi \models \psi$, then $D(\varphi) \models D(\psi)$ and $E(\varphi) \models E(\psi)$.
\item {\bf Preservation:} $D(\bot) \equiv \bot$ and $E(\top) \equiv \top$.
\end{itemize}
\end{theorem}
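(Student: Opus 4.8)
The plan is to read all four statements as the \emph{model-independent shadows} of facts that are already available at the level of individual models, and to pass from the state-indexed relations $\equiv_M$ and $\preceq_M$ to the global relations $\equiv$ and $\models$ by one elementary observation. Recall that $M \models_\Sigma \alpha$ holds iff $\sem{M}_\Sigma(\alpha) = \sem{M}_\Sigma$, so that $Mod(\alpha) = \{M \mid \sem{M}_\Sigma(\alpha) = \sem{M}_\Sigma\}$. From this one reads off the lifting principle I would isolate first: if $\alpha \equiv_M \beta$ for every $M \in Mod(\Sigma)$ then $\alpha \equiv \beta$, and if $\alpha \preceq_M \beta$ for every $M$ then $\alpha \models \beta$. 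Indeed, $\sem{M}_\Sigma(\alpha) = \sem{M}_\Sigma(\beta)$ makes the conditions $\sem{M}_\Sigma(\alpha) = \sem{M}_\Sigma$ and $\sem{M}_\Sigma(\beta) = \sem{M}_\Sigma$ equivalent, giving $Mod(\alpha) = Mod(\beta)$; and $\sem{M}_\Sigma(\alpha) \subseteq \sem{M}_\Sigma(\beta)$ forces $\sem{M}_\Sigma(\alpha) = \sem{M}_\Sigma \Rightarrow \sem{M}_\Sigma(\beta) = \sem{M}_\Sigma$, giving $Mod(\alpha) \subseteq Mod(\beta)$. The theorem then reduces to the corresponding $\equiv_M$ and $\preceq_M$ statements, which are exactly the properties recorded in the definition of algebraic erosion and dilation and in the Proposition immediately preceding the theorem.

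For \textbf{Duality} I would start from the defining equation $E(\chi) \equiv_M \neg D(\neg\chi)$ instantiated at $\chi = \neg\varphi$. Since complementation on $\sem{M}_\Sigma$ is involutive we have $\neg\neg\varphi \equiv_M \varphi$, and because $D$ and $\neg$ are well defined on the lattice $(Sen(\Sigma)_{/_{\equiv_M}},\preceq_M)$ this gives $E(\neg\varphi) \equiv_M \neg D(\varphi)$, hence $\neg E(\neg\varphi) \equiv_M D(\varphi)$ for every $M$; the lifting principle yields $D(\varphi) \equiv \neg E(\neg\varphi)$. For \textbf{Commutativity} and \textbf{Preservation} there is nothing to manipulate: the equalities $D(\varphi_1 \vee \varphi_2) \equiv_M D(\varphi_1) \vee D(\varphi_2)$ and $E(\varphi_1 \wedge \varphi_2) \equiv_M E(\varphi_1) \wedge E(\varphi_2)$ hold for every $M$ by the definition of algebraic dilation and erosion, while $D(\bot) \equiv_M \bot$ and $E(\top) \equiv_M \top$ hold for every $M$ by the preceding Proposition, so applying the lifting principle termwise turns each $\equiv_M$ into $\equiv$.

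The remaining clause, \textbf{Monotonicity}, is the delicate one and is where I expect the real work to lie. Its state-level counterpart is available: for each fixed $M$, $\varphi \preceq_M \psi$ implies $D(\varphi) \preceq_M D(\psi)$ and $E(\varphi) \preceq_M E(\psi)$ (equivalently, this follows from commutativity, since $\varphi \preceq_M \psi$ means $\varphi \vee \psi \equiv_M \psi$, resp. $\varphi \wedge \psi \equiv_M \varphi$). The obstacle is that the hypothesis here is the \emph{global} entailment $\varphi \models \psi$, i.e. $Mod(\varphi) \subseteq Mod(\psi)$, and this does \emph{not} descend to the local inclusions $\varphi \preceq_M \psi$: knowing that every model satisfying $\varphi$ at all its states also satisfies $\psi$ everywhere says nothing about a model $M$ at which $\varphi$ fails at some state, yet it is exactly the behaviour of $D$ and $E$ on such partial state sets that controls $Mod(D\varphi)$ and $Mod(E\varphi)$. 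Thus the naive route, lift the hypothesis to $\preceq_M$ and apply state-level monotonicity, breaks at its first step, and the parallel attempt through the already-established global commutativity fails too, because $\varphi \models \psi$ gives only $\varphi \wedge \psi \equiv \varphi$ and not $\varphi \wedge \psi \equiv_M \varphi$, so it cannot be fed into $E$. The statement that does go through cleanly is the one under the stronger hypothesis that $\varphi \Rightarrow \psi$ is valid, equivalently $\varphi \preceq_M \psi$ for every $M$, where state-level monotonicity and the lifting principle apply verbatim. I therefore expect the crux of the proof to be either a strengthening of the monotonicity hypothesis to this validity form or an additional structural assumption on $D$ and $E$; determining precisely which is needed is the main obstacle.
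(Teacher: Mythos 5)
Your handling of \textbf{Duality}, \textbf{Commutativity} and \textbf{Preservation} coincides exactly with the paper's: its entire justification for those three clauses is the lifting principle you isolate, namely $(\forall M \in Mod(\Sigma),\ \varphi \equiv_M \psi) \Longrightarrow \varphi \equiv \psi$, applied to the $\equiv_M$-identities coming from the definition of algebraic erosion and dilation and the preceding proposition. Where you and the paper part ways is \textbf{Monotonicity}. The paper takes neither of the two routes you consider and reject; instead it restricts attention to $M \in Mod(\varphi)$: for such an $M$ the global hypothesis $\varphi \models \psi$ gives $M \models_\Sigma \psi$ as well, hence $\sem{M}_\Sigma(\varphi) = \sem{M}_\Sigma(\psi) = \sem{M}_\Sigma$, i.e. $\varphi \equiv_M \psi$ outright (both are the top class of the lattice), after which $\equiv_M$-congruence of $D$ yields $D(\varphi) \equiv_M D(\psi)$ and the paper concludes $M \in Mod(D(\psi))$. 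This per-model observation --- global entailment does descend locally, but only on models of $\varphi$, where it degenerates to equivalence with $\top$ --- is the move your proposal did not find.

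You should, however, find the outcome vindicating, because the paper's argument fails at exactly the point your analysis flags. To prove $D(\varphi) \models D(\psi)$ one must start from an arbitrary $M \in Mod(D(\varphi))$, on which nothing forces $\varphi \equiv_M \psi$; the paper instead starts from $M \in Mod(\varphi)$, so the inclusion it establishes is at best $Mod(\varphi) \subseteq Mod(D(\psi))$, and even its last step (``for every $\eta$, $M \models^\eta_\Sigma D(\psi)$'') silently uses $M \models_\Sigma D(\varphi)$, which follows neither from $D(\varphi) \equiv_M D(\psi)$ nor from $\varphi \equiv_M \top$, since dilations preserve $\bot$ but not $\top$. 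Indeed the clause is false under the paper's global reading of $\models$: in \textbf{MPL} take $\varphi = p \wedge \Diamond\neg p$ and $\psi = \bot$; every model of $\varphi$ has empty state set (in a nonempty Kripke model, $p$ true at all worlds makes $\Diamond \neg p$ fail everywhere), so $\varphi \models \bot$ holds, yet $\Diamond(p \wedge \Diamond\neg p)$ is globally true in the two-world model with $p$ true exactly at $a$ and $R = \{(a,a),(a,b),(b,a)\}$, whereas $D(\bot) \equiv \bot$ holds in no such model. Your proposed repair --- strengthening the hypothesis to $\varphi \preceq_M \psi$ for every $M$ (validity of $\varphi \Rightarrow \psi$), where state-level monotonicity plus your lifting principle apply verbatim --- is the correct one; for the $E$-half alone the paper's per-model trick can also be salvaged under anti-extensivity of $E$, since then $M \in Mod(E(\varphi))$ forces $M \in Mod(\varphi)$. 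In short: on three clauses you match the paper; on the fourth you did not produce the paper's argument, but the paper's argument does not prove the stated claim, and your diagnosis of where and why it breaks is accurate.
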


\begin{proof}
Duality, commutativity and preservation are direct consequences of the fact that $(\forall M \in Mod(\Sigma),\varphi \equiv_M \psi) \Longrightarrow \varphi \equiv \psi$. To prove monotonicity, let us suppose that $\varphi \models \psi$. Therefore, for every $M \in Mod(\varphi)$ we have that $M \models_\Sigma \varphi$ and $M \models_\Sigma \psi$, and then for every $\eta \in \sem{M}_\Sigma$ we have $M \models^\eta_\Sigma \varphi$ and $M \models^\eta_\Sigma \psi$ i.e. $\sem{M}_\Sigma(\varphi) =  \sem{M}_\Sigma(\psi) = \sem{M}_\Sigma$, whence we conclude $\varphi \equiv_M \psi$. As $D$ is monotonous for $\equiv_M$, we then have that $D(\varphi) \equiv_M D(\psi)$. Hence, for every $\eta \in \sem{M}_\Sigma$, we have that $M \models^\eta_\Sigma D(\psi)$, and then $M \in Mod(D(\psi))$.  The reasoning for $E$ is similar. 
\end{proof}

%Here also all the results established in this section give rise to a set of axioms and inference rules that are sound {\em de facto}. 

\subsubsection{Example: modalities for topos-models}
\label{modalities for topos-models}

When the modalities $\Box$ and $\Diamond$ are interpreted topologically, they cannot be expressed as erosion and dilation based on a structuring element. The reason is the heterogeneity of elements used to express $M \models^\eta_\Sigma \Box \varphi$ where we quantify existentially over open sets and universally over elements in open sets. We might be tempted to define the modality $\Box$ by an erosion $E_B$ followed by a dilation $D_B$ (i.e. a morphological opening) where $B$ would be the structuring element defined as: $\forall \eta \in \sem{M}_\Sigma, B_\eta = \bigcup \{O \in \tau \mid \eta \in O\}$ where $M = (X,\tau,\nu)$ is a topos-model. The problem is that in this case we would quantify universally on open sets and not existentially. However, we have seen that $\sem{M}_\Sigma(\Box \varphi)$ and $\sem{M}_\Sigma(\Diamond \varphi)$ define topological interior and closure of  $\sem{M}_\Sigma(\varphi)$. It is well known that interior and closure commute with intersection and union, respectively. Moreover, they are dual operators.
%one of the other. 
Hence, $\Box$ and $\Diamond$ are algebraic erosion and dilation, respectively. Finally, $\Box$ is anti-extensive (and dually $\Diamond$ is extensive) for $\preceq_M$. Indeed, let $\eta \in \sem{M}_\Sigma(\Box \varphi)$ be a sate. This means that there exists an open set $O \in \tau$ such that $\eta \in O$ and for every $\eta' \in O$, $\eta' \in \sem{M}_\Sigma(\varphi)$. Hence, we necessarily have that $\eta \in  \sem{M}_\Sigma(\varphi)$.  We can also easily show that $\varphi \equiv \Box \varphi$.\footnote{Let us note that the equivalence $\varphi \equiv E(\varphi)$ is satisfied by all logics for which the satisfaction of formulas of the form $E(\varphi)$ requires that, for all models, the relation between states is reflexive, such as {\bf FOL}, {\bf MPL} with reflexive model, {\bf TMPL} and {\bf MMPL}. On the other hand, we do not have for every $M \in Mod(\Sigma)$ that $\varphi \preceq_M E(\varphi)$.} On the contrary, adjunction does not hold in general except under the (necessary and sufficient) condition that the underlying topology of topos-models satisfies that the closed sets defining formulas are precisely the open sets.

\begin{proposition}
Let $M = (X,\tau,\nu)$ be a topos-model over a signature $\Sigma$. Then, we have: $\forall \varphi,\psi \in Sen(\Sigma), \Diamond \varphi \preceq_M \psi \Longleftrightarrow \varphi \preceq_M \Box \psi$ if and only if for every $\varphi \in Sen(\Sigma)$, $\sem{M}_\Sigma(\varphi)$ is a closed set of $X$ is equivalent to $\sem{M}_\Sigma(\varphi)$ is an open set of $X$.
\end{proposition}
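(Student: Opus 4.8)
The plan is to reduce the adjunction to a purely topological statement about satisfaction sets and then exploit the fact that these sets are themselves definable. Fix the topos-model $M = (X,\tau,\nu)$ and abbreviate $A = \sem{M}_\Sigma(\varphi)$ and $B = \sem{M}_\Sigma(\psi)$. Since $\preceq_M$ is by definition inclusion of satisfaction sets, and since the topological reading of the modalities gives $\sem{M}_\Sigma(\Box\psi) = \mathrm{int}(B)$ and $\sem{M}_\Sigma(\Diamond\varphi) = \mathrm{cl}(A)$, the adjunction $\Diamond\varphi \preceq_M \psi \Leftrightarrow \varphi \preceq_M \Box\psi$ is exactly the claim that
$$\mathrm{cl}(A) \subseteq B \Longleftrightarrow A \subseteq \mathrm{int}(B)$$
holds for all definable sets $A$ and $B$. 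The observation that makes the whole argument go through is that $\mathrm{cl}(A)$ and $\mathrm{int}(B)$ are again satisfaction sets, namely of $\Diamond\varphi$ and $\Box\psi$, so the hypothesis ``closed iff open'' can be applied to them and not merely to $A$ and $B$.

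First I would prove that the condition implies the adjunction, treating the two inclusions separately. For the forward inclusion, assume $\mathrm{cl}(A) \subseteq B$; as $\mathrm{cl}(A)$ is a definable closed set, the hypothesis makes it open, hence an open subset of $B$, so $\mathrm{cl}(A) \subseteq \mathrm{int}(B)$ because the interior is the largest open subset of $B$, and then $A \subseteq \mathrm{cl}(A) \subseteq \mathrm{int}(B)$. For the backward inclusion, assume $A \subseteq \mathrm{int}(B)$; as $\mathrm{int}(B)$ is a definable open set, the hypothesis makes it closed, so taking closures gives $\mathrm{cl}(A) \subseteq \mathrm{cl}(\mathrm{int}(B)) = \mathrm{int}(B) \subseteq B$. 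The only topological facts invoked are monotonicity of closure, that a set is closed iff it equals its closure, and that the interior is the greatest open subset.

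For the converse I would simply specialise the adjunction to $\psi = \varphi$. The left-hand side $\Diamond\varphi \preceq_M \varphi$ reads $\mathrm{cl}(A) \subseteq A$, which (since $A \subseteq \mathrm{cl}(A)$ always) holds exactly when $A$ is closed, while the right-hand side $\varphi \preceq_M \Box\varphi$ reads $A \subseteq \mathrm{int}(A)$, which (since $\mathrm{int}(A) \subseteq A$ always) holds exactly when $A$ is open; hence this instance of the adjunction states precisely that $\sem{M}_\Sigma(\varphi)$ is closed iff it is open, which is the required condition for that $\varphi$, and running it over all $\varphi$ yields the full condition. I do not anticipate a real obstacle here: the argument is short once the translation is in place, and the single point needing care is the bookkeeping that the hypothesis must be invoked for the derived formulas $\Diamond\varphi$ and $\Box\psi$ rather than for $\varphi$ and $\psi$ themselves.
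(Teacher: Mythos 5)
Your proof is correct and follows essentially the same route as the paper's: both directions turn on applying the closed-iff-open hypothesis to the derived satisfaction sets $\sem{M}_\Sigma(\Diamond \varphi) = \mathrm{cl}(\sem{M}_\Sigma(\varphi))$ and $\sem{M}_\Sigma(\Box \psi) = \mathrm{int}(\sem{M}_\Sigma(\psi))$ rather than to $\varphi$ and $\psi$ themselves, and the necessity direction uses the specialization $\psi = \varphi$ just as the paper does. Your one small streamlining is that, since the adjunction instance at $\psi = \varphi$ is itself a biconditional, you get \emph{closed iff open} in a single step, whereas the paper proves only closed $\Rightarrow$ open this way and then appeals to duality on complements (tacitly using negation) for the converse implication.
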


\begin{proof}
$\Longrightarrow$: Let us assume that $\forall \varphi,\psi \in Sen(\Sigma), \Diamond \varphi \preceq_M \psi \Longleftrightarrow \varphi \preceq_M \Box \psi$. Let $\varphi$ be a $\Sigma$-formula such that $\sem{M}_\Sigma(\varphi)$ is a closed set. We then have that $\sem{M}_\Sigma(\Diamond \varphi) = \sem{M}_\Sigma(\varphi)$, and then $\Diamond \varphi \preceq_M \varphi$. By applying the equivalence $\Diamond \varphi \preceq_M \psi \Longleftrightarrow \varphi \preceq_M \Box \psi$ to $\psi = \varphi$, we obtain that $\varphi \preceq_M \Box \varphi$. As $\Box$ is anti-extensive, we can then conclude that $\sem{M}_\Sigma(\varphi) = \sem{M}_\Sigma(\Box \varphi)$, and then $\sem{M}_\Sigma(\varphi)$ is open. Dually, applying this to the complement set allows us to conclude that all open sets of $X$ are closed. 

\medskip
$\Longleftarrow$: Let us assume that the closed sets of $X$ defining a formula are precisely the open sets of $X$. Let $\varphi$ and $\psi$ be two formulas such that $\Diamond \varphi \preceq_M \psi$. 
By monotonicity of $\Box$, we have that $\Box \Diamond \varphi \preceq_M \Box \psi$. Now, by definition of $\Diamond$, $\sem{M}_\Sigma(\Diamond \varphi)$ is open, and then closed by hypothesis. Hence, we have that $\sem{M}_\Sigma(\Box \Diamond \varphi) = \sem{M}_\Sigma(\Diamond \varphi)$. But, as $\Diamond$ is extensive, we have that $\varphi \preceq_M \Diamond \varphi$, whence we can conclude that $\varphi \preceq_M \Box \psi$. \\ Conversely, if $\varphi \preceq_M \Box \psi$, then by monotonicity of $\Diamond$, we have that $\Diamond \varphi \preceq_M \Diamond \Box \psi$. But, $\sem{M}_\Sigma(\Box \psi)$ is open and then by hypothesis closed. Hence, we have that $\sem{M}_\Sigma(\Diamond \Box \psi) = \sem{M}_\Sigma(\Box \psi)$. By anti-extensivity of $\Box$, we can directly conclude that $\Diamond \varphi \preceq_M \psi$. 
\end{proof}

\subsection{A sound and complete entailment system}
\label{about completeness}

In this section, we define the syntactic approach to truth for stratified institutions equipped with dual operators. This consists in establishing consequence relations $\vdash$, called {\em proofs}, between set of formulas and formulas. The syntactic approach of truth is then complementary to the semantic one represented by the semantic consequence $\models$. When we have that $\vdash \subseteq \models$, the syntactic approach is said {\em sound} and when we have the opposite inclusion, it is said {\em complete}. 
To obtain the result of completeness, we need to consider that formulas are built inductively from ``basic" formulas by applying iteratively Boolean connectives and a $I$-indexed family of dual operators $E^i$ and $D^i$ (resp. $E^i_B$ and $D^i_{\check{B}}$ when erosion and dilation are defined based on a structuring element $B$) for $i \in I$. In Sections~\ref{Internal boolean and quantifiers},~\ref{morphological dilations and erosions of formulas based on structuring elements} and~\ref{dual logical operators as algebraic dilation and erosion}, we have already given an abstract definition of Boolean connectives and of dual operators $E^i$ and $D^i$. It remains then to give an abstract definition of basic formulas.

\begin{definition}[Basic formulas]
\label{basic formulas}
A set of formulas $B \subseteq Sen(\Sigma)$ is {\bf basic} if there exists a $\Sigma$-model $M_B \in Mod(\Sigma)$ and a state $\eta \in \sem{M_B}_\Sigma$ such that for every $M \in Mod(\Sigma)$ and every $\eta' \in \sem{M}_\Sigma$, $M \models^{\eta'}_\Sigma B$ if and only if there exists a morphism $\mu_{\eta'} : M_B \to M$ such that $\sem{\mu_{\eta'}}_\Sigma(\eta) = \eta'$. \\ $M_B$ and $\eta$ are called {\bf basic model} and {\bf basic state} for $B$, respectively. 
\end{definition}

The notion of basic formulas has been first defined in~\cite{Dia08,GP10} but in institutions, and then for sentences (i.e. closed formulas). Here, to take into account open formulas, the definition of basic formulas involves states. 

\begin{proposition}
\label{examples of basic sets}
Any set of atomic formulas in {\bf PL}, {\bf FOL}, {\bf MPL}, {\bf TMPL} and {\bf MMPL} is basic. 
\end{proposition}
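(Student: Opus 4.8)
The plan is to handle the five institutions separately, guided by a single observation: in each of them a model morphism only transports \emph{positive} atomic information forward (valuations and predicate memberships are required to be preserved, never reflected), so an atomic formula is exactly a fact that is inherited along morphisms. Hence for a set $B$ of atoms I would build a \emph{minimal} (free) model $M_B$ together with a distinguished state $\eta$ at which precisely the atoms of $B$ hold, arranged so that a morphism $M_B \to M$ carrying $\eta$ to $\eta'$ exists if and only if the target state $\eta'$ already satisfies every atom of $B$. The two directions of the required equivalence then read: ($\Leftarrow$) the witnessing morphism pushes the atoms of $B$, true at $\eta$, onto $\eta'$; ($\Rightarrow$) if $\eta'$ satisfies $B$, then minimality of $M_B$ lets one define a morphism into $M$ sending $\eta \mapsto \eta'$.

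For the four propositional-style cases the model $M_B$ has a single state. In \textbf{PL} take $\nu_B$ with $\nu_B(p) = 1$ iff $p \in B$; since $P$-model morphisms are exactly the instances of the ordering $\leq$, a morphism $\nu_B \to \nu$ exists iff $\nu_B \leq \nu$, i.e.\ iff $\nu(p)=1$ for all $p \in B$, which is $\nu \models B$ (the state functor being trivial, the state condition is automatic). In \textbf{MPL} take $M_B = (\{i_0\}, W_B, \emptyset)$ with $W_B^{i_0} = B$ and empty accessibility; a Kripke homomorphism into $(I,W,R)$ is just a choice $i_0 \mapsto i'$ with $B = W_B^{i_0} \subseteq W^{i'}$ (the $R$-preservation condition being vacuous), and $\sem{\_}_P$ sends it to the same vertex map, so such a morphism with $\sem{h}_P(i_0)=i'$ exists iff $B \subseteq W^{i'}$, i.e.\ iff $(I,W,R)\models^{i'}_P B$. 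In \textbf{TMPL} and \textbf{MMPL} take $M_B$ the one-point space (with its unique topology, resp.\ the trivial metric) and $\nu_B(p) = \{\ast\}$ iff $p \in B$; any map out of a one-point space is continuous, and the morphism condition $h(\nu_B(p)) \subseteq \nu(p)$ reduces for $p \in B$ to $x' \in \nu(p)$, hence a morphism with $\sem{h}(\ast)=x'$ exists iff $x'\in\nu(p)$ for every $p\in B$, i.e.\ iff the model satisfies $B$ at $x'$.

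The \textbf{FOL} case (via the internal stratification $St(\mathbf{Fol})$, where states of a $(S,F,P)$-model are interpretations of the variables $X$) is the genuinely structured one. Here I would take as basic model the term algebra $M_B = T_F(X)$ with predicates interpreted syntactically by $p^{M_B} = \{(t_1,\ldots,t_n) \mid p(t_1,\ldots,t_n) \in B\}$ (quotienting by the congruence generated by any equational atoms of $B$ if equality is present), and as basic state $\eta$ the generic assignment $x \mapsto x$. For any model $M$ and assignment $\eta'$, freeness of $T_F(X)$ yields a unique homomorphism $\mu_{\eta'}:T_F(X)\to M$ extending $\eta'$, namely term evaluation $t \mapsto t^{M,\eta'}$; it is a $\Sigma$-model morphism (which need only \emph{preserve} predicates) precisely when every tuple forced into $p^{M_B}$ by $B$ lands in $p^M$, that is precisely when $M \models^{\eta'} B$. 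Quasi-representability of $\chi:(S,F,P)\hookrightarrow(S,F\coprod X,P)$ then guarantees that $\sem{\mu_{\eta'}}(\eta) = \eta'$, closing the equivalence.

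The main obstacle, and the step I would spend the most care on, is the \textbf{FOL} case: specifically, pinning down the state-transport condition $\sem{\mu_{\eta'}}(\eta)=\eta'$ in the internal stratification, where the action of the state functor on morphisms is defined through quasi-representability rather than naive post-composition, and verifying that ``$\mu_{\eta'}$ is a $\Sigma$-morphism'' coincides \emph{exactly} with ``the atoms of $B$ are preserved'' (this rests on FOL morphisms preserving but not reflecting predicate membership, so that only the atoms in $B$, and nothing more, are constrained). The propositional cases are then routine once one notes that their morphisms likewise preserve only positive atomic data; the only mild point there is the vacuous treatment of the accessibility / topology / metric structure, which atomic formulas never inspect.
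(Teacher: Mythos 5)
Your proposal is correct and matches the paper's proof essentially step for step: the same minimal one-state models for \textbf{PL}, \textbf{MPL}, \textbf{TMPL} and \textbf{MMPL} (valuation $1$ exactly on $B$; a single world with $W = B$ and empty accessibility; a one-point topological, resp.\ metric, space), and for \textbf{FOL} the same term model $T_F(X)$ with syntactically interpreted predicates, the generic assignment $x \mapsto x$ as basic state, and term evaluation as the witnessing morphism, with both directions of the equivalence argued exactly as in the paper. Your added care about the state-transport condition $\sem{\mu_{\eta'}}(\eta) = \eta'$ via quasi-representability in the internal stratification is a point the paper dismisses as obvious, but it is the same argument, not a different route.
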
 

\begin{proof}
\begin{itemize}
\item[{\bf PL.}] Let $P$ be a propositional signature. Let $B \subseteq P$. Let $M_B$ be the model that associates $1$ to any $p \in B$ and $0$ to any $p \in P \setminus B$. The choice of $\eta \in \sem{M_B}_P$ is obvious because $\sem{M_B}_P = \mathbb{1}$ (cf. Example~\ref{propositional logic}). \\ Let $M \in Mod(P)$ such that $M \models_P B$. This means that for every $p \in B$, $M(p) = 1$ whence we can concude that $M_B \leq M$ where $\leq$ is the partial ordering on models in $Mod(P)$. Conversely, let us suppose a morphism $\mu : M_B \to M$ (obviously, by the definition of models in {\bf PL}, we have that $\sem{\mu}_P(\mathbb{1}) = \mathbb{1}$). By hypothesis, we have that $M_B \leq M$ whence we can directly conclude that for every $p \in B$, $M(p) = 1$.
%\Isa{I $\rightarrow \mathbb{1}$ comme page 7 (any singleton)? A voir aussi dans ce qui suit.}  
\item[{\bf FOL}.] Let $\Sigma = (S,F,P)$ be a signature. Let $B$ be a set of atomic formulas over a set of variables $X$. Let us denote $M_B$ the $\Sigma$-model defined by:
\begin{itemize}
\item $\forall s \in S, M_{B_s} = T_F(X)_s$;
\item $\forall f : s_1 \times \ldots \times s_n \to s \in F, f^{M_B} : (t_1,\ldots,t_n) \mapsto f(t_1,\ldots,t_n)$;
\item $\forall p :  s_1 \times \ldots \times s_n \in P, p^{M_B} = \{(t_1,\ldots,t_n) \mid p(t_1,\ldots,t_n) \in B\}$.
\end{itemize}
Let us set $\eta$ the variable interpretation defined as $x \mapsto x$. \\ Let $M \in Mod(\Sigma)$ be a model and $\nu : X \to M$ be an interpretation such that $M \models^\nu_\Sigma B$. Therefore, we can define 
$\mu_\nu : \left\{ 
\begin{array}{l}
x \mapsto \nu(x) \\
f(t_1,\ldots,t_n) \mapsto f^M(\mu_\nu(t_1),\ldots,\mu_\nu(t_n)) 
\end{array}
\right.$ 
which is a morphism.
Obviously, we have that $\sem{\mu_\nu}_\Sigma(\eta) = \nu$. \\ Conversely, let us suppose a morphism $\mu : M_B \to M$ such that $\sem{\mu}_\Sigma(\eta) = \nu$. Let $p(t_1,\ldots,t_n) \in B$. As $\sem{\mu}_\Sigma(\eta) = \nu$, for every $t \in T_F(X)$, we have that $\mu(t) = \nu(t)$, and then, as $\mu$ is a morphism, we can conclude that $(\nu(t_1),\ldots,\nu(t_n)) \in p^M$. 
\item[{\bf MPL}.] Let $P$ be a propositional signature. Let $B$ be a subset of $P$. Let $M_B$ be the model defined by:
\begin{itemize}
\item $I = \mathbb{1}$ (any singleton);
\item $W^{\mathbb{1}} = B$;
\item $R = \emptyset$. 
\end{itemize}
Obviously, $\eta = \mathbb{1}$. Let $M = (I',W',R')$ be a $P$-model and let $i' \in I'$ be a state such that $M \models^{i'}_P B$. Let us define the morphism $\mu_{i'} : \mathbb{1} \mapsto i'$. Obviously, we have that $\sem{\mu_{i'}}_P(\mathbb{1}) = i'$. \\ Conversely, let us suppose a morphism $\mu : M_B \to M$ such that $\sem{\mu}_P(\mathbb{1}) = i'$. As $W^{\mathbb{1}} \subseteq W'^{i'}$, we directly have that $M \models^{i'}_P B$. \\ It is standard in modal logic to restrict the class of models to satisfy supplementary axioms. For instance, to satisfy $\Box \varphi \Rightarrow \varphi$, models have to be reflexive (i.e. the accessibility relation is reflexive). In this case, the basic model $M_B$ is defined as previously except that $R = \{(\mathbb{1},\mathbb{1})\}$.
\item[{\bf TMPL}.] Let $P$ be a propositional signature. Let $B \subseteq P$. Let us denote $M_B$ the $P$-model defined by:

\begin{itemize}
\item $X = \{B\}$;
\item $\tau = \{\emptyset,\{B\}\}$ (the topology is both discrete and trivial);
\item $\nu : p \mapsto \left\{
\begin{array}{ll}
\{B\} & \mbox{if $p \in B$} \\
\emptyset & \mbox{otherwise}
\end{array} \right.$
\end{itemize}
Let us set $\eta = B$.
%\Isa{je ne comprends pas cela. B n'est pas un etat ?} \Marc{Si, et c'est l'unique \'etat de l'ensemble $X$.}. \\ 
Let $M = (X',\tau,\nu')$ be a $P$-model and $x \in X'$ such that $M \models^x_P B$. Then, let us define the mapping $\mu_x : B \mapsto x$. Let us show that $\mu_x$ is a morphism. First, let us show that it is continuous. Let $O \in \tau'$ be an open set. Two possibilities can occur:
\begin{enumerate}
\item $x \in O$. In this case, $\mu^{-1}_x(O) = \{B\}$;
\item $x \notin O$. In this case, $\mu^{-1}_x(O) = \emptyset$.
\end{enumerate}
In both cases, $\mu^{-1}_x(O)$ is an open set, and then $\mu_x$ is continuous. Let $p \in P$. Here, two cases have to be considered:
\begin{enumerate}
\item $p \in B$. As $M \models^x_P B$, we have that $x \in \nu'(p)$, and then $\mu_x(\nu(p)) \subseteq \nu'(p)$;
\item $p \notin B$. By definition of $M_B$, $\nu(p) = \emptyset$, and then $\mu_x(\nu(p)) = \emptyset$. 
\end{enumerate}
Conversely, let us suppose a morphism $\mu : M_B \to M$ such that $\sem{\mu}_P(B) = x$. Let $p \in B$. As $\mu$ is a morphism, we have that $\mu(B) = x \in \nu'(p)$, and then $M \models^x_P B$. 
\item[{\bf MMPL}.] The construction of the model $M_B$ for the logic {\bf MMPL} is similar to that for {\bf TMPL}, as from any metric space a topology can be induced.
\end{itemize}
\end{proof}

Then, let us set the framework for this section.

\medskip
{\bf Framework:} we consider  a stratified institution $\mathcal{I}$ the functor $Sen$ of which has a subfunctor $Sen^{base} : Sig \to Set$ (i.e. $Sen^{base}(\Sigma) \subseteq Sen(\Sigma)$) such that for every signature $\Sigma \in Sig$:

\begin{itemize}
\item $Sen^{base}(\Sigma)$ is basic, and
\item $Sen(\Sigma)$ is inductively defined from $Sen^{base}(\Sigma)$ by applying Boolean connectives in $\{\wedge,\vee,\Rightarrow,\neg\}$ and a $I$-indexed family of dual operators $E^i$ and $D^i$ (resp. $E^i_B$ and $D^i_{\check{B}}$ when erosion and dilation are defined over a structuring element $B$) such that for each $i \in I$, $E^i$ and $D^i$ are anti-extensive and extensive, respectively, and for all $\varphi \in Sen(\Sigma)$, $\varphi \models E^i(\varphi)$.~\footnote{In modal logic, the proof systems satisfying such a condition are said normal.}
\end{itemize}
For all the examples of stratified institutions developed in this paper, we define the functor $Sen^{base}$ as the mapping which associates to any signature $\Sigma \in Sig$ the set of atomic formulas. In {\bf PL}, the family of dual operators is indexed by the emptyset. In {\bf FOL}, the family of dual operators is indexed by a set of variables $X$. Hence, in {\bf FOL}, $E^x$ and $D^x$ are respectively $\forall x$ and $\exists x$. In {\bf MPL}, {\bf TMPL} and {\bf MMPL}, the family is indexed by any singleton as we only consider the couple of dual operators $\Box$ and $\Diamond$. \\ 
We have seen for all the examples where the dual operators $E^i$ and $D^i$ are erosion and dilation based on a structuring element $B$ that they are anti-extensive and extensive if for every model $M \in Mod(\Sigma)$ and for every state $\eta \in \sem{M}_\Sigma$, we have $\eta \in B_\eta$. Hence, {\bf PL} and {\bf FOL}, as well as {\bf MPL} when the category of models is restricted to reflexive models, meet all the requirements of our framework. This is the same for {\bf TMPL} (and hence for {\bf MMPL}) as $\Box$ and $\Diamond$ define topological interior and closure which are known to be anti-extensive and extensive (see Section~\ref{modalities for topos-models}). \\ Finally, from Property 2 in Theorem~\ref{fundamental theorem}, the property $\varphi \models E^i(\varphi)$ is always satisfied when dual operators $E^i$ and $D^i$ are defined using a structuring element $B$, as in {\bf FOL} and {\bf MPL}. For {\bf TMPL} (and then {\bf MMPL}), we have also seen in Section~\ref{modalities for topos-models} that this last property holds.

\begin{definition}[Tautology instance]
We call {\bf tautology instance} any formula $\varphi \in Sen(\Sigma)$ such that there exists a propositional tautology $\psi$ (i.e. $\psi$ is a tautology in the logic {\bf PL})  the propositional variables of which are among $\{p_1,\ldots,p_n\}$ and $n$ formulas $\varphi_i \in Sen(\Sigma)$ such that $\varphi$ is obtained by replacing in $\psi$ all the occurrences of $p_i$ by $\varphi_i$ for $i \in \{1,\ldots,n\}$. 
\end{definition}

What justifies such a definition is the following result:

\begin{proposition}
Let $\psi$ be a propositional tautology the propositional variables of which are among $\{p_1,\ldots,p_n\}$. Let $\varphi_1,\ldots,\varphi_n \in Sen(\Sigma)$ be $n$ formulas. Then, the formula $\varphi$ in $Sen(\Sigma)$ obtained by replacing in $\psi$ all the occurences of $p_i$ by $\varphi_i$ for $i \in \{1,\ldots,n\}$ is a tautology, i.e. for every $M \in Mod(\Sigma)$, $\sem{M}_\Sigma(\varphi) = \sem{M}_\Sigma$. 
\end{proposition}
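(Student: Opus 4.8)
The plan is to reduce the statement to the ordinary fact that propositional tautologies are preserved under substitution, by exploiting that the semantic Boolean connectives of a stratified institution are defined \emph{state by state} exactly as in the truth tables of the corresponding propositional connectives (Section~\ref{Internal boolean and quantifiers}). First I would fix an arbitrary model $M \in Mod(\Sigma)$ and an arbitrary state $\eta \in \sem{M}_\Sigma$, and use the formulas $\varphi_1, \ldots, \varphi_n$ to manufacture a single propositional valuation $\nu_\eta : \{p_1, \ldots, p_n\} \to \{0,1\}$ defined by $\nu_\eta(p_i) = 1$ if $M \models^\eta_\Sigma \varphi_i$ and $\nu_\eta(p_i) = 0$ otherwise. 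The point of this valuation is that it records, at the single state $\eta$, the truth value that each $\varphi_i$ contributes; since every occurrence of $p_i$ in $\psi$ is replaced by the \emph{same} $\varphi_i$, all these occurrences receive the same value at $\eta$, consistently with $\nu_\eta(p_i)$.

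The key step is a lemma proved by structural induction on an arbitrary propositional formula $\chi$ over $\{p_1, \ldots, p_n\}$: writing $\chi[\vec\varphi/\vec p]$ for the result of substituting $\varphi_i$ for $p_i$, one has
$$M \models^\eta_\Sigma \chi[\vec\varphi/\vec p] \iff \nu_\eta \models_{\mathbf{PL}} \chi.$$
The base case $\chi = p_i$ holds by the very definition of $\nu_\eta$. For the induction step I would use, case by case, the definitions of semantic negation, conjunction, disjunction and implication recalled in Section~\ref{Internal boolean and quantifiers}: for instance, $M \models^\eta_\Sigma \neg(\chi'[\vec\varphi/\vec p])$ holds iff $\eta \in \sem{M}_\Sigma \setminus \sem{M}_\Sigma(\chi'[\vec\varphi/\vec p])$, i.e. iff $M \not\models^\eta_\Sigma \chi'[\vec\varphi/\vec p]$, which by the induction hypothesis is equivalent to $\nu_\eta \not\models_{\mathbf{PL}} \chi'$, i.e. to $\nu_\eta \models_{\mathbf{PL}} \neg\chi'$; the conjunction, disjunction and implication cases are identical, using $\cap$, $\cup$ and $(\sem{M}_\Sigma \setminus \sem{M}_\Sigma(\cdot)) \cup \sem{M}_\Sigma(\cdot)$ respectively. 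Here I would note that the $\varphi_i$ are treated as opaque atoms: only the boolean value $M \models^\eta_\Sigma \varphi_i$ matters, so any quantifiers or modalities they contain are irrelevant to the induction.

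To conclude, since $\psi$ is a propositional tautology it is satisfied by \emph{every} propositional valuation, in particular by $\nu_\eta$; applying the lemma with $\chi = \psi$ then gives $M \models^\eta_\Sigma \varphi$, where $\varphi = \psi[\vec\varphi/\vec p]$. As $\eta \in \sem{M}_\Sigma$ was arbitrary we obtain $\sem{M}_\Sigma(\varphi) = \sem{M}_\Sigma$, and as $M \in Mod(\Sigma)$ was arbitrary this holds for all models, which is exactly the assertion that $\varphi$ is a tautology. I do not expect a genuine obstacle here: the only thing to be careful about is that the argument relies on the standing assumption (made at the start of Section~\ref{dual connectives}) that $\mathcal{I}$ has semantic negation, conjunction and disjunction, and hence implication, so that the substituted connectives denote the expected set-theoretic operations on $\sem{M}_\Sigma$; once this is in place the induction is routine and the abstraction adds nothing beyond bookkeeping.
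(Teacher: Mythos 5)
Your proposal is correct and follows essentially the same route as the paper: the paper's proof also fixes $M$ and $\eta$, defines the propositional valuation $\nu(p_i) = 1$ iff $M \models^\eta_\Sigma \varphi_i$, and concludes $M \models^\eta_\Sigma \varphi$ from $\nu \models \psi$. The only difference is one of explicitness — the paper leaves implicit the structural-induction lemma relating $M \models^\eta_\Sigma \chi[\vec\varphi/\vec p]$ to $\nu \models \chi$, which you spell out; that is a presentational improvement, not a different argument.
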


\begin{proof}
Let $M \in Mod(\Sigma)$ be a model. Let $\eta \in \sem{M}_\Sigma$ be a state. Let us define the propositional model $\nu$ in {\bf PL} by:
$$\nu : p_i \mapsto 
\left\{
\begin{array}{ll}
1 & \mbox{if $M \models^\eta_\Sigma \varphi_i$} \\
0 & \mbox{otherwise}
\end{array} 
\right.$$
By hypothesis, we have that $\nu \models \psi$, and then we can conclude that $M \models^\eta_\Sigma \varphi$. 
\end{proof}

The proof of completeness that we present here follows Henkin's method \cite{Henkin1949}.
% \Marc{Je ne sais pas quelle r\'ef\'erence donn\'ee car c'est un type de preuve classique que l'on trouve dans tous les livres de logiques avanc\'ees.}. 
This method relies on the proof that every consistent set of formulas has a model. This relies on the deduction theorem which is known to fail for modal logics except under some conditions (see~\cite{HN12}). Here, we give a condition based on the notion of ``invariant formula" that we define just below and which ensures the deduction theorem. This condition differs from that given in~\cite{HN12} in the sense that it is not about a restriction of the application of the inference rule {\em Necessity} (see below). As we will see later in this section, our condition will prove to be similar for {\bf MPL} and {\bf TMPL} (and then {\bf MMPL}) to change the definition of $\Gamma \vdash_\Sigma \varphi$ into: $\Gamma \vdash_\Sigma \varphi$ iff there exists a finite susbset $\{\varphi_1,\ldots,\varphi_n\} \subseteq \Gamma$ such that $\vdash_\Sigma \varphi_1 \wedge \ldots \wedge \varphi_n \Rightarrow \varphi$ (so-called {\em local derivation}). 

\begin{definition}[Invariant formula]
Let $\varphi \in Sen(\Sigma)$. $\varphi$ is said {\bf invariant} if: $\forall i \in I, \forall M \in Mod(\Sigma), \varphi \preceq_M E^i(\varphi)$.
\end{definition}

When $E^i$ and $D^i$ are erosion and dilation based on a structuring element $B$, it is easy to see that every formula $\varphi \in Sen(\Sigma)$ such that, for every $M \in Mod(\Sigma)$, $\sem{M}_\Sigma(\varphi)$ is equal to either $\sem{M}_\Sigma$ or $\emptyset$ is an invariant formula. Hence, in {\bf FOL}, all closed formulas (i.e. without free (unbound) variables) are invariant, and in {\bf MPL}, tautologies and antilogies are invariant formulas.  It is easy to see that when an invariant formula is a tautology or an antilogy, then so is its negation. \\ In {\bf TMPL} (and then {\bf MMPL}), all tautologies and antilogies are also invariant formulas.\footnote{Note that the name ``invariant'' was chosen since it also holds that $E^i(\varphi) \preceq_M \varphi$.}

\begin{definition}[Formula instance] Let $\varphi,\varphi' \in Sen(\Sigma)$. The formula $\varphi'$ is an {\bf instance of} $\varphi$ for $i \in I$ ($I$ is the index set of the family of the dual operators $E^i$ and $D^i$) if for every $M \in Mod(\Sigma)$, $E^i(\varphi) \preceq_M \varphi'$.
\end{definition}

Formula instance generalizes in stratified institution the concept of substitutions which are standard in first-order logics. Indeed, in {\bf FOL}, given a formula $\varphi$, we have for every variable $x \in X$ that $\forall x.\varphi \Rightarrow \varphi(x/t)$ is a tautology where $t \in T_F(X)$ and $\varphi(x/t)$ is the formula obtained from $\varphi$ by substituting every free occurence of $x$ by the term $t$. Of course, by the hypothesis that each $E^i$ is anti-extensive, $\varphi$ is always an instance of itself for $i \in I$.

\medskip
We then consider the following Hilbert-system for the stratified institution $\mathcal{I}$. 

\begin{itemize}
\item {\bf Axioms:}
\begin{itemize}
\item {\em Tautologies:} all tautology instances;
\item {\em Duality:} $E^i(\varphi) \Leftrightarrow \neg D^i(\neg \varphi)$;
\item {\em Distribution:} $E^i(\varphi \Rightarrow \psi) \Rightarrow E^i(\varphi) \Rightarrow E^i(\psi)$ (this axiom is called the Kripke schema);
\item {\em Instantiation:} $E^i(\varphi) \Rightarrow \varphi'$ when $\varphi'$ is an instance of $\varphi$ for $i \in I$; 
\item {\em Invariability:} $\varphi \Rightarrow E^i(\varphi)$ when $\varphi$ is an invariant formula.
\end{itemize}
\item {\bf Inference rules:}
\begin{itemize}
\item {\em Modus Ponens:} $\frac{\varphi \Rightarrow \psi~~~\varphi}{\psi}$;
\item {\em Necessity:} $\frac{\varphi}{E^i(\varphi)}$.
\end{itemize}
\end{itemize}

In modal logic, the inference rules and axioms given above define the system $T$. The systems $S4$, $B$ and $S5$ can be obtained by adding respectively the axioms written in our framework as follows:
\begin{itemize}
\item $E^i(\varphi) \Rightarrow E^i(E^i(\varphi))$ ($S4$),
\item $\varphi \Rightarrow E^i(D^i(\varphi))$ ($B$),
\item $D^i(\varphi) \Rightarrow E^i(D^i(\varphi))$ ($S5$),
\end{itemize}
In contrast, by imposing the anti-extensivity property, the systems $K$ and $D$ of the modal logic are not taken into account here.

\begin{definition}[Derivation]
A formula $\varphi \in Sen(\Sigma)$ is {\bf derivable} from a set of assumptions $\Gamma \subseteq Sen(\Sigma)$, written $\Gamma \vdash_\Sigma \varphi$, if $\varphi \in \Gamma$, or is one of the axioms, or follows from derivable formulas through applications of the inference rules.
\end{definition}

Hence, the proof system for $\mathcal{I}$ can be defined by the four following inference rules:
$$
\begin{array}{ll}
\frac{\varphi \in \Gamma}{\Gamma \vdash_\Sigma \varphi} & \frac{\varphi: \mbox{Axiom}}{\Gamma \vdash_\Sigma \varphi} \\
 \\
\frac{\Gamma \vdash_\Sigma \varphi~~~\Delta \vdash_\Sigma \varphi \Rightarrow \psi}{\Gamma \cup \Delta \vdash_\Sigma \psi} & \frac{\Gamma \vdash_\Sigma \varphi}{\Gamma \vdash_\Sigma E^i(\varphi)} 
\end{array}
 $$ 

\medskip
These inference rules give rise to an entailment system~\cite{Mes89}, i.e. a $Sig$-indexed family of binary relations $\vdash_\Sigma \subseteq \mathcal{P}(Sen(\Sigma)) \times Sen(\Sigma)$. Standardly, the $Sig$-indexed family $\{\vdash_\Sigma\}_{\Sigma \in Sig}$ satisfies the following properties:

\medskip
\begin{tabular}{l}
{\bf Transitivity} if $\Gamma \vdash_\Sigma \Gamma'$ and $\Gamma' \vdash_\Sigma \Gamma''$, then $\Gamma \vdash_\Sigma \Gamma''$; \\
{\bf Monotonicity} if $\Gamma \vdash_\Sigma \varphi$ and $\Gamma \subseteq \Gamma'$, then $\Gamma' \vdash_\Sigma \varphi$; \\
{\bf Compacity} if $\Gamma \vdash_\Sigma \varphi$, then there exists a finite subset $\Gamma_0$ of $\Gamma$ such that $\Gamma_0 \vdash_\Sigma \varphi$; \\
{\bf Translation} $\Gamma \vdash_\Sigma \varphi$, then $\forall \sigma : \Sigma \to \Sigma', \sigma(\Gamma') \vdash_{\Sigma'} \sigma(\varphi)$.
\end{tabular}
This system is enough to infer other properties of $E^i$ and $D^i$ such as the commutativity of $E^i$ (resp. $D^i$) with infimum (resp. supremum). Moreover, by the assumptions and the properties of dilation and erosion (see Sections~\ref{properties} and~\ref{dual logical operators as algebraic dilation and erosion}), the proof system defined above is sound, i.e. if $\Gamma \vdash_\Sigma \varphi$, then $\Gamma \models_\Sigma \varphi$. 
Finally, thanks to the condition of ``invariability" for formulas, we get the deduction theorem.

\begin{proposition}[Deduction theorem]
\label{deduction theorem}
Let $\Gamma \subseteq Sen(\Sigma)$ be a set of assumptions. If $\varphi$ is an invariant formula, then we have $\Gamma \cup \{\varphi\} \vdash_\Sigma \psi$ if and only if $\Gamma \vdash_\Sigma \varphi \Rightarrow \psi$. 
\end{proposition}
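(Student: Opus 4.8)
The plan is to prove the deduction theorem by the standard Henkin-style induction, with the novelty being that the ``invariant'' hypothesis on $\varphi$ is exactly what rescues the argument at the {\em Necessity} rule, where the deduction theorem classically fails for modal logics. The backward direction ($\Gamma \vdash_\Sigma \varphi \Rightarrow \psi$ implies $\Gamma \cup \{\varphi\} \vdash_\Sigma \psi$) is immediate and does not use invariance: from $\Gamma \vdash_\Sigma \varphi \Rightarrow \psi$ and the trivial derivation $\Gamma \cup \{\varphi\} \vdash_\Sigma \varphi$, a single application of {\em Modus Ponens} gives $\Gamma \cup \{\varphi\} \vdash_\Sigma \psi$. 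So the substance is entirely in the forward direction.

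For the forward direction I would argue by induction on the length (equivalently, the structure) of the derivation witnessing $\Gamma \cup \{\varphi\} \vdash_\Sigma \psi$, showing in each case that $\Gamma \vdash_\Sigma \varphi \Rightarrow \psi$. The base cases are routine: if $\psi \in \Gamma$ or $\psi$ is an axiom, then $\psi$ is derivable from $\Gamma$ alone, and since $\psi \Rightarrow (\varphi \Rightarrow \psi)$ is a tautology instance, {\em Modus Ponens} yields $\Gamma \vdash_\Sigma \varphi \Rightarrow \psi$; the case $\psi = \varphi$ is handled by the tautology instance $\varphi \Rightarrow \varphi$. The {\em Modus Ponens} inductive step is the classical one: if $\psi$ was obtained from $\theta$ and $\theta \Rightarrow \psi$, the induction hypothesis gives $\Gamma \vdash_\Sigma \varphi \Rightarrow \theta$ and $\Gamma \vdash_\Sigma \varphi \Rightarrow (\theta \Rightarrow \psi)$, and combining these with the tautology instance $(\varphi \Rightarrow \theta) \Rightarrow ((\varphi \Rightarrow (\theta \Rightarrow \psi)) \Rightarrow (\varphi \Rightarrow \psi))$ via two applications of {\em Modus Ponens} delivers $\Gamma \vdash_\Sigma \varphi \Rightarrow \psi$.

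The hard part, and the whole point of the invariance hypothesis, is the {\em Necessity} step: suppose $\psi = E^i(\theta)$ was derived from $\Gamma \cup \{\varphi\} \vdash_\Sigma \theta$. The induction hypothesis gives $\Gamma \vdash_\Sigma \varphi \Rightarrow \theta$. Applying {\em Necessity} yields $\Gamma \vdash_\Sigma E^i(\varphi \Rightarrow \theta)$, and the {\em Distribution} (Kripke schema) axiom $E^i(\varphi \Rightarrow \theta) \Rightarrow (E^i(\varphi) \Rightarrow E^i(\theta))$ together with {\em Modus Ponens} gives $\Gamma \vdash_\Sigma E^i(\varphi) \Rightarrow E^i(\theta)$. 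At this point the classical proof stalls, because we need $\varphi \Rightarrow E^i(\theta)$, not $E^i(\varphi) \Rightarrow E^i(\theta)$. This is precisely where invariance of $\varphi$ enters: by the {\em Invariability} axiom, $\varphi \Rightarrow E^i(\varphi)$ is derivable, so chaining $\varphi \Rightarrow E^i(\varphi)$ with $E^i(\varphi) \Rightarrow E^i(\theta)$ through the transitivity-of-implication tautology instance gives $\Gamma \vdash_\Sigma \varphi \Rightarrow E^i(\theta)$, i.e. $\Gamma \vdash_\Sigma \varphi \Rightarrow \psi$, closing the induction.

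I would conclude by noting that soundness of the {\em Invariability} axiom is guaranteed by the very definition of an invariant formula ($\varphi \preceq_M E^i(\varphi)$ for all $M$ and all $i$), so the derivation $\varphi \Rightarrow E^i(\varphi)$ used above is legitimate exactly under the stated hypothesis, and the theorem would fail for non-invariant $\varphi$. The only mild subtlety to be careful about is bookkeeping with the sets of assumptions across the two-premise rules (the {\em Modus Ponens} inference rule on derivations unions the assumption sets $\Gamma \cup \Delta$), but since both premises are derived from $\Gamma$ alone in each inductive case, this causes no difficulty.
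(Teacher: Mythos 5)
Your proof is correct and follows essentially the same route as the paper's own argument: induction on the derivation, with the \emph{Necessity} case resolved by applying \emph{Necessity} and \emph{Distribution} to obtain $\Gamma \vdash_\Sigma E^i(\varphi) \Rightarrow E^i(\psi)$, then chaining with the \emph{Invariability} axiom $\varphi \Rightarrow E^i(\varphi)$ via transitivity. You merely spell out the base cases and the \emph{Modus Ponens} step in more detail than the paper does, which is fine.
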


%\Isa{Pour que ce soit plus complet, je crois qu'il faudrait definit $\Gamma$ dans chaque proposition.} \Marc{C'est fait.}

\begin{proof}
The necessary condition is obvious and can be easily obtained by Modus Ponens. The sufficient condition is proved by induction on the given proof. The more difficult case is that where the last inference rule is Necessity. We then have that $\Gamma \cup \{\varphi\} \vdash_\Sigma E^i(\psi)$.  This means that $\Gamma \cup \{\varphi\} \vdash_\Sigma \psi$ previously in the proof, and then by the induction hypothesis we have that $\Gamma \vdash_\Sigma \varphi \Rightarrow \psi$. By Necessity, Distribution and Modus Ponens, we have that $\Gamma \vdash_\Sigma E^i(\varphi) \Rightarrow E^i(\psi)$. By the invariant axiom and the fact that $\varphi$ is an invariant formula, $\Gamma \vdash_\Sigma \varphi \Rightarrow E^i(\varphi)$, and then by transitivity, we can conclude that $\Gamma \vdash_\Sigma \varphi \Rightarrow E^i(\psi)$. 
\end{proof}

The following corollary justifies proof by reduction ad absurbum. 

\begin{corollary}
For every $\Gamma \subseteq Sen(\Sigma)$ and $\varphi \in Sen(\Sigma)$ such that $\neg \varphi$ is an invariant formula, we have that $\Gamma \vdash_\Sigma \varphi$ if and only if $\Gamma \cup \{\neg \varphi\}$ is inconsistent (i.e. for every formula $\psi \in Sen(\Sigma)$, $\Gamma \cup \{\neg \varphi\} \vdash_\Sigma \psi$ and $\Gamma \cup \{\neg \varphi\} \vdash_\Sigma \neg \psi$).
\end{corollary}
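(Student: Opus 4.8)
The plan is to derive both implications from the Deduction Theorem (Proposition~\ref{deduction theorem}), applied this time to the invariant formula $\neg\varphi$, together with two propositional tautology instances and Modus Ponens. The key observation is that the hypothesis concerns the invariance of $\neg\varphi$ (and not of $\varphi$), which is exactly what is needed to move $\neg\varphi$ across the turnstile.

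For the direction from left to right, I would assume $\Gamma \vdash_\Sigma \varphi$. By Monotonicity of the entailment system, $\Gamma \cup \{\neg\varphi\} \vdash_\Sigma \varphi$, and trivially $\Gamma \cup \{\neg\varphi\} \vdash_\Sigma \neg\varphi$ since $\neg\varphi$ belongs to the assumption set. The formula $\varphi \Rightarrow (\neg\varphi \Rightarrow \psi)$ is a propositional tautology, hence a tautology instance and therefore an axiom; two applications of Modus Ponens then give $\Gamma \cup \{\neg\varphi\} \vdash_\Sigma \psi$ for an arbitrary $\psi \in Sen(\Sigma)$. Applying the same argument with $\neg\psi$ in place of $\psi$ yields $\Gamma \cup \{\neg\varphi\} \vdash_\Sigma \neg\psi$ as well, so $\Gamma \cup \{\neg\varphi\}$ is inconsistent.

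For the converse, assume $\Gamma \cup \{\neg\varphi\}$ is inconsistent. In particular, instantiating the definition of inconsistency at $\psi = \varphi$, we obtain $\Gamma \cup \{\neg\varphi\} \vdash_\Sigma \varphi$. Since $\neg\varphi$ is invariant by hypothesis, the Deduction Theorem applies and gives $\Gamma \vdash_\Sigma \neg\varphi \Rightarrow \varphi$. I would then invoke the tautology instance $(\neg\varphi \Rightarrow \varphi) \Rightarrow \varphi$ (Clavius' law, or \emph{consequentia mirabilis}, which is indeed a propositional tautology), and conclude $\Gamma \vdash_\Sigma \varphi$ by one application of Modus Ponens.

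The routine parts are the verification of the two propositional tautologies and the bookkeeping with Modus Ponens. The only genuine subtlety --- and the step that must be handled with care --- is ensuring that the invariance hypothesis is attached to the right formula: the Deduction Theorem can discharge $\neg\varphi$ from the assumptions only because $\neg\varphi$, and not $\varphi$, is assumed invariant. This is precisely what forces the statement of the corollary to require the invariance of $\neg\varphi$ rather than of $\varphi$.
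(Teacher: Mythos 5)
Your proposal is correct and matches the paper's own argument: the paper likewise dismisses the left-to-right direction as obvious (you merely spell out the ex falso step), and for the converse it instantiates inconsistency at $\psi = \varphi$, applies the Deduction Theorem to the invariant formula $\neg\varphi$ to get $\Gamma \vdash_\Sigma \neg\varphi \Rightarrow \varphi$, and concludes via the tautology instance $(\neg\varphi \Rightarrow \varphi) \Rightarrow \varphi$ and Modus Ponens. Your closing remark correctly identifies the one subtle point, namely that invariance must be hypothesized for $\neg\varphi$ so that the Deduction Theorem can discharge it.
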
 

\begin{proof}
The ``$\Rightarrow$" part is obvious. Let us prove the ``$\Leftarrow$" part. Let us suppose that $\Gamma \cup \{\neg \varphi\}$ is inconsistent. This then means that we have both $\Gamma \cup \{\neg \varphi\} \vdash_\Sigma \varphi$ and $\Gamma \cup \{\neg \varphi\} \vdash_\Sigma \neg \varphi$. As $\neg \varphi$ is an invariant formula by Proposition~\ref{deduction theorem} we can write that $\Gamma \vdash_\Sigma \neg \varphi \Rightarrow \varphi$. The formula $(\neg \varphi \Rightarrow \varphi) \Rightarrow \varphi$ is a tautology axiom, and then by Modus Ponens we have that $\Gamma \vdash_\Sigma \varphi$. 
\end{proof}

\begin{definition}[Maximal Consistence]
A set of formulas $\Gamma \subseteq Sen(\Sigma)$ is {\bf maximally consistent} if it is consistent and there is no consistent set of formulas properly containing $\Gamma$ (i.e. for each formula $\varphi \in Sen(\Sigma)$, either $\varphi \in \Gamma$ or $\neg \varphi \in \Gamma$, but not both).
\end{definition}

\begin{proposition}
\label{lindenbaum lemma}
Let $\Gamma \subseteq Sen(\Sigma)$ be a consistent set of formulas. There exists a maximally consistent set of formulas $\overline{\Gamma} \subseteq Sen(\Sigma)$  that contains $\Gamma$.
\end{proposition}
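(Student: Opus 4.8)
The plan is to prove this by a standard Zorn's Lemma argument, with the \emph{Compacity} property of $\vdash_\Sigma$ doing the essential work. First I would consider the poset
$$\mathcal{P} = \{\Delta \subseteq Sen(\Sigma) \mid \Gamma \subseteq \Delta \mbox{ and } \Delta \mbox{ is consistent}\}$$
ordered by inclusion. Since $\Gamma$ is consistent by hypothesis, $\Gamma \in \mathcal{P}$, so $\mathcal{P}$ is nonempty. The goal is to produce a maximal element of $\mathcal{P}$: such an element is a consistent set admitting no properly larger consistent extension, which is exactly a maximally consistent $\overline{\Gamma}$ containing $\Gamma$ in the sense of the definition.

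The key step, and the one I expect to be the main obstacle, is to verify the hypothesis of Zorn's Lemma, namely that every chain in $\mathcal{P}$ has an upper bound in $\mathcal{P}$. Given a nonempty chain $\mathcal{C} \subseteq \mathcal{P}$, the candidate upper bound is its union $\Delta^\ast = \bigcup_{\Delta \in \mathcal{C}} \Delta$, which plainly contains $\Gamma$; what requires argument is that $\Delta^\ast$ is consistent. Here I would argue by contradiction using \emph{Compacity}: if $\Delta^\ast$ were inconsistent, then $\Delta^\ast \vdash_\Sigma \psi$ and $\Delta^\ast \vdash_\Sigma \neg\psi$ for some $\psi$, and by Compacity each derivation uses only a finite subset of $\Delta^\ast$. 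The union of these two finite subsets is a finite set $\{\chi_1,\ldots,\chi_k\} \subseteq \Delta^\ast$; each $\chi_j$ lies in some member of $\mathcal{C}$, and since $\mathcal{C}$ is totally ordered, these finitely many $\chi_j$ are all contained in a single largest such member $\Delta_0 \in \mathcal{C}$. Then \emph{Monotonicity} gives $\Delta_0 \vdash_\Sigma \psi$ and $\Delta_0 \vdash_\Sigma \neg\psi$, contradicting the consistency of $\Delta_0$. Hence $\Delta^\ast$ is consistent, so $\Delta^\ast \in \mathcal{P}$ is the desired upper bound.

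With the chain condition established, Zorn's Lemma yields a maximal element $\overline{\Gamma} \in \mathcal{P}$, which is a maximally consistent extension of $\Gamma$. (If one wishes to avoid Zorn when $Sen(\Sigma)$ is countable, the same $\overline{\Gamma}$ can be built by recursion along an enumeration of the sentences, adding each formula whenever doing so preserves consistency, which is the concrete form of Henkin's method invoked later.) To match the parenthetical characterization in the definition, I would finally check that for each $\varphi$ exactly one of $\varphi,\neg\varphi$ lies in $\overline{\Gamma}$: not both, by consistency; and at least one, since if $\varphi \notin \overline{\Gamma}$ then maximality forces $\overline{\Gamma} \cup \{\varphi\}$ to be inconsistent, after which it suffices to see that $\overline{\Gamma} \cup \{\neg\varphi\}$ stays consistent, whence $\neg\varphi \in \overline{\Gamma}$ by maximality. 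This last point is the only delicate one, since the usual proof-by-cases passes through the deduction theorem, which in this framework (Proposition~\ref{deduction theorem}) is only available for invariant formulas; for $\varphi$ with $\neg\varphi$ invariant it follows directly from the reductio corollary, and this is precisely the case the subsequent completeness argument relies on.
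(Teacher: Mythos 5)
Your proof is correct and takes essentially the same route as the paper's: a Zorn's Lemma argument on the poset of consistent supersets of $\Gamma$. The paper merely asserts that this poset is inductive and is terse about the final dichotomy, so your explicit verification of the chain condition via \emph{Compacity} and \emph{Monotonicity}, and your observation that the ``$\varphi$ or $\neg\varphi$'' step is delicate because the deduction theorem (and hence the reductio corollary) is only available for invariant formulas, supply details the paper's own proof leaves implicit.
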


\begin{proof}
Let $S = \{\Gamma' \subseteq Sen(\Sigma) \mid \Gamma'~\mbox{is consistent and}~\Gamma \subseteq \Gamma'\}$. The poset $(S,\subseteq)$ is inductive. Therefore, by Zorn's lemma, $S$ has a maximal element $\overline{\Gamma}$. By definition of $S$, $\overline{\Gamma}$ is consistent and contains $\Gamma$. Moreover, it is maximal. Otherwise, there exists a formula $\varphi \in Sen(\Sigma)$ such that $\varphi \notin \overline{\Gamma}$. As $\overline{\Gamma}$ is maximal, this means that $\overline{\Gamma} \cup \{\varphi\}$ is inconsistent, and then $\overline{\Gamma} \cup \{\neg \varphi\}$ is. As $\overline{\Gamma}$ is maximal, we can conclude that $\neg \varphi \in \overline{\Gamma}$.   
\end{proof}

Proposition~\ref{lindenbaum lemma} is a quite direct generalization to stratified institutions of Lindenbaum's Lemma. To obtain our result of completeness, we need to impose the following condition:

\medskip
{\bf Assumption.}  For every basic set of formulas $B \subseteq Sen^{base}(\Sigma)$, there exists a basic model $M_B \in Mod(\Sigma)$ and a basic state $\eta \in \sem{M_B}_\Sigma$ for $B$ such that for every $i \in I$ ($I$ is the index-set of the dual operators $E^i$ and $D^i$) and every $\varphi \in Sen(\Sigma)$, there exists a subset $Inst_i(\varphi)$ of instances of $\varphi$ for $i$ satisfying :
\begin{enumerate}
\item for every $\varphi' \in Inst_i(\varphi)$, $|\varphi'| \leq |\varphi|$ where $|\varphi|$ and $|\varphi'|$ are the numbers of Boolean connectives and dual operators in $\varphi$ and $\varphi'$, and
\item  $(\forall \varphi' \in Inst_i(\varphi), M_B \models^\eta_\Sigma \varphi') \Longrightarrow M_B \models^\eta_\Sigma E^i(\varphi)$.
\end{enumerate}

\begin{proposition}
\label{assumption true}
All the couples $(M_B,\eta)$ defined in the proof of Proposition~\ref{examples of basic sets} for {\bf PL}, {\bf FOL}, {\bf MPL}, {\bf TMPL} and {\bf MMPL} satisfy such an assumption.  
\end{proposition}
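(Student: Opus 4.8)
The plan is to handle the five institutions in three groups according to the shape of their dual operators, checking the two clauses of the Assumption directly against the concrete basic couples $(M_B,\eta)$ built in the proof of Proposition~\ref{examples of basic sets}. The institution \textbf{PL} is the degenerate case: its family of dual operators is indexed by the empty set, so there is no $i \in I$ and the Assumption holds vacuously. The remaining work is the quantifier case \textbf{FOL} and the three modal cases \textbf{MPL}, \textbf{TMPL}, \textbf{MMPL}, whose basic models all carry a single relevant state.

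The heart of the argument is \textbf{FOL}, where $I = X$, $E^x = \forall x$, and $M_B$ is the term model ($M_{B_s} = T_F(X)_s$) with basic state $\eta = (x \mapsto x)$. I would take the family of substitution instances
$$Inst_x(\varphi) = \{\varphi(x/t) \mid t \in T_F(X)\}.$$
That each $\varphi(x/t)$ is genuinely an instance of $\varphi$ for $x$ is the standard validity of $\forall$-elimination, i.e. $\forall x.\varphi \preceq_M \varphi(x/t)$ for every $M$. Clause~1 then holds because substituting a term for a variable only rewrites atomic subformulas and therefore leaves the number of Boolean connectives and dual operators unchanged, so $|\varphi(x/t)| = |\varphi|$. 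Clause~2 is where the specific choice of $M_B$ is used: since the carrier of the term model is exactly $T_F(X)$, every domain element is a term, and by the substitution lemma $M_B \models^{\eta[x \mapsto t]}_\Sigma \varphi$ holds iff $M_B \models^\eta_\Sigma \varphi(x/t)$. Because $M_B \models^\eta_\Sigma \forall x.\varphi$ quantifies precisely over the states $\eta'$ agreeing with $\eta$ off $x$, that is over all values $t$ of $x$, clause~2 follows (indeed as an equivalence).

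For the modal institutions $I$ is a singleton and I would take the minimal family $Inst_\Box(\varphi) = \{\varphi\}$. Clause~1 is immediate, and $\varphi$ is an instance of itself for $\Box$ by anti-extensivity $\Box\varphi \preceq_M \varphi$, which is valid under the framework's hypotheses (reflexive models for \textbf{MPL}; topological interior being anti-extensive for \textbf{TMPL}, see Section~\ref{modalities for topos-models}). Clause~2 reduces to $M_B \models^\eta_\Sigma \varphi \Rightarrow M_B \models^\eta_\Sigma \Box\varphi$ at the distinguished state $\eta$. But each basic model has a single relevant state whose only accessible world (\textbf{MPL}, with $R = \{(\eta,\eta)\}$) or smallest open neighbourhood (\textbf{TMPL} and \textbf{MMPL}, the one-point space) is $\eta$ itself, so $\Box\varphi$ and $\varphi$ coincide at $\eta$ and the implication is trivial; \textbf{MMPL} follows from \textbf{TMPL} through the topology induced by the metric.

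The main obstacle is \textbf{FOL} clause~2: one must be certain that the universal quantifier of $\forall x.\varphi$ at the basic state, ranging over all state-variants, is faithfully captured by the purely syntactic family of term substitutions $\varphi(x/t)$. This is exactly the Herbrand-style property that the term model's domain is generated by the terms, and it is the substitution lemma that licenses the passage between ``quantify over domain elements'' and ``quantify over term substitutions.'' Once this is secured, and once the single-state structure of the modal basic models is observed, the remaining verifications are routine.
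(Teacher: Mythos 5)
Your proposal is correct and follows essentially the same route as the paper's own proof: the vacuous case for \textbf{PL}, the choice $Inst_x(\varphi) = \{\varphi(x/t) \mid t \in T_F(X)\}$ with the term-model/substitution argument for \textbf{FOL}, and $Inst_{\mathbb{1}}(\varphi) = \{\varphi\}$ for the modal logics with clause~2 secured by the single reflexive state of $M_B$ (one-point space for \textbf{TMPL}/\textbf{MMPL}). Your explicit appeal to the substitution lemma in the \textbf{FOL} case merely spells out a step the paper treats implicitly.
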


\begin{proof}
The proof for {\bf PL} is obvious because the set of dual operators is empty (except the conjunction and disjunction which are assumed in the definition of the logic). For {\bf MPL}, {\bf TMPL} and {\bf MMPL}, as $\Box$ is anti-extensive, for every $\varphi \in Sen(\Sigma)$, we can set $Inst_\mathbb{1}(\varphi) = \{\varphi\}$ (let us recall that the index set for dual operators is here represented by the singleton with the unique element $\mathbb{1}$). 
The first condition of the assumption is obviously satisfied. Finally, as $\Box$ is anti-extensive, the accessibility relation is reflexive, and then if $M_B \models^\mathbb{1}_\Sigma \varphi$ in {\bf MPL} (resp. $M_B \models^B_\Sigma \varphi$ in {\bf TMPL} and {\bf MMPL}), then we necessary have that  $M_B \models^\mathbb{1}_\Sigma \Box \varphi$ in {\bf MPL} (resp. $M_B \models^B_\Sigma \Box \varphi$ in {\bf TMPL} and {\bf MMPL}). 
\\ In {\bf FOL}, given a variable $x \in X$, let us set $Inst_x(\varphi) = \{\varphi(x/t) \mid t \in T_F(X)\}$. Obviously, the first condition of the assumption is satisfied. 
Finally, if we suppose that $M_B \models^{Id}_\Sigma \varphi(x/t)$ for every $t \in T_F(X)$, then we have for each $\sigma : X \to T_F(X)$ such that for every $y \neq x \in X$, $\sigma(y) = y$ and $\sigma(x) = t$ that $M_B \models^\sigma_\Sigma \varphi$, whence we can conclude that $M_B \models^{Id}_\Sigma \forall x. \varphi$. 
\end{proof}

\begin{proposition}
\label{existence of a model}
Let assume that the assumption is satisfied. Then, for every maximal consistent set of formulas $\Gamma \subseteq Sen(\Sigma)$, there exists a $\Sigma$-model $M$ and a state $\eta \in \sem{M}_\Sigma$ such that $\Gamma = \{\varphi \mid M \models^\eta_\Sigma \varphi\}$.  
\end{proposition}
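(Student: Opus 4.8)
The plan is to run Henkin's construction concretely on a basic model. First I would set $B = \Gamma \cap Sen^{base}(\Sigma)$; this is a basic set of formulas, so the Assumption supplies a basic model $M_B \in Mod(\Sigma)$ and a basic state $\eta \in \sem{M_B}_\Sigma$ for $B$ satisfying its two instance conditions. I claim this $(M_B,\eta)$ is the required witness, i.e.\ that the \emph{Truth Lemma}
$$\forall \varphi \in Sen(\Sigma), \quad M_B \models^\eta_\Sigma \varphi \Longleftrightarrow \varphi \in \Gamma$$
holds; the equality $\Gamma = \{\varphi \mid M_B \models^\eta_\Sigma \varphi\}$ is then immediate. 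Before the induction I would make one simplification: since $\Gamma$ is deductively closed (maximal consistency) and the \emph{Duality} axiom, together with the per-model duality $E^i(\chi) \equiv_M \neg D^i(\neg\chi)$, lets me replace every occurrence of $D^i$ by $\neg E^i \neg$ without changing either membership in $\Gamma$ or satisfaction at $(M_B,\eta)$, I may assume that formulas are built from $Sen^{base}(\Sigma)$ using only Boolean connectives and the operators $E^i$. This avoids the complexity mismatch that would otherwise arise in reducing an existential-type operator to its dual, and lets me induct cleanly on the number $|\varphi|$ of connectives and operators.

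The routine cases come first. For $\varphi \in Sen^{base}(\Sigma)$: from $\varphi \in B$ and the basic-model property (apply the morphism characterisation to $M=M_B$, $\eta'=\eta$ with the identity morphism) one gets $M_B \models^\eta_\Sigma \varphi$; conversely $M_B \models^\eta_\Sigma \varphi$ forces $\varphi \in B$, since the concrete basic models of Proposition~\ref{examples of basic sets} satisfy, at $\eta$, exactly the formulas of $B$ (a freeness property of the basic model, which can also be read off abstractly from the morphism characterisation together with consistency and soundness). The Boolean cases are handled by the standard closure properties of maximal consistent sets — $\neg\psi \in \Gamma \Leftrightarrow \psi \notin \Gamma$, $\psi_1 \wedge \psi_2 \in \Gamma \Leftrightarrow \psi_1,\psi_2 \in \Gamma$, and so on — combined with the semantic Boolean connectives available in $\mathcal{I}$ and the induction hypothesis applied to the strictly simpler $\psi,\psi_1,\psi_2$.

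The heart of the proof is the case $\varphi = E^i(\psi)$. For the implication $E^i(\psi) \in \Gamma \Rightarrow M_B \models^\eta_\Sigma E^i(\psi)$ I would use the \emph{Instantiation} axiom: for every $\psi' \in Inst_i(\psi)$ the formula $E^i(\psi) \Rightarrow \psi'$ is an axiom, so $\Gamma \vdash_\Sigma \psi'$ and hence $\psi' \in \Gamma$; since the first instance condition gives $|\psi'| \le |\psi| < |E^i(\psi)|$, the induction hypothesis yields $M_B \models^\eta_\Sigma \psi'$ for all $\psi' \in Inst_i(\psi)$, and the second instance condition then lifts this to $M_B \models^\eta_\Sigma E^i(\psi)$. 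For the converse $M_B \models^\eta_\Sigma E^i(\psi) \Rightarrow E^i(\psi) \in \Gamma$ I would use anti-extensivity of $E^i$ (so $E^i(\psi) \preceq_{M_B} \psi$, whence $M_B \models^\eta_\Sigma \psi$), then the induction hypothesis to obtain $\psi \in \Gamma$, and finally the \emph{Necessity} rule $\Gamma \vdash_\Sigma \psi \,/\, \Gamma \vdash_\Sigma E^i(\psi)$ together with deductive closure of $\Gamma$ to conclude $E^i(\psi) \in \Gamma$.

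The main obstacle is precisely the forward direction of this $E^i$ case: the Instantiation axiom only exhibits $E^i(\psi)$ as lying below a family of instances, and reconstructing the truth of $E^i(\psi)$ at the single state $\eta$ of the single model $M_B$ from the truth of those instances is impossible for an arbitrary basic model — this is exactly the content the Assumption was introduced to guarantee (and which Proposition~\ref{assumption true} verifies for the concrete logics). The secondary delicate points are the freeness of the basic model needed in the base case, and the bookkeeping justifying the reduction to the $E^i$-only fragment, which is what keeps every appeal to the induction hypothesis at strictly smaller complexity.
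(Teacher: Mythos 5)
Your proof is correct and follows essentially the same route as the paper's own: you take the same basic model $M_B$ and basic state $\eta$ for $B=\Gamma\cap Sen^{base}(\Sigma)$, run the same induction on formula size, use the \emph{Instantiation} axiom together with the two conditions of the Assumption for the forward $E^i(\psi)$ case, and use anti-extensivity of $E^i$ plus \emph{Necessity} (with deductive closure of the maximal consistent $\Gamma$) for the converse. Your explicit preprocessing step eliminating $D^i$ via $\neg E^i \neg$ and your discussion of the base case merely spell out details the paper leaves implicit (it treats only the $E^i$ case and calls the basic and Boolean cases ``easily provable''), so this is the same argument, slightly more carefully documented.
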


\begin{proof}
Let us denote $B = \Gamma \cap Sen^{base}(\Sigma)$. By definition of basic set of formulas, there exists a basic model $M_B$ and a state $\eta$ for $B$ that satisfy the assumption. Then, let us show by induction on the size of $\varphi$ that:
$$\Gamma \vdash \varphi \Longleftrightarrow M_B \models^\eta_\Sigma \varphi$$

The cases of basic formulas and Boolean connectives are easily provable. Then, let $\varphi$ be of the form $E^i(\psi)$. \\
($\Rightarrow$) Let us suppose that $\Gamma \vdash E^i(\psi)$. By Modus Ponens with $\Gamma \vdash_\Sigma E^i(\psi) \Rightarrow \psi'$ (Instantiation) where $\psi' \in Inst_x(\psi)$, we then have that $\Gamma \vdash \psi'$. By the first condition of the assumption, we can apply the induction hypothesis on every $\psi' \in Inst_x(\psi)$, and the we have that $M_B \models^\eta_\Sigma \psi'$, whence by the seconde condition of the assumption, we can conclude that $M_B \models^\eta_\Sigma E^i(\psi)$. \\
($\Leftarrow$) Let us suppose that $M_B \models^\eta_\Sigma E^i(\psi)$. By anti-extensivity of $E^i$, we then have that $M_B \models^\eta_\Sigma \psi$. By the induction hypothesis, we have that $\Gamma \vdash \psi$, and then by Necessity, $\Gamma \vdash E^i(\psi)$. 
\end{proof}

\begin{theorem}[Completeness]
Let assume that the assumption is satisfied. Then, for every $\Gamma \subseteq Sen(\Sigma)$ and every $\varphi \in Sen(\Sigma)$ such that $\neg \varphi$ is an invariant formula, we have that:
$$\Gamma \models \varphi \Longrightarrow \Gamma \vdash \varphi$$
\end{theorem}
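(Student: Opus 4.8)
The plan is to argue by contraposition, constructing a pointed counter-model out of a maximally consistent extension following Henkin's method. So I would suppose $\Gamma \not\vdash_\Sigma \varphi$ and aim to exhibit a $\Sigma$-model $M$ together with a state $\eta \in \sem{M}_\Sigma$ at which every formula of $\Gamma$ holds but $\varphi$ fails. This is exactly what it means for $\Gamma \models \varphi$ to fail, once one reads the semantic consequence state-by-state, in accordance with the local reading of $\vdash_\Sigma$ announced just before the Deduction Theorem.

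First I would exploit the hypothesis that $\neg\varphi$ is invariant to invoke the reductio corollary following Proposition~\ref{deduction theorem}: since $\Gamma \not\vdash_\Sigma \varphi$, that corollary forces $\Gamma \cup \{\neg\varphi\}$ to be consistent (otherwise $\Gamma \vdash_\Sigma \varphi$ would hold). Next I would apply the generalized Lindenbaum Lemma, Proposition~\ref{lindenbaum lemma}, to extend the consistent set $\Gamma \cup \{\neg\varphi\}$ to a maximally consistent set $\overline{\Gamma} \subseteq Sen(\Sigma)$.

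The core of the argument is then Proposition~\ref{existence of a model}: because the Assumption is in force, $\overline{\Gamma}$ is realised by a pointed model, i.e. there exist $M \in Mod(\Sigma)$ and $\eta \in \sem{M}_\Sigma$ with $\overline{\Gamma} = \{\psi \mid M \models^\eta_\Sigma \psi\}$. From $\Gamma \subseteq \overline{\Gamma}$ I read off $M \models^\eta_\Sigma \psi$ for every $\psi \in \Gamma$, while $\neg\varphi \in \overline{\Gamma}$ yields $M \models^\eta_\Sigma \neg\varphi$, hence $M \not\models^\eta_\Sigma \varphi$. Thus the pair $(M,\eta)$ witnesses $\Gamma \not\models \varphi$, which closes the contrapositive.

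The main obstacle is not this final assembly, which is routine once the machinery is available, but rather keeping track of which satisfaction relation is at stake. The Henkin model only controls truth at the single distinguished state $\eta$, so the statement must be understood relative to the state-indexed (local) consequence relation, not the all-states relation $\models_\Sigma$ of Definition~\ref{def:stratified}; the invariance of $\neg\varphi$ is precisely the hypothesis that both licenses the reductio step and reconciles the two readings. All the genuine difficulty has been front-loaded into Proposition~\ref{existence of a model}, whose proof uses the Assumption to push the induction through the operators $E^i$, so that the completeness theorem itself follows as a short corollary.
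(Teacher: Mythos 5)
Your proposal is correct and follows essentially the same route as the paper's own proof: contraposition, the reductio corollary (licensed by the invariance of $\neg\varphi$) to get consistency of $\Gamma \cup \{\neg\varphi\}$, Lindenbaum's Lemma (Proposition on maximal consistent extensions), and the Henkin-style pointed model from the existence-of-a-model proposition. Your closing remark about the local, state-indexed reading of the consequence relation is a point the paper's proof leaves implicit, and you are right that the counter-model only refutes $\varphi$ at the distinguished state $\eta$.
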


\begin{proof}
If $\Gamma {\not \vdash} \varphi$, then $\Gamma \cup \{\neg \varphi\}$ is consistent. By Proposition~\ref{lindenbaum lemma}, there exists a maximal consistent set of formulas $\Gamma'$ that extends $\Gamma$, and then by Proposition~\ref{existence of a model}, there exists a model $M$ and a state $\eta \in \sem{M}_\Sigma$ such that $M \models^\eta_\Sigma \neg \varphi$, i.e. $M {\not \models^\eta_\Sigma} \varphi$. 
\end{proof}

\begin{corollary}
The inference rules for {\bf PL} is complete for any formulas. They are complete in {\bf FOL} for every closed formulas, and in {\bf MPL} and {\bf TMPL} for tautologies (and then so is for {\bf MMPL})
\end{corollary}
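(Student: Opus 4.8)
The plan is to read the Corollary as a direct instantiation of the preceding Completeness Theorem, checking its two hypotheses---that the \textbf{Assumption} holds and that $\neg \varphi$ is an invariant formula---separately for each of the five logics. The first hypothesis requires no new work: Proposition~\ref{assumption true} already establishes that the couples $(M_B,\eta)$ constructed in Proposition~\ref{examples of basic sets} satisfy the \textbf{Assumption} in \textbf{PL}, \textbf{FOL}, \textbf{MPL}, \textbf{TMPL} and \textbf{MMPL}. Hence in each case the only remaining task is to verify, for the class of formulas named in the Corollary, that $\neg \varphi$ is invariant, and then to invoke the Completeness Theorem.

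First I would treat \textbf{PL}. The key observation is that in \textbf{PL} the index set $I$ of the family of dual operators is empty, so the defining condition of invariance, namely $\forall i \in I,\ \forall M \in Mod(\Sigma),\ \varphi \preceq_M E^i(\varphi)$, holds vacuously for \emph{every} formula. Consequently $\neg \varphi$ is invariant for all $\varphi \in Sen(\Sigma)$, and the Completeness Theorem applies without restriction, yielding completeness for arbitrary formulas. Next I would treat \textbf{FOL}, using the fact recalled after the definition of invariant formula that all closed formulas are invariant. Since negation preserves closedness, if $\varphi$ is closed then $\neg \varphi$ is closed and therefore invariant; the Completeness Theorem then gives completeness for every closed formula.

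Finally I would handle \textbf{MPL}, \textbf{TMPL} and \textbf{MMPL} together. Here the relevant fact is that tautologies and antilogies are invariant, and that the negation of a tautology is an antilogy. Thus if $\varphi$ is a tautology then $\neg \varphi$ is an antilogy, hence invariant, so the Completeness Theorem applies to every tautology in \textbf{MPL}. Because the constructions for \textbf{TMPL} and \textbf{MMPL} reuse those of \textbf{MPL} (with the topological, respectively metric, interpretation of $\Box$ and $\Diamond$, both anti-extensive and extensive), the same verification and the same conclusion carry over to \textbf{TMPL} and \textbf{MMPL}.

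There is essentially no deep obstacle here, since the Corollary is a bookkeeping consequence of the Completeness Theorem and Proposition~\ref{assumption true}; the single point demanding care is the \textbf{PL} case, where one must recognise that the strength of the conclusion (completeness for \emph{all} formulas) comes precisely from the \emph{emptiness} of the dual-operator index set, which makes invariance vacuous, rather than from any substantive argument about the invariance of individual formulas.
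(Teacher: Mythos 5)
Your proposal is correct and follows exactly the route the paper intends: the corollary is an immediate instantiation of the Completeness Theorem, with the Assumption discharged by Proposition~\ref{assumption true} and the invariance of $\neg\varphi$ checked per logic using the facts the paper records just after defining invariant formulas (vacuous invariance in \textbf{PL} since $I=\emptyset$, closed formulas invariant in \textbf{FOL}, tautologies/antilogies invariant and closed under negation in \textbf{MPL}, \textbf{TMPL}, \textbf{MMPL}). The paper leaves this verification implicit, and your write-up simply makes explicit the same bookkeeping, including the correct key observation in the \textbf{PL} case.
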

We find the standard results of completeness, among other to {\bf MPL} and {\bf TMPL} (and then {\bf MMPL}) where it is known that the completeness result holds for the local derivation (which amounts to demonstrate tautologies). More precisely, for {\bf MPL}, we have shown the completeness for the proof system known under the name $T$ and its extensions $S4$, $B$ and $S5$. On the contrary, as the anti-extensivity and extensivity properties of $E^i$ and $D^i$ are imposed (and then the accessibility relations are necessarily reflexive), the abstract proof given here cannot be instantiated to show the completeness result for the systems $K$ and $D$. For these two systems, we cannot use the model $M_B$ defined for the logic {\bf MPL} in the proof of Proposition~\ref{examples of basic sets}  to prove their incompleteness. We have to consider the canonical model for which the set of states is the whole set of sets of maximally consistent formulas. The problem is that such a model has no equivalent for {\bf PL} and {\bf FOL}. An open problem would be to see if there exists a general proof based on Henkin's method which works both for logics with dual operators which are extensive and anti-extensive, and for logics with dual operators which are not.

\medskip
Similar proofs of completeness have already been obtained in the framework of institutions but only for first-order logics~\cite{Pet07,GP10}. In~\cite{Pet07}, the author follows Henkin's method to prove his first-order completeness result while in~\cite{GP10}, the authors use forcing methods to extend their first completeness result to infinitary first-order logics. \\ Here, we have extended these first results by unifying, in the framework of stratified institutions, a completeness proof which works both for {\bf FOL} and the modal logics such as $T$, $S4$, $ B$ and $S5$, {\bf TMPL} and {\bf MMPL}.

\section{Towards applications in qualitative spatial reasoning}
\label{qualitative spatial reasoning}

When dealing with qualitative spatial reasoning, spatial relationships are usually classified into topological,
%~\cite{BD07,Blo99,CF97,CBGG97,RCC92,BB07}, 
metric or directional relations~\cite{Handbook07,KUIP-00}. In this section, we briefly show how such relations can be expressed in our framework.
%~\cite{Lig11,MM15,RN07}. 

%\Isa{y aurait-il moyen de parler d'entit\'es spatiales sans revenir aux points ?}

\subsection{Topological relationships}
\label{sec:topological relationships}

Topological approaches to qualitative spatial reasoning usually describe relationships between spatial regions. Two models have emerged to formalize topological spatial relations between spatial entities: RCC-8~\cite{RCC92} and 9-intersection~\cite{Egenhofer91,EF91}.

\subsubsection{RCC-8} 
RRC-8 is a first-order theory based on a primitive connectedness relation ${\bf C}$. From this binary relation ${\bf C}$, many other binary relations can be defined, among which 8 were identified as being of particular importance, via the definition of a parthood predicate {\bf P} defined from {\bf C}:
\begin{enumerate}
\item ${\bf DC}(X,Y)$ which means that $X$ is disconnected from $Y$;
\item ${\bf EC}(X,Y)$ which means that $X$ is externally connected to $Y$;
\item ${\bf PO}(X,Y)$ which means that $X$ partially overlaps $Y$;
\item ${\bf TPP}(X,Y)$ (resp. ${\bf TPPi}(X,Y)$) which means that $X$ (resp. $Y$) is a tangential proper part of $Y$ (resp. $X$);
\item ${\bf NTPP}(X,Y)$ (resp. ${\bf NTPPi}(X,Y)$) which means that $X$ (resp. $Y$) is a non-tangential proper part of $Y$ (resp. $X$);
\item ${\bf EQ}(X,Y)$ which means that $X$ is identical to $Y$. 
\end{enumerate} 
Here, given a stratified institution $\mathcal{I} = (Sig,Sen,Mod,\sem{\_},\models)$ and a model $M \in Mod(\Sigma)$, the elements in $\sem{M}_\Sigma$ are spatial entities, and then formulas are combinations of such entities. The model RCC-8 is a first-order theory which allows one to quantify on spatial entities. Here, this would amount to quantify on states which is not allowed by the langage. Following~\cite{AB02}, we introduce the modality $U$ and its dual $A$~\footnote{In~\cite{AB02}, authors use $E$. We prefer $A$ in order to avoid confusion with the notation for erosion.} the semantics of which is as follows: 
\begin{itemize}
\item $M \models^\eta_\Sigma U \varphi$ iff $\forall \eta' \in \sem{M}_\Sigma, M \models^{\eta'}_\Sigma \varphi$
\item $M \models^\eta_\Sigma A \varphi$ iff $\exists \eta' \in \sem{M}_\Sigma, M \models^{\eta'}_\Sigma \varphi$
\end{itemize}
Using these primitives connectors, following~\cite{Bloch02}, it is easy to define, independently of any stratified institution, simple relations such as inclusion, exclusion and intersection by using standard Boolean connectives in $\{\wedge,\vee,\Rightarrow,\neg\}$ and the modalities $U$ and $A$. Hence, the binary relations ${\bf C}$, ${\bf DC}$, ${\bf PO}$ and ${\bf EQ}$ can be expressed in our framework as follows, where $\varphi$ and $\psi$ are formulas that denote, respectively, the regions $X$ and $Y$:
\begin{itemize}
\item ${\bf C}(X,Y)$: $A(\varphi \wedge \psi)$;
\item ${\bf DC}(X,Y)$: $U(\neg \varphi \vee \neg \psi)$;
\item ${\bf PO}(X,Y)$: $A(\varphi \wedge \psi)$, $A(\varphi \wedge \neg \psi)$, and $A(\neg \varphi \wedge \psi)$;
\item ${\bf EQ}(X,Y)$: $\varphi \Leftrightarrow \psi$. 
\end{itemize}
The other relations can benefit from the morphological operators. For this, we suppose that the stratified institution $\mathcal{I}$ is equipped with two dual logical operators $E$ and $D$ defined as an erosion and a dilation on the lattice $(Sen(\Sigma)_{/_{\equiv_M}},\preceq_M)$ for every signature $\Sigma$ and every $\Sigma$-model $M$ such that $E$ and $D$ are anti-extensive and extensive, respectively, for the binary relation $\preceq_M$.  To define adjacency (or external connection) ${\bf EC}(X,Y)$ between two regions $X$ and $Y$, we can then consider that these regions do not intersect but as soon as one of them is dilated, it has a non-empty intersection with the other one. This can be expressed as:
%by the three following formulas: $\varphi$ and $\psi$ are formulas that denote, respectively, the regions $X$ and $Y$
\begin{itemize}
\item ${\bf EC}(X,Y)$: $\neg(\varphi \wedge \psi)$ and $A(D(\varphi) \wedge \psi)$ and $A(\varphi \wedge D(\psi))$.
\end{itemize}
Now, the fact that a region $X$ is a {\em tangential proper part of} a region $Y$ (i.e. ${\bf TPP}(X,Y)$) can be expressed by the fact that $X$ is included in $Y$ but the dilation of $X$ is not, i.e.: 
\begin{itemize}
\item ${\bf TPP}(X,Y)$: $\varphi \Rightarrow \psi$ and $A(D(\varphi) \wedge \neg \psi)$.
\end{itemize}
Similarly, the fact that a region $X$ is a {\em non-tangential proper part of} a region $Y$ (i.e. ${\bf NTPP}(X,Y)$) can be expressed as:
\begin{itemize}
\item ${\bf NTPP}(X,Y)$: $\varphi \Rightarrow \psi$ and $\varphi \Rightarrow E(\psi)$ (or equivalently, $D(\varphi) \Rightarrow \psi$).
\end{itemize}

\subsubsection{9-intersection} 
The 9-intersection model transforms the topological relationships between two spatial entities $X$ and $Y$ into a point-set topology problem. That is, the topological relations between two objects $X$ and $Y$ are defined in terms of the intersection of boundary, interior and exterior of $X$ and $Y$. Hence, the  9-intersection  model captures  the topological relation between two spatial entities $X$ and $Y$ based on the intersections of the three topological parts of $X$ and those of $Y$. These $3 \times 3$ types of intersections are concisely represented by the 9-intersection matrix:
$$\left(
\begin{array}{lll}
\delta X \cap \delta Y & \delta X \cap Y^o & \delta X \cap Y^- \\
X^o \cap \delta Y & X^o \cap Y^o & X^o \cap Y^- \\
X^- \cap \delta Y & X^- \cap Y^o & X^- \cap Y^-
\end{array}
\right) $$
where $\_^o$, $\_^-$ and $\delta \_$ denote the interior, the exterior and the boundary, respectively. \\ For any stratified institution the model of which are topos-model, these $3 \times 3$ types of intersections can be easily defined. Indeed, if we suppose that the two regions $X$ and $Y$ are denoted by the two formulas $\varphi$ and $\psi$, then 
\begin{itemize}
\item their interior are $\Box \varphi$ and $\Box \psi$, 
\item the exterior are $\neg \Diamond \varphi$ and $\neg \Diamond \psi$, and
\item their boundary are $\varphi \wedge \neg \Box \varphi$ and  $\psi \wedge \neg \Box \psi$, and in our framework $\Box$ and $\Diamond$ are algebraic erosion and dilation, respectively.
\end{itemize}

\subsection{Distances and directional relative position}

Here, we assume a stratified institution $\mathcal{I}$ such that 
\begin{itemize}
\item either the category of states is the category of metric spaces $Met$ and in this case $\mathcal{I}$ is equipped with two logical operators $E$ and $D$ defined as erosion and dilation on the lattice $(Sen(\Sigma)_{/_{\equiv_M}},\preceq_M)$ for every signature $\Sigma$ and every $\Sigma$-model $M$ such that $E$ and $D$ are anti-extensive and extensive, respectively, for the binary relation $\preceq_M$;
\item or $\mathcal{I}$ is equipped with two logical operators $E$ and $D$ defined as an erosion and dilation based on an elementary symmetrical structuring element $B$. 
\end{itemize}
In this last case, we can define a distance $d$ that can take different forms depending on the considered spatial domain, as follows:
\begin{itemize}
\item $\forall \eta, d(\eta, \eta) = 0$;
\item $\forall \eta, \eta', \eta \neq \eta', d(\eta, \eta') = 1 \mbox{ iff } \eta' \in B_\eta$,
\item $\forall \eta, \eta', d(\eta, \eta') = \inf_{\pi(\eta, \eta')} l(\pi)$, where $\pi(\eta, \eta')$ is a path from $\eta$ to $\eta'$, i.e. a sequence $\eta_0 = \eta, \eta_1, ... \eta_n=\eta'$ such that $\forall i = 0, ... n-1, d(\eta_i, \eta_{i+1}) = 1$, and $l(\pi)$ is the length of the path (i.e. for $\pi = \eta_0, \eta_1, ... \eta_n), l(\pi) = n = \sum_{i=0}^{n-1} d(\eta_i, \eta_{i+1})$).
\end{itemize}
By construction, $d$ defines a metric.

In both cases, we can define a distance to a formula for every model $M \in Mod(\Sigma)$ as done in the Euclidean space for a distance from a point to a compact set:
\[
d(\eta, \varphi) = \inf_{M\models^{\eta'}_{\Sigma} \varphi} d(\eta, \eta').
\]
Given two formulas $\varphi$ and $\varphi'$, their minimum $d_{\min}$ and Hausdorff $d_H$ distances can be derived as:
\[
d_{\min}(\varphi, \varphi') = \inf_{M\models^{\eta}_{\Sigma} \varphi} d(\eta, \varphi'),
\]
\[
d_H(\varphi, \varphi') = \max \left ( \sup_{M\models^{\eta}_{\Sigma} \varphi'} d(\eta, \varphi), \sup_{M\models^{\eta'}_{\Sigma} \varphi'} d(\eta', \varphi) \right ) .
\]
As in the Euclidean case, these two distances can be conveniently expressed in terms of mathematical morphology. Details for the logic {\bf PL} are given in~\cite{Bloch02}. Similarly, we have here:
\[
d_{\min}(\varphi, \varphi') \leq n \mbox{ iff } A(D^n(\varphi) \wedge \varphi'),
\]
where $D^0$ is the identity mapping, $D^1 = D$ and $D^n = D D^{n-1}$ for $n > 1$, and:
\[
d_H(\varphi, \varphi') \leq n \mbox{ iff } \varphi' \Rightarrow D_B^n(\varphi) \mbox{ and } \varphi \Rightarrow D_B^n(\varphi').
\]
As an example of the potential use of such links between distances and dilation in spatial reasoning, let us consider the example in~\cite{Bloch02}. If we are looking at an object represented by $\psi$ in an area which is at a distance in an interval $[n_1 , n_2 ]$ of a region represented by $\varphi$, this corresponds to a minimum distance greater than $n_1$ and to a Hausdorff distance less than $n_2$. 
Then we have to check the following relation:
\[
\psi \Rightarrow \neg D^{n_1}(\varphi) \wedge D^{n_2}(\varphi).
\]
This expresses in a symbolic way an imprecise knowledge about distances represented
as an interval. If we consider a fuzzy interval, this extends directly by means of fuzzy
dilation.
These expressions show how we can convert distance information, which is usually defined in an analytical way, into algebraic expressions through mathematical
morphology, and then into logical expressions through the proposed abstract dual operators based on dilation and erosion.

\medskip
Directional relations can be defined in a similar way in the proposed framework, extending directly the {\bf PL} case detailed in~\cite{Bloch02}. Here, $D^d$ denotes the dilation corresponding to a directional information in the direction $d$. Then assessing whether $\varphi'$ represents a region of space which is in direction $d$ with respect to the region represented by $\varphi$ amounts to check the following relation:
\[
\varphi' \Rightarrow D^d(\varphi).
\]

\section{Conclusion}

In this paper, we have shown that the abstract framework of stratified institutions allows for unified definitions of connectives, quantifiers and morphological operators. Morphological dilation and erosion are defined in this framework both algebraically as operators that commute with the supremum and infimum of the underlying lattices, and using structuring elements. The duality property is emphasized, as a common property of pairs of operators or modalities in several logics. The proposed abstract definitions and properties are then instantiated in different logics, such as propositional logic, first order logic, modal logics, fuzzy logics. Finally, they are used in qualitative spatial reasoning framework to define abstract topological, metric and directional relations. This is consistent with the common use of mathematical morphology to deal with spatial information.

Many perspectives are naturally occurring. First, the completeness result of this paper requires that the dual operators $E^i$ and $D^i$ are anti-extensive and extensive, respectively, which excludes the modal logics $D$ and $K$. As mentioned in Section~\ref{about completeness}, it would be interesting to see whether there exists a general proof based on Henkin's method which works both for logics with dual operators which are extensive and anti-extensive, and for logics with dual operators which are not. Another interesting perspective would be to extend our general completeness result to the fuzzy setting. 
Finally, future work will aim at further exploring the spatial reasoning aspects. Moreover, theoretical results on complexity and tractability could be explored.
%\Marc{Cette derni\`ere phrase est peut-\^etre \`a d\'evelopper ?}

\bibliographystyle{plain}
\bibliography{biblio}

\end{document}